\documentclass{amsart}

\usepackage[english]{babel}
\usepackage{amssymb}
\usepackage{mathtools}
\usepackage{enumitem}
\usepackage{mathrsfs}
\usepackage[utopia]{mathdesign}
\usepackage[breaklinks]{hyperref}
\usepackage{xcolor}
\usepackage{hyperref}
\usepackage{fullpage}
\usepackage{caption}
\usepackage[section]{placeins}
\usepackage{tabularx}

\newcommand{\C}{\mathcal{C}}
\newcommand{\loc}{\textrm{loc}}
\newcommand{\N}{\mathbb{N}}
\newcommand{\R}{\mathbb{R}}
\newcommand{\tot}{\textrm{total}}

\renewcommand{\Im}{\textrm{Im}}
\renewcommand{\vec}[1]{{\bf #1 } }
\newcommand{\citep}[1]{\cite{#1}}
\newcommand{\sys}[1]{\left \{ \begin{aligned} #1 \end{aligned} \right. }

\newcommand{\ini}{\mathrm{in}}

\DeclareMathOperator*{\argmax}{arg\,max}

\newcommand{\nouveau}[1]{{#1}}
\newcommand{\nv}[1]{{#1}}

\newcommand{\Vb}{{\underline{V}}}

\newcommand{\rhob}{\underline{\rho}}

\newcommand{\bl}{\pm}

\newtheorem{theorem}{Theorem}[section]
\newtheorem{lemma}[theorem]{Lemma}

\newtheorem{proposition}[theorem]{Proposition}
\newtheorem{conjecture}[theorem]{Conjecture}

\newtheorem{definition}[theorem]{Definition}

\theoremstyle{remark}
\newtheorem{remark}[theorem]{Remark}

\numberwithin{equation}{section}

\newcounter{hyp}

\AtBeginDocument{
   \def\MR#1{}
}

\begin{document}
\title{Numerical study of the sharp stratification limit towards bilayer models}
\date{\today}
\author{Théo Fradin}
\address{Univ. Bordeaux, CNRS, Bordeaux INP, IMB, UMR 5251, F-33400 Talence, France \newline \indent Univ Rennes, CNRS, IRMAR - UMR 6625, F-35000 Rennes, France }
\email{theo.fradin@math.u-bordeaux.fr}
\subjclass[2010]{35Q35,76B70,76M22,86A05}
\keywords{Euler equations, stratification, modal decomposition, sharp stratification}
\maketitle

\begin{abstract}
In the study of oceanic flows at the geophysical scale, the phenomenon of  density stratification plays a central role in the dynamics of the system. Two categories of mathematical models are commonly used to describe the role played by the density stratification: on the one hand, continuously stratified models - such as the stratified Euler equations in a strip, considered in the present article - offer an accurate description of vertical effects, but come with a high level of complexity, both at the theoretical and numerical levels. On the other hand, bilayer models approximate the stratification by a piecewise constant profile. In the latter case, the main point is to study the evolution of the free interface between both layers, which leads to a substantially simplified model. In the present article, we compare both approaches  in the framework of the linearized stratified Euler equations around density profiles that are close to piecewise constant profiles, and prove the convergence towards the bilayer Euler equations. However, in the presence of a shear flow, bilayer models have a range of validity limited by the presence of Kelvin-Helmholtz instabilities. In this case, we  use a suitable normal modes decomposition to compute numerically the dispersion relation of this linearized model, and provide numerical evidence that the Kelvin-Helmholtz instabilities limit the applicability of two widely used bilayer models, namely the bilayer Euler equations and the bilayer shallow-water equations.
\end{abstract}
\section{Introduction}
In this article, we study the phenomenon of vertical density stratification in the ocean, and its effect on gravity waves. At the geophysical scale, density variations in the ocean are often considered as a small perturbation of a smooth, strictly decreasing density profile $\rhob$ that only depends on the vertical variable $z$; we then speak of a continuous and (strictly) stable stratification. One can write the incompressible stratified Euler equations to describe the evolution of the water velocity and density\nv{. I}n this study these equations are posed in the 2D domain $\mathbb{T}_L \times [-H,0]$, where $\mathbb{T}_L$ is the one-dimensional torus of length $2 \pi L$, for some $L > 0$, and where $H> 0$ is the depth of the domain; our setting is summarized in Figure \ref{fig:setting}. These equations can \nv{naturally} be extended to other domains, including 3D ones. However, our numerical strategy is implemented for the strip $\mathbb{T}_L \times [-H,0]$ so that we restrict to this case for the sake of conciseness. Because we focus on the stratification phenomenon, we neglect other ingredients, despite their major role in the description of the ocean at the geophysical scale (for instance, Coriolis forcing, wind stress, non-flat topography, vertical boundaries). \\

At the non-linear level, local well-posedness of the stratified Euler equations in finite regularity Sobolev spaces has been proven recently, see \cite{Desjardins2019} and \cite{Fradin2024} for studies close to our setting, and for further references. The study of the qualitative behavior of these equations, even at the linear level, is \nv{challenging}, and often reduced to specific density profiles $\rhob$. One typical profile is when $\rhob$ is a decreasing exponential with height, which \nv{is referred to as} homogeneous stratification; in this case, the so-called Brunt-Väisälä frequency does not depend on $z$, which simplifies parts of the analysis. In the present study, we do not make this assumption. Also, the numerical resolution of the stratified Euler equations can be costly (see however \cite{CaulfieldVieweg2025} for a study on the large time behavior using direct numerical simulations, and references \nv{therein}).\\

It is thus natural to write reduced models that still capture the dynamics of the system at hand. It is well known in oceanography (see \cite[Part I, Section 3]{Vallis2017}) that a ubiquitous density profile consists of two layers of (almost) constant density. These two layers are separated by a thin layer of thickness $\delta > 0$, called the pycnocline, where the density gradient is very large, see the profile $\rhob$ in Figure \ref{fig:shear:sharp} - we \nv{refer to} such a profile \nv{as} a sharp stratification profile. In this case it is very common to consider bilayer models, where two layers of constant density (with the density of the upper layer smaller than the density of the lower layer) are separated by an interface. In the irrotational case, such models can be reduced to the study of the evolution of quantities that do not depend on $z$, such as the bilayer irrotational Euler equations (sometimes called the interfacial waves equations), or the bilayer shallow-water equations (see \cite[Chapter B, Section 6]{Duchene2022a}). Then their numerical resolution is much cheaper, and 
their modal analysis at the linear level can be carried out analytically.\\

In this study we ask whether bilayer models provide a good approximation of the stratified Euler equations, with continuous density variations, in the so-called sharp stratification limit, i.e. when $\delta \to 0$. Throughout this article, we consider linear models, and we distinguish two distinct frameworks.\\

In the case where there is no shear velocity at equilibrium between both layers, we briefly show that the linear bilayer Euler equations are well-posed in finite-regularity Sobolev spaces. Then we rigorously prove the convergence of solutions of the linear stratified Euler equations towards the linear bilayer Euler equations, in the sharp stratification limit $\delta \to 0$. More precisely, we provide quantitative estimates between the solutions of the aforementioned continuous and bilayer models, in this linear case and without a shear velocity between both layers; this what we call {\it full justification}, see Section \ref{subsection:sharp_strat:sans_shear}. \\

However, this simple setting fails to take shear into account, which naturally arises either from the non-linear dynamics or from the presence of a background current. This is why we also consider the stratified Euler equations, linearized around a horizontal velocity profile $\Vb$ that only depends on the vertical variable $z$, and no vertical velocity; such a profile is called a shear flow. We then study the profile $\Vb_{\delta}$ given in Figure \ref{fig:shear:sharp} of a sharp shear profile: in the limit $\delta \to 0$, the profile $\Vb_{\delta}$ converges towards a piecewise constant function of height, with one discontinuity, at the interface between both layers. In the homogeneous case (i.e. with no density variations), the so-called Kelvin-Helmholtz instabilities associated to a shear flow of the form of $\Vb_{\delta}$ have been partially described through a rigorous spectral analysis, see \cite{Lin2003, KumarOzanski2025}.\\

In the limit $\delta \to 0$, i.e. for piecewise constant density profile and shear flow, the expected system is the bilayer Euler equations. This system suffers from the Kelvin-Helmholtz instabilities. In this bilayer setting, these instabilities are strong enough to prevent the well-posedness in Sobolev spaces (see \cite{IguchiTanakaTani1997}, \cite{LebeauKamotski2005}, as well as \cite[Chapter A, Section 3.5]{Duchene2022a} and references \nv{therein}). Recall that this is in contrast with well-posedness results on the stratified Euler equations, and in particular with \cite{Fradin2024}, which includes a non-zero shear flow. However the latter result is not uniform in $\delta$. Still, well-posedness of the bilayer Euler equations holds in analytic function spaces, (see for instance \cite{SulemSulemBardos1981} and \cite{CaflischOrellana1986} for the vortex sheet problem, which is closely related to the bilayer Euler equations, and in particular suffers from the Kelvin-Helmholtz instabilities as well).\\

An important question is to accurately describe the Kelvin-Helmholtz instabilities. \nouveau{These instabilities are fairly well understood in the bilayer setting.} In \cite{Lannes2013a}, the author shows that the (non-linear) bilayer Euler equations including surface tension are well-posed in Sobolev spaces. Moreover, the amount of surface tension that is sufficient to ensure well-posedness uniformly in the shallow-water parameter vanishes in the shallow-water limit, consistently with the fact that the non-linear bilayer shallow-water equations are well-posed, even without surface tension, as a quasilinear strictly hyperbolic system, under some hyperbolicity criterion (see \cite{Ovsjannikov1979}, \nouveau{\cite{GuyenneLannesSaut2010}} as well as \cite[Chapter B, Section 6.2]{Duchene2022a} and references \nv{therein}). It should be \nv{noted} that surface tension is relevant for water-air interfaces, when the air is considered to be an incompressible fluid, but is not expected to be present in water-water interfaces.\\

Considering now the link between the stratified Euler equations and the bilayer Euler equations, the author in \cite{James2001} justified the sharp stratification limit for traveling waves. Recently, the authors in \cite{Adim2024} studied the sharp stratification limit for the hydrostatic stratified Euler equations (closely related to the stratified Euler equations, considered in the present study). Their analysis strongly relies on the presence of eddy diffusivity, which regularizes the system. In order to prove their convergence result, they write a new reduced model that takes into account some effects in the pycnocline, we refer to \cite{Adim2024} for more details on their model, as well as other physical models that aim at describing phenomena happening inside a thin pycnocline.  \\

In the present study, we provide a comparison of the Kelvin-Helmholtz instabilities in both the stratified Euler equations, linearized around the shear flow $\Vb_{\delta}$, and bilayer models, thanks to a numerical description of the dispersion relation of the continuous model. We provide numerical evidence that the Kelvin-Helmholtz instabilities are present in the linearized stratified Euler equations. Moreover, to some extent, they are well described by the bilayer Euler equations, linearized around a shear flow. As a consequence, we conjecture that the Kelvin-Helmholtz instabilities prevent any full justification of the sharp stratification limit from the stratified Euler equations in the presence of a shear flow as the one in Figure \ref{fig:shear:sharp} (see also \eqref{eqn:def:sharp:shear}) towards the bilayer Euler equations, in the context of finite-regularity Sobolev spaces. Note that this fact is expected, as the bilayer Euler equations with a shear flow are ill-posed in Sobolev spaces. However our findings also indicate that the sharp stratification limit towards the bilayer shallow-water equations cannot be fully justified in the setting of Sobolev spaces, despite the fact that the latter model is well-posed in Sobolev spaces. \\

The setting of a sharp stratification as in \eqref{eqn:def:sharp_strat} with two layers separated by a pycnocline of thickness $\delta > 0$ falls within the framework of diffuse interfaces, in which one aims to study the case where a scalar quantity (e.g. density, volume fraction, concentration of some chemical) is not piecewise constant, with a sharp interface separating both regions, but rather varies continuously (see for instance the introduction in \cite{SibleyNoldKalliadasis2013} and references \nv{therein}). However, the stratified Euler equations that we consider in the present study are in the inviscid setting, and as such are different from the classical models studied in the diffuse interface framework. A large variety of such models strongly rely on diffusion of some quantity in the system, see for instance \cite{AbelsGarkePoiatti2024} (and references \nv{therein}) in which the authors study the Cahn-Hilliard and Allen-Cahn equations coupled with the Navier-Stokes equations. The sharp interface limit is considered for these models in \cite{SibleyNoldKalliadasis2013,AbelsFischerMoser2023,Schaubeck2014}. In the case where the scalar considered is the density, one can add capillarity effects for a compressible viscous fluid, described by the so-called Navier-Stokes-Korteweg equations. Such equations include both viscous and dispersive effects, and have been studied, alongside the sharp interface limit, in \cite{DreyerGiesselmannKraus2012,Daube2016}. When one neglects viscosity, the equations are the Euler-Korteweg equations, see \cite{BenzoniGavageDanchin2005,BenzoniGavageChiron2018,BenzoniGavage2011}. If these describe an inviscid fluid, they include regularizing dispersive effects, which are absent from the system \eqref{eqn:euler:isopycnal:Vb} at hand here. \\

The outline of the paper is as follows. In Section \ref{section:normal_modes}, we introduce the stratified Euler equations. We give a formulation in the so-called isopycnal coordinates, which, roughly speaking, aim at describing the level sets of the density (called isopycnals) rather than the density itself. Although this formulation is equivalent to the more standard Eulerian coordinates, it is well-suited to compare the solutions in the continuous and bilayer cases: in the latter situation, the interface separating the two layers can be thought of as an isopycnal. Such coordinates prove to be particularly useful in the non-linear setting, which is outside of the scope of the present study, but was considered in \cite{Adim2024} - in the hydrostatic setting and with an additional regularization. We then introduce a normal mode decomposition, adapted to the stratified Euler equations, and reformulate the equations into a system of coupled equations on each mode. We provide an energy estimate on this coupled system of evolution equations, and highlight the crucial role of the dispersion in this case. \\

In Section \ref{section:ebc}, we briefly describe the linear bilayer Euler equations, with and without a shear flow. We also give a simplified formulation in the irrotational case (or more precisely, when the vorticity is concentrated on the interface), which we use to perform simulations. Finally, we state a well-posedness result, in the absence of a shear flow.\\ 

In Section \ref{section:num_scheme}, we introduce our numerical strategy for solving the stratified Euler equations. Namely, we consider the (coupled) evolution \nv{of} a finite number of the aforementioned modes, as a vertical discretization. We then use truncated Fourier series as a horizontal discretization and a Runge-Kutta method for the time discretization.  We also explain how this discretization is particularly convenient to compute an approximation of the dispersion relation of the stratified Euler equations. We then prove in Section \ref{subsection:num_scheme:cvce} the convergence of the (semi-discrete) numerical scheme, and provide in Section \ref{subsection:illustrations} numerical illustrations, which in particular show how the well-known Saint-Andrew cross pattern for the propagation of internal waves is affected by the presence of a pycnocline of small but non-zero thickness. Finally, in Section \ref{subsection:num:bl}, we describe our numerical strategy for the linear, irrotational bilayer Euler equations.  \\

Section \ref{section:sharp_strat} is the core of our study. In Section \ref{subsection:sharp_strat:sans_shear}, we quantify the error between solutions of the stratified Euler equations (with $\Vb = 0$) and of the bilayer Euler equations \eqref{eqn:euler:bilayer}, in Sobolev spaces. The proof of the statements in this subsection are given in Appendix \ref{section:apdx:preuve}, and relies on energy estimates on the difference between solutions of both models. In order to study such a difference, isopycnal coordinates are crucial, as they transfer the natural framework in the bilayer case (i.e. the study of the evolution of the interface between both layers) to the stratified case, by shifting the focus from the density field to the isopycnals. Then we provide a numerical approximation of the dispersion relation of the stratified Euler equations with the profile $\Vb_{\delta}$ of Figure \ref{fig:shear:sharp}, in Section \ref{subsection:sharp_strat:shear}. More precisely, we plot the phase velocities associated to the first horizontal frequencies, and in particular highlight the presence of the Kelvin-Helmholtz instabilities.  Some consistency checks are provided in Appendix \ref{section:apdx:approx}. We then give some quantitative results on the features of the dispersion relation computed in the previous section. Overall, we find that the dispersion relations of the stratified Euler equations in the sharp stratification limit and the bilayer Euler equations, both with a shear flow, share several common features. In particular, the Kelvin-Helmholtz instabilities are present in the stratified Euler equations, and are well approximated by the instabilities present in the bilayer Euler equations. Finally in Section \ref{subsection:sharp_strat:shear:csq}, we point out the consequences of the presence of the Kelvin-Helmholtz instabilities in the stratified Euler equations, for the full justification of bilayer models:  \nv{indeed,} our findings suggest that we can find growing modes whose growth rate goes to infinity when $\delta$ goes to $0$. We discuss the implications for the full justification of the bilayer Euler equations, but also the bilayer shallow-water equations, from the stratified Euler equations, in the sharp stratification limit.\\

Eventually, in Section \ref{section:conclusion}, we summarize our findings and discuss some perspectives.  
 
\begin{figure}
\begin{minipage}[t]{0.49\textwidth}
\includegraphics[width=\textwidth]{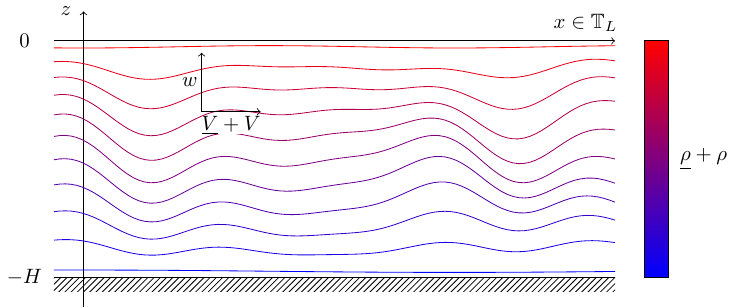}
\captionof{figure}{Idealized setting. Lines are isopycnals; their color represents the value of the density $\rhob + \rho$ of the fluid, which is a perturbation of the profile $\rhob$. The horizontal velocity is $\Vb + V$, a perturbation of the profile $\Vb$, and the vertical velocity is $w$.}
\label{fig:setting}
\end{minipage}
\begin{minipage}[t]{0.49\textwidth}
\includegraphics[width=\textwidth]{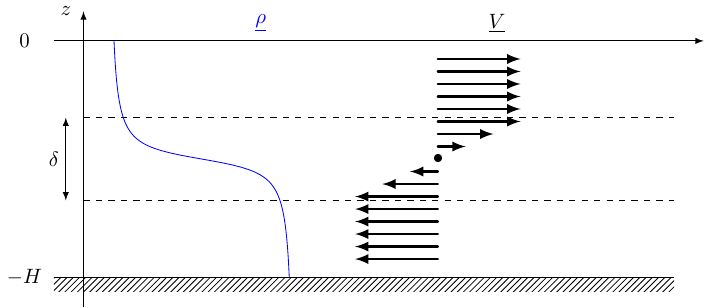}
\captionof{figure}{\label{fig:shear:sharp} The sharp stratification profile $\rhob$ from \eqref{eqn:def:sharp_strat} and a sharp shear flow $\Vb$, with parameter $\delta > 0$. When $\delta \to 0$, these converge towards piecewise constant functions.}
\end{minipage}
\end{figure}


\section{Notations}
\label{section:notations}
In this study we use the following notations.
\subsection{\nouveau{Functional setting}}
We denote by $L^2(\Omega)$ the space of functions that are square integrable on $\Omega$, for $\Omega = \mathbb T_L$ or $\Omega\coloneq\mathbb T_L \times [-H,0]$. It is equipped with the scalar product $\langle f,g \rangle \coloneq \int_{\Omega} fg$. We denote the associated norm by $|f|_{L^2(\mathbb T_L)}^2 \coloneq \langle f,f \rangle$ for $f$ defined on $\mathbb T_L$ and by $\Vert f\Vert_{L^2(\mathbb T_L \times [-H,0])}^2 \coloneq \langle f,f \rangle$ for $f$ defined on $\mathbb T_L \times [-H,0]$. We denote by $L^2_{\omega}$ the weighted $L^2$ space with weight $\omega$, i.e. with scalar product $\langle f,g\rangle_{L^2_{\omega}} \coloneq  \langle f,\omega g\rangle_{L^2}$.\\
Similarly, we denote by $\ell^2(\N)$ the space of sequences indexed on the space of non-negative integers, which are square integrable. It is equipped with the scalar product $\langle (u_n)_n, (v_n)_n \rangle \coloneq  \sum_{n = 0}^{\infty} u_n v_n$ and the associated norm $|(u_n)_n|_{\ell^2}$.\\
The same definitions go for $\ell^2(\N^*)$, the space of square integrable sequences indexed on positive integers. When no confusion is possible we simply write $\ell^2$ either for $\ell^2(\N)$ or $\ell^2(\N^*)$, and denote $(u_n)_{n \in \N}$ simply by $u$. For $s \in \N$, we denote by $H^s(\mathbb T_L)$ the $L^2$ based Sobolev space of regularity $s$, and $H^{s,k}(\mathbb T_L \times [-H,0])$ the anisotropic Sobolev space, i.e. the space of $L^2$ functions $f$ such that
$$\Vert f \Vert_{H^{s,k}}^2 \coloneq  \sum\limits_{\begin{aligned}k_1 + k_2 \leq s \\ k_2 \leq k\end{aligned}}  \Vert \partial_x^{k_1}\partial_z^{k_2}f \Vert_{L^2}^2 < +\infty.$$
We write $u \in \ell^2(\N^*,H^s)$ for $(u_n)_{n\in \N^*}$ such that, for any $n \in \N^*$, $u_n \in H^s$ and 
$$ \Vert u \Vert_{\ell^2 H^s}^2 \coloneq  \sum_n \vert u_n \vert_{H^s}^2 $$ is finite. Similarly, we write $W^{k,\infty}([-H,0])$ for the space of functions whose $k^{th}$ first derivatives are in $L^{\infty}([-H,0])$, and
$$\Vert f \Vert_{W^{k,\infty}} \coloneq \sum_{j=0}^k \Vert \partial_r^k f \Vert_{L^{\infty}}.$$
We use the notation 
$$f(\delta) \approx g(\delta)$$
to denote that there exists $C > 0$, which does not depend on $\delta$, and $\epsilon({\delta})$ with $\epsilon({\delta}) \underset{\delta \to 0}{\longrightarrow}0$ such that
$$f(\delta) = Cg(\delta)(1 + \epsilon(\delta)).$$
\nouveau{
\subsection{Variables and Parameters}
We write the numerical parameters that we use in the present study in the following table, alongside with a reference to their definition.\\

\noindent\begin{tabularx}{\textwidth}{|c|X|}
\hline 
$L,H$ & Length and depth of the strip $\mathbb T_L \times [-H,0]$ \\
\hline 
$\delta$ & Thickness of the pycnocline (see Figure \ref{fig:shear:sharp}, \eqref{eqn:def:sharp_strat}, \eqref{eqn:def:sharp:shear})\\
\hline
$\mu$ & Shallow-water parameter $\mu \coloneq H^2/L^2$ (see Section \ref{subsection:csq:blsw}) \\
\hline
$r_*$ & Depth of the pycnocline at rest (see Section \ref{section:sharp_strat})\\
\hline
$\mathtt{m_r}$ & Number of vertical points to compute the first vertical modes $(f_n)$ (see Section \ref{subsection:num_scheme_description})\\
\hline
$\mathtt N$ & Number of (vertical) modes $f_n$ used for the numerical scheme (see Section \ref{subsection:num_scheme_description})\\
\hline 
$\mathtt K$ & Number of (horizontal) Fourier modes used for the numerical scheme (see Section \ref{subsection:num_scheme_description})\\
\hline 
$\Delta_t$ & Time step (see Section \ref{subsection:num:bl})\\
\hline
\end{tabularx}
\par\vspace{1em}

\noindent We also write the notations for the vertical modes and coupling operators, which are not standard notations. \\

\noindent\begin{tabularx}{\textwidth}{|c|X|}
\hline 
$(f_n)_{n\in \N^*}$ & Orthonormal basis of $L^2_{\rhob N^2}([-H,0])$, obtained from a Sturm-Liouville problem (see \eqref{eqn:SL})\\
\hline
$(g_n)_{n\in \N}$ & Orthonormal basis of $L^2_{\rhob}([-H,0])$ (see \eqref{eqn:def:gn})\\
\hline
$M, A^{1}, \dots, A^{4}$ & Operators on $\ell^2$ defining the coupling for the evolution of the modes $(V_n, \eta_n)_{n \in \N^*}$ (see \eqref{eqn:def:M}-\eqref{eqn:def:Mnm}, \eqref{eqn:def:A})\\
\hline
$M^{(\mathtt N)}, A^{1,(\mathtt N)}, \dots, A^{4,(\mathtt N)}$ & Matrices used in the numerical scheme, defining the coupling for the evolution of the first modes $(V_n, \eta_n)_{n \leq \mathtt N}$ (see \eqref{eqn:def:Mnm:approx}, \eqref{eqn:def:A:approx})\\
\hline 
\end{tabularx}
}
\section{The linear stratified Euler equations, and its normal modes}
\label{section:normal_modes}
In this section we introduce the stratified Euler equations, as well as the isopycnal change of coordinates. This is done in Section \ref{subsection:isopyc}, where we write the stratified Euler equations linearized around a zero shear flow \eqref{eqn:euler:isopycnal:no_shear}. In Section \ref{subsection:normal_modes}, we write a normal mode decomposition that is adapted to the density profile $\rhob$ we consider. In order to find these normal modes, we first write the Taylor-Goldstein equation \eqref{eqn:taylor_goldstein:reduit}, which describes internal waves, and is derived directly from the stratified Euler equations. From the Taylor-Goldstein equation, a Sturm-Liouville problem arises naturally, and it yields the desired normal modes. Using this decomposition, the stratified Euler equations reduce to the system of coupled equations \eqref{eqn:modes:couple}, and we study the coupling operator in Proposition \ref{prop:M}. Then in Section \ref{subsection:extension_shear}, we provide the same modal decomposition for the stratified Euler equations linearized around a shear flow $\Vb$, so that it also reduces to a system of coupled equations. We provide an energy estimate on this system in Proposition \ref{prop:nrj:boussinesq:couple:shear}, highlighting in particular the role of the dispersion in the stability of the system.  
\subsection{The stratified Euler equations in isopycnal coordinates}
\label{subsection:isopyc}
In this section we write the Euler equations in our context, and explain the isopycnal change of coordinates. 
We decompose the density of the fluid as 
$$\rho_{\tot}(t,x,z) \coloneq  \rhob(z)+\rho(t,x,z),$$
where $t \in [0,T]$ (for some $T > 0$) denotes the time variable, $x \in \mathbb T_L$ is the horizontal space variable in the 1D torus of length $2\pi L$, and $z \in [-H,0]$ is the vertical space variable, with $H > 0$ the depth of the strip. The fluid domain is assumed to be the strip $\mathbb T_L \times [-H,0]$, so that in particular we consider a flat bottom and make the rigid lid approximation. The function $\rhob$ only depends on $z$ and is the density at equilibrium of the fluid at hand.\\
Then, the total density of the fluid $\rho_{\tot}$ is entirely described by the deviation $\rho$ of the density from the equilibrium $\rhob$. In particular, $\rho$ might depend on $t,x$ and $z$. 
We write $V$ and $w$ for the horizontal and vertical velocities of the fluid respectively; the notations and the setting are summarized in Figure \ref{fig:setting} (although in the slightly more general case where we consider a non-zero shear flow $\Vb$, see Section \ref{subsection:extension_shear}).
Then, the (non-linear) Euler equations in this context read
\begin{equation}
\label{eqn:euler:euleriennes}
\sys{\partial_t V + V\partial_x V +  w \partial_z V  + \frac{1}{\rhob + \rho} \partial_x P &= 0,\\
	 \partial_t w + V \partial_x w + w \partial_z w  + \frac{1}{\rhob + \rho} \partial_z P + \frac{g\rho}{\rhob + \rho} &= 0,\\
	\partial_t \rho + V \partial_x \rho + w \partial_z \rho + w \rhob' &= 0,\\
	\partial_x V + \partial_z w &=0,} \qquad \text{in } (0,T) \times \mathbb T_L \times (-H,0);
\end{equation}
here, $P$ denotes the pressure of the fluid. It is understood as the Lagrange multiplier of the divergence-free constraint, and as such it can be expressed from the unknowns $V,w,\rho$ through an elliptic equation, which we do not write here. We refer to Appendix \ref{section:apdx:preuve} for its precise definition, in the linear setting. The system \eqref{eqn:euler:euleriennes} is completed with the boundary conditions
\begin{equation}
\label{eqn:euler:euleriennes:bc}
\begin{aligned}
w_{|z=-H}&=0,  \qquad w_{|z=0} &= 0, \\
\rho_{|z=-H} &= 0, \qquad \rho_{|z=0} &= 0.
\end{aligned}
\end{equation}
Throughout the present study, the density profile $\rhob$ is assumed to be smooth and strictly decreasing, more precisely, there exists $c_* > 0$ such that 
\begin{equation}
\label{eqn:stable}
- \rhob' \geq c_*,
\end{equation}
where the prime denotes the (vertical) derivative. Such an assumption is standard for the study of oceanic flows at geophysical scales (see \cite{Vallis2017}), as it prevents the Rayleigh-Taylor instability - which takes place at too small time and space scales to be relevant for the large-scale circulation. Using such an assumption, the well-posedness of \eqref{eqn:euler:euleriennes} together with the boundary conditions \eqref{eqn:euler:euleriennes:bc} and suitable initial conditions in Sobolev spaces is studied in \cite{Desjardins2019} and \cite{Fradin2024}.\\
\nouveau{
\begin{remark}
The variables $t,x,$ and $z$, the unknowns $V,w,\rho,$ and $P$, the density profile $\rhob$ (and a velocity profile $\Vb$ that we introduce in Section \ref{subsection:extension_shear}), and the parameters $H$ and $L$ are all dimensionless quantities.
The characteristic scales used to nondimensionalize these quantities depend on the physical situation under consideration, so that working entirely in these dimensionless quantities allows us to keep the discussion as general as possible. The same discussion goes for the bilayer Euler equations, which we introduce in Section \ref{section:ebc}.
\end{remark}
}

\nouveau{
\begin{remark}
The (non-linear) stratified Euler equations \eqref{eqn:euler:euleriennes} together with the boundary conditions \eqref{eqn:euler:euleriennes:bc} formally admit a Hamiltonian formulation, see for instance \cite{Benjamin1985}. In particular, if $(V,w,\rho)$ is a smooth solution of \eqref{eqn:euler:euleriennes}-\eqref{eqn:euler:euleriennes:bc}, we can define the energy
$$\mathcal E(t) \coloneq \frac12 \int_{\mathbb T_L \times [-H,0]}\left( (\rhob(z) + \rho(t,x,z))(|V(t,x,z)|^2 + |w(t,x,z)|^2) + \frac{g}{- \rhob'(z)} |\rho(t,x,z)|^2\right)\mathrm dx \mathrm dz,$$
which is conserved, i.e. for any $t \geq 0$ such that the solution $(V,w,\rho)$ is defined, we get
$$\mathcal E(t) = \mathcal E(0).$$
Although not sufficient to prove the well-posedness of \eqref{eqn:euler:euleriennes}-\eqref{eqn:euler:euleriennes:bc}, this property is at the core of the so-called energy method; see for instance \cite{Desjardins2019,Fradin2024} for an application of the energy method to \eqref{eqn:euler:euleriennes}-\eqref{eqn:euler:euleriennes:bc}, both in the 2D and 3D case.
\end{remark}}
Let us linearize \eqref{eqn:euler:euleriennes} around the trivial equilibrium $(V,w,\rho,P) = (0,0,0,0)$, and apply the so-called isopycnal change of coordinates, which we now describe. We assume that the total density is stably stratified, that is 
\begin{equation}
\label{eqn:stable_tot}
-\partial_z(\rhob + \rho) \geq c_* > 0.
\end{equation}
Thus, given $(t,x) \in [0,T]\times \mathbb T_L$, the function $z \mapsto \rhob(z) + \rho(t,x,z)$ admits an inverse, so that we can choose $\eta$ depending on $t,x$ and $r \in [-H,0]$ such that
\begin{equation}
\label{eqn:def:eta}
\begin{aligned}
\rhob(r + \eta(t,x,r)) + \rho(t,x,r + \eta(t,x,r)) &= \rhob(r);
\end{aligned}
\end{equation}
the precise form $r + \eta(t,x,r)$ of the inverse of $\rhob + \rho$ follows from the fact that $\rhob$ only depends on $r$, see \cite{Fradin2024} for more details. The linearized Euler equations written in isopycnal coordinates then read
\begin{equation}
\label{eqn:euler:isopycnal:no_shear}
\sys{\partial_t V + \frac{1}{\rhob} \partial_x P &= 0,\\
	\partial_t w + \frac{1}{\rhob} \partial_r P - \frac{g\rhob'}{\rhob} \eta &= 0,\\
	\partial_t \eta - w &= 0,\\
	\partial_x V + \partial_r w &=0,} \qquad \text{in } (0,T) \times S;
\end{equation}
where $S \coloneq  \mathbb T_L \times (-H,0)$ is the domain of the fluid in isopycnal coordinates. We refer once again to \cite{Fradin2024} for details on the linearization process and the isopycnal change of coordinates. The isopycnal coordinates are well-suited to study the sharp stratification limit, see Section \ref{section:sharp_strat}.
Finally, the system \eqref{eqn:euler:isopycnal:no_shear} is completed with the boundary conditions
\begin{equation}
\label{eqn:euler:iso:bc}
\sys{&w_{|r=-H} = w_{|r=0} = 0, \\
&\eta_{|r=-H} = \eta_{|r=0} = 0,}
\end{equation}
stemming from the impermeability condition at the flat surface and flat bottom, and the assumption that the surface and the bottom are flat and isopycnals (i.e. level-sets of the density $\rhob + \rho$). Both assumptions are propagated by the flow. Finally, we impose initial conditions
\begin{equation}
\label{eqn:euler:iso:ci}
\sys{&V_{|t=0} = V_{\ini}, \qquad & w_{|t=0} = w_{\ini}, \\
&\eta_{|t=0} = \eta_{\ini},}
\end{equation}
that satisfy the incompressibility condition in \eqref{eqn:euler:isopycnal:no_shear} and the boundary conditions \eqref{eqn:euler:iso:bc}. The well-posedness of the system \eqref{eqn:euler:isopycnal:no_shear} together with \eqref{eqn:euler:iso:bc} and initial conditions \eqref{eqn:euler:iso:ci} in suitable Sobolev spaces is studied in Appendix \ref{section:apdx:preuve} (see Proposition \ref{prop:nrj}). See also  \cite{Fradin2024} in the non-linear setting and \cite{Duchene2022} for a similar study with an additional diffusion term.

\nouveau{
\begin{remark}
\label{rk:nrj:stab}
For $(V,w,\eta)$ a smooth solution to \eqref{eqn:euler:isopycnal:no_shear}-\eqref{eqn:euler:iso:bc}, the energy
$$\mathcal E(t) \coloneq \int_{\mathbb T_L \times [-H,0]} \frac12 \left( \rhob(r)(|V(t,x,r)|^2 + |w(t,x,r)|^2) + g|\rhob'(r)| |\eta|^2(t,x,r)\right)\mathrm dx \mathrm dr$$
is conserved, i.e. $\mathcal E(t) = \mathcal E(0)$, for $t\geq 0$. In particular, there is no unstable mode, in the following sense: a solution to \eqref{eqn:euler:isopycnal:no_shear}-\eqref{eqn:euler:iso:bc} of the form
$$ (V,w,\eta)(t,x,r) = e^{-i \omega t}(V_0(x,r),w_0(x,r),\rho_0(x,r))$$
satisfies $\Im(\omega)=0$. See also Definition \ref{def:dispersion_relation}.
\end{remark}
}

\subsection{Normal modes} 
\label{subsection:normal_modes}
Given a solution $(V,w,\eta)$ to \eqref{eqn:euler:isopycnal:no_shear}, simple algebraic manipulations show that $w$ satisfies the Taylor-Goldstein equation
\begin{equation}
\label{eqn:taylor_goldstein:reduit}
\partial_t^2 \left(\frac{1}{\rhob}\partial_r(\rhob \partial_r w) + \partial_x^2 w \right) + N^2 \partial_x^2 w =0,
\end{equation} 
where the quantity 
\begin{equation}
\label{eqn:def:N2}
N^2 \coloneq  - g \frac{\rhob'}{\rhob}
\end{equation}
is the (square of the) Brunt-Väisälä frequency. Note that $N^2$ depends on $r$ a priori. We reproduce the derivation of \eqref{eqn:taylor_goldstein:reduit} in Appendix \ref{section:apdx:derivation}, for the sake of completeness. \\

Complemented with Dirichlet boundary conditions, \eqref{eqn:taylor_goldstein:reduit} is the starting point of the study of internal waves. However, solving \eqref{eqn:taylor_goldstein:reduit} is a difficult task in general. Roughly speaking, a standard strategy (see for instance \cite{Desjardins2019}) is to start with the study of the first and last terms in \eqref{eqn:taylor_goldstein:reduit}. More precisely, we can define $(c_n,f_n)_{n\geq1}$ as the eigenvalues and eigenfunctions of the following Sturm-Liouville problem
\begin{equation}
\label{eqn:SL}
(\rhob f_n')' = - \rhob N^2 \frac{1}{c_n^2} f_n \qquad \text{ in } (-H,0),
\end{equation}
with Dirichlet boundary conditions at $r=-H$ and $r=0$. We also define $(g_n)_{n\geq 0}$ by
\begin{equation}
\label{eqn:def:gn}
g_n \coloneq  \sys{ &c_n f'_n \qquad &\text{ if  }  n\geq 1,\\
			&\left(\int_{-H}^0 \frac{1}{\rhob}\right)^{-\frac12} \frac{1}{\rhob}  \qquad &\text{ if } n = 0. } 
\end{equation}

There is a vast literature studying the modes for internal waves, so that we only cite a few studies that are close to our framework, and refer to them for further references. In \cite{Banerjee2005}, the author studies the extension of \eqref{eqn:taylor_goldstein:reduit} to the case of a non-zero shear flow. In \cite[Chapter 5]{GerkemaZimmerman2008} the authors describe the modes from the Sturm-Liouville problem \eqref{eqn:SL} in several configurations. The modal decomposition from eigenfunctions of the extension of \eqref{eqn:taylor_goldstein:reduit} to the case of a non-zero shear flow, or from the simpler \eqref{eqn:SL} can give an insight on numerous features of the ocean, see for instance stability considerations in \cite{Jones1967,BanksDrazinZaturska1976, BarrosVoloch2020}, turbulence theory in \cite{fuFlierl1980} and an extension to the non-flat bottom case in \cite{LaCasceGroeskamp2020}. See also the simulations from \cite[Chapter 5]{GerkemaZimmerman2008} that use the modal decomposition from the Sturm-Liouville problem \eqref{eqn:SL}, as is the case of the present study, although they restrict to specific stratification profiles (piecewise constant or linear), but also consider Coriolis forcing.\\

We sum up the basic properties of $(f_n)_n, (g_n)_n, (c_n)_n$ in the following lemma. 
\begin{lemma}
\label{lemma:SL}
Assume that there exists $c_* > 0$ such that $\rhob \geq c_*$, and such that \eqref{eqn:stable} holds. Let $(f_n)_{n\geq1}, (g_n)_{n\geq 0}, (c_n)_{n\geq1}$ be defined from \eqref{eqn:SL} and \eqref{eqn:def:gn} as above. In particular, $(c_n)_{n\geq1}$ is a decreasing sequence by convention.
\begin{itemize}
\item The family $(f_n)_{n\geq 1}$ is an orthonormal basis of $L^2_{\rhob N^2}([-H,0])$, the weighted $L^2$-space with weight $\rhob N^2$.
\item There exists $C > 0$ a constant that does not depend on $n$, such that, for all $n \in \N^*$:
	\begin{equation}
	\label{eqn:fn:borne_sup}
		\vert f_n \vert_{L^{\infty}([-H,0])} \leq C.
	\end{equation}
\item The family $(g_n)_{n\geq 0}$ is an orthonormal basis of $L^2_{\rhob}([-H,0])$.
\item For $n \in \N^*$, $c_n > 0$, and we have the asymptotic behavior for $(c_n)_n$, for some constant $c > 0$ 
\begin{equation}
\label{eqn:c_n:behaviour}
n c_n \underset{n \to \infty}{\longrightarrow} c.
\end{equation}
In particular, the series $\sum_{n \geq 1} c_n^2$ converges.
\end{itemize}
\end{lemma}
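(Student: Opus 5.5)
This is the classical regular Sturm--Liouville theory applied to \eqref{eqn:SL}. Write $\lambda_n \coloneq c_n^{-2}$, so that \eqref{eqn:SL} reads $-(p f')' = \lambda\, w f$ with $p \coloneq \rhob$ and $w \coloneq \rhob N^2 = -g\rhob'$. The hypotheses $\rhob \geq c_*$ and \eqref{eqn:stable} give that $p$ and $w$ are smooth and bounded below by positive constants on $[-H,0]$. The problem is therefore a regular Sturm--Liouville problem with Dirichlet conditions.

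\textbf{Orthonormal basis $(f_n)$ and $c_n > 0$.} The inverse of $f \mapsto -(pf')'$ with Dirichlet conditions is a compact self-adjoint positive operator on $L^2_w$, since it is given by a Green kernel. The spectral theorem then yields an orthonormal basis $(f_n)_{n\geq1}$ of $L^2_{\rhob N^2}$, made of eigenfunctions associated with simple eigenvalues $0<\lambda_1<\lambda_2<\dots\to\infty$. Positivity of the eigenvalues comes from
$$\lambda_n \int w f_n^2 = \int p\, |f_n'|^2 > 0.$$
Hence we may set $c_n = \lambda_n^{-1/2} > 0$, and the sequence $(c_n)$ is decreasing.

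\textbf{Orthonormal basis $(g_n)$.} Integrating by parts and using \eqref{eqn:SL} gives
$$\langle g_n, g_m\rangle_{L^2_{\rhob}} = c_n c_m \int \rhob f_n' f_m' = c_n c_m \lambda_m \int \rhob N^2 f_n f_m = \delta_{nm}.$$
Next, $\int \rhob\, g_0 g_n = C c_n \int f_n' = 0$ by the Dirichlet conditions, and $g_0$ has unit norm by construction. For completeness, let $h \in L^2_{\rhob}$ be orthogonal to all the $g_n$.
\begin{itemize}
\item Orthogonality to $g_0$ means $\int h = 0$, so $F(r) \coloneq \int_{-H}^r h$ lies in $H^1_0$.
\item Orthogonality to $g_n$ for $n\geq 1$ gives $0=\int \rhob f_n' F' = \lambda_n \int w f_n F$.
\end{itemize}
By completeness of $(f_n)$, this forces $F=0$, hence $h=0$.

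\textbf{Asymptotics and uniform bound.} These follow from the Liouville transformation. Set
$$t(r)=\int_{-H}^r \sqrt{w/p}, \qquad T=t(0), \qquad u = (pw)^{1/4} f.$$
The equation then becomes $-u'' + q(t) u = \lambda u$ on $(0,T)$ with Dirichlet conditions, where $q$ is smooth and bounded. Standard perturbation estimates, for instance via Prüfer variables or comparison with $q=0$ through the min-max principle, give
$$\sqrt{\lambda_n} = \frac{n\pi}{T} + O(1/n).$$
Consequently $n c_n \to T/\pi \eqqcolon c$, and $\sum_n c_n^2 < \infty$. The normalized eigenfunctions satisfy $u_n = \sqrt{2/T}\,\sin(\sqrt{\lambda_n}\,t) + O(1/n)$ uniformly. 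This asymptotic form is the main step to check: it comes from writing $u_n$ through the variation of constants formula
$$u_n(t) = A\sin(kt) + \int_0^t \frac{\sin(k(t-s))}{k}\, q(s)\, u_n(s)\, \mathrm ds,$$
with $k = \sqrt{\lambda_n}$, and then applying Gronwall's lemma. Since the weight $(pw)^{-1/4}$ is bounded and the normalization in $L^2_w$ corresponds to the unweighted $L^2$ normalization of $u_n$ in $t$, we obtain \eqref{eqn:fn:borne_sup}.
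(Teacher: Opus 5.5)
Your proof is correct, but it takes a different route from the paper's.

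\textbf{What the paper does.} It writes out only the uniform bound \eqref{eqn:fn:borne_sup}; the basis properties and the asymptotics \eqref{eqn:c_n:behaviour} are delegated to standard Sturm--Liouville theory. For the bound it stays in the original variable $r$ and uses an energy identity. Multiplying \eqref{eqn:SL} by $\rhob f_n'$ shows that $(\rhob f_n')^2 + c_n^{-2}\rhob^2N^2 f_n^2$ has derivative $c_n^{-2}(\rhob^2N^2)' f_n^2$. Integrating this from $-H$ to $r$, and then once more over $[-H,0]$ using $|f_n'|_{L^2_{\rhob}}^2 = c_n^{-2}$ and the normalization of $f_n$, gives $(\rhob f_n')^2(-H)\le C c_n^{-2}$. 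Hence $f_n^2\le C$ pointwise.

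\textbf{What you do differently.} You pass to the Liouville normal form and use variation of constants with Gr\"onwall, so the $L^\infty$ bound comes out as a by-product of the eigenfunction asymptotics. The same transformation, combined with min-max comparison, also gives the eigenvalue asymptotics with the explicit constant $c = T/\pi = \frac{1}{\pi}\int_{-H}^0 N$. You also write out the completeness of $(g_n)$ through the primitive $F$, which the paper does not do.

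\textbf{Trade-off.} The paper's identity is shorter and needs no eigenvalue asymptotics. It only uses bounds on $\rhob$, $N^2$ and $(\rhob^2N^2)'$, i.e.\ two derivatives of $\rhob$. Your route needs the Liouville potential $q$ to be bounded, which involves second derivatives of $\rhob N^2=-g\rhob'$, i.e.\ three derivatives of $\rhob$. That is harmless here since $\rhob$ is smooth, and in exchange you get a more complete, self-contained picture.

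\textbf{A simplification.} For \eqref{eqn:fn:borne_sup} alone you do not need the full asymptotic profile of $u_n$. Gr\"onwall already gives $|u_n|_{L^\infty}\le |A|\,e^{T|q|_{L^\infty}/\sqrt{\lambda_n}}$. The unit $L^2$ normalization then bounds $|A|$ for $n$ large, and the finitely many remaining $f_n$ are each bounded.
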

\begin{proof}
The proof of Lemma \ref{lemma:SL} relies on the standard Sturm-Liouville theory, see for instance \cite{Desjardins2019} for the present context, and \cite{Atkinson1987} for \eqref{eqn:c_n:behaviour}. We only write the proof of \eqref{eqn:fn:borne_sup} for the sake of completeness. Let $n \in \N^*$. Recall that $f_n$ is a smooth function, as it is the solution of a (regular)  ODE, so that the following computations make sense. Multiplying \eqref{eqn:SL} by $f_n$, integrating over $[-H,0]$, and using $\vert f_n \vert_{L^2_{\rhob N^2}} = 1$,  we get
\begin{equation}
\label{eqn:SL:temp:1}
\vert f'_n \vert_{L^2_{\rhob}}^2 = \frac{1}{c_n^2}.
\end{equation}
Now multiplying \eqref{eqn:SL} by $\rhob f_n'$, we get
\begin{equation}
\label{eqn:SL:temp:2}
\left( (\rhob f'_n)^2\right)' + \frac{\rhob^2 N^2}{c_n^2} (f_n^2)' = 0.
\end{equation}
Let $r \in (-H,0)$. We integrate \eqref{eqn:SL:temp:2} over $[-H,r]$ to get
\begin{equation}
\label{eqn:SL:temp:3}
(\rhob f'_n)^2(r) + \frac{\rhob^2(r) N^2(r)}{c_n^2} f_n^2(r) = (\rhob f'_n)^2(-H) + \frac{1}{c_n^2} \int_{-H}^r \left(\rhob^2(r) N^2(r)\right)' f_n^2(r) dr.
\end{equation}
Because $\rhob$ and $N^2$ are both bounded from above and below by positive constants and $\vert f_n \vert_{L^2_{\rhob N^2}} = 1$, \eqref{eqn:SL:temp:3} implies
\begin{equation}
\label{eqn:SL:temp:4}
f_n^2(r) \leq C (1+c_n^2 (\rhob f'_n)^2(-H)),
\end{equation}
for some constant $C > 0$. We now bound the second term in \eqref{eqn:SL:temp:4} from above uniformly in $n$. We integrate \eqref{eqn:SL:temp:3} from $-H$ to $0$, and use \eqref{eqn:SL:temp:1} and $\vert f_n \vert_{L^2_{\rhob N^2}} = 1$ to get
\begin{equation}
\label{eqn:SL:temp:5}
(\rhob f'_n)^2(-H) \leq \frac{C}{c_n^2}.
\end{equation}
Combining \eqref{eqn:SL:temp:4} and \eqref{eqn:SL:temp:5}, we get \eqref{eqn:fn:borne_sup}.
\end{proof}
We now decompose $V,P,w,\eta$ along these bases, and write $(V_n)_{n\in \N},(P_n)_{n\in \N}, (w_n)_{n\in \N^*}, (\eta_n)_{n\in \N^*}$ for their coefficients, respectively. Plugging these decompositions into \eqref{eqn:euler:isopycnal:no_shear}, small computations (postponed to Appendix \ref{section:apdx:derivation}) show that the coefficients of a solution $(V,w,\eta,P)$ to \eqref{eqn:euler:isopycnal:no_shear} satisfy, for $n \geq 1$:
\begin{equation}
\label{eqn:modes:couple}
\sys{ \left((I- M \partial^2_x) \partial_t V\right)_n + c_n \partial_x \eta_n &= 0,\\
	\partial_t \eta_n + c_n \partial_x V_n &=0,} \qquad \text{ in } (0,T) \times \mathbb T_L,
\end{equation}
where I is the identity operator on $\ell^2(\N^*)$, and $M$ is the linear operator on $\ell^2(\N^*)$ defined by
\begin{equation}
\label{eqn:def:M}
\begin{aligned}
M :  \ell^2(\N^*) &\longrightarrow \ell^2(\N^*) \\
	 (V_n)_{n\in \N^*} &\longmapsto \left(\sum_{m \in \N^*}M_{n,m} V_m\right)_{n \in \N^*},
\end{aligned}
\end{equation}
with, for $(n,m) \in (\N^*)^2$,
\begin{equation}
\label{eqn:def:Mnm}
M_{n,m} \coloneq  c_n c_m \int_{-H}^{0} f_n(r) f_m(r) \rhob(r) dr. 
\end{equation}
\begin{remark}
 Note that \eqref{eqn:modes:couple} is a closed system of equations on the coefficients on $V$ and $\eta$, and we do not need to solve for the coefficients of $P$ and $w$. However, if of interest, the coefficients  of $P$ and $w$ can be recovered from those of $V$ and $\eta$ from the elliptic equation satisfied by the pressure (see \eqref{eqn:elliptic:1}) and the divergence-free condition, respectively.
\end{remark}
\begin{remark}
\label{rk:coef_zero}
We do not write an equation for the evolution of $V_0$. Indeed, taking the scalar product with weight $\rhob$ of the divergence-free condition in \eqref{eqn:euler:isopycnal:no_shear} with $g_0$ yields
$$\partial_x V_0 = 0.$$
Then, taking the scalar product with weight $\rhob$ of the first equation in \eqref{eqn:euler:isopycnal:no_shear} with $g_0$ and integrating over $\mathbb{T}_L$, we get
$$ \partial_t V_0 = 0,$$
so that $V_0$ is a constant, and therefore is prescribed by the initial data. Conversely, the value of $V_0$ does not play any role in the evolution of the other modes, given by \eqref{eqn:modes:couple}.
\end{remark}
Basic properties of the operator $M$ are summarized in the following proposition.
\begin{proposition}
\label{prop:M}
Assume that there exist constants $c^*, c_* > 0$ such that $c_* \leq \rhob \leq c^*$ and $-\rhob' \geq c_*$. Then there exists $C > 0$ such that the following holds. The operator $M$ is a linear, bounded operator on $\ell^2$, and for $(u_n)_n \in \ell^2$
\begin{equation}
\label{eqn:M:bounded}
\vert M (u_n)_n \vert_{\ell^2} \leq C \vert (c_n u_n)_n \vert_{\ell^2}. 
\end{equation}
The operator $M$ is symmetric, i.e. for $(u_n)_n, (v_n)_n \in \ell^2$
\begin{equation}
\label{eqn:M:sym}
\langle M (u_n)_n, (v_n)_n\rangle = \langle (u_n)_n, M (v_n)_n \rangle.
\end{equation}
The operator $M$ has an ellipticity property; more precisely, for $(u_n)_n \in \ell^2$, 
\begin{equation}
\label{eqn:M:positive}
\langle M (u_n)_n, (u_n)_n \rangle \geq  \frac{1}{C} \vert c_n u_n \vert_{\ell^2}^2.
\end{equation}
As a consequence, it admits a unique positive square root denoted by $M^{\frac12}$, i.e. such that $(M^{\frac12})^2 = M$, which satisfies
\begin{equation}
\label{eqn:M:sqrt:positive}
\frac{1}{C} \vert (c_n u_n)_n \vert_{\ell^2} \leq \vert M^{\frac12} u \vert_{\ell^2} \leq  C \vert (c_n u_n)_n \vert_{\ell^2}.
\end{equation}
Finally, in the special case where $N^2$ does not depend on the vertical variable $r$, then $M$ is a diagonal operator, namely
\begin{equation}
\label{eqn:M:diag}
\begin{aligned}
M_{n,m} &= \frac{c_n^2}{N^2} \qquad \text{ if } n = m,\\
M_{n,m} &= 0 \qquad \text{ if } n \neq m.
\end{aligned}
\end{equation}
\end{proposition}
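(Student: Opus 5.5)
The whole proposition rests on one identity: $M$ is a Gram matrix. For $u = (u_n)_n \in \ell^2$, set
$$F_u(r) \coloneq \sum_{n \geq 1} c_n u_n f_n(r).$$
This series converges in $L^2_{\rhob N^2}$, since $(f_n)_n$ is orthonormal there by Lemma \ref{lemma:SL} and $(c_n u_n)_n \in \ell^2$ (indeed $c_n$ is bounded). It also converges uniformly, because $|f_n|_{L^\infty} \leq C$ by \eqref{eqn:fn:borne_sup} and $\sum_n |c_n u_n| \leq |(c_n)_n|_{\ell^2} |u|_{\ell^2}$ by \eqref{eqn:c_n:behaviour}. From the definition \eqref{eqn:def:Mnm} we then get
$$\langle M u, v\rangle = \int_{-H}^0 F_u(r)\, F_v(r)\, \rhob(r)\, dr = \langle F_u, F_v\rangle_{L^2_{\rhob}}.$$
The interchange of sums and integral is justified by the uniform convergence just described.

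Symmetry \eqref{eqn:M:sym} is immediate from this identity, or directly from $M_{n,m} = M_{m,n}$. For the two-sided bounds, we use that $\rhob$ and $N^2 = -g\rhob'/\rhob$ are bounded above and below by positive constants: $\rhob$ and $-\rhob'$ are bounded below by hypothesis, and bounded above since $\rhob$ is smooth on a compact interval. Consequently the weights $\rhob$ and $\rhob N^2$ define equivalent norms. By Parseval in $L^2_{\rhob N^2}$ this gives
$$\frac1C |(c_n u_n)_n|_{\ell^2}^2 \leq |F_u|_{L^2_{\rhob}}^2 = \langle Mu,u\rangle \leq C |(c_n u_n)_n|_{\ell^2}^2,$$
which is \eqref{eqn:M:positive} together with an upper quadratic-form bound.

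For the boundedness estimate \eqref{eqn:M:bounded}, we compute
$$(Mu)_n = c_n \langle f_n, F_u\rangle_{L^2_{\rhob}} = c_n \Big\langle f_n, \frac{F_u}{N^2}\Big\rangle_{L^2_{\rhob N^2}}.$$
Bounding $c_n \leq c_1$ and applying Bessel's inequality in $L^2_{\rhob N^2}$ gives
$$|Mu|_{\ell^2} \leq c_1 \Big|\frac{F_u}{N^2}\Big|_{L^2_{\rhob N^2}} \leq C\,|F_u|_{L^2_{\rhob N^2}} = C\,|(c_n u_n)_n|_{\ell^2}.$$
In particular $M$ is bounded on $\ell^2$.

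The square root comes from the spectral theorem: $M$ is a bounded, self-adjoint, nonnegative operator on $\ell^2$, so it has a unique nonnegative square root. \emph{Positive} here is to be understood as $\langle Mu,u\rangle > 0$ for $u \neq 0$. The estimate \eqref{eqn:M:sqrt:positive} then follows from
$$|M^{1/2}u|_{\ell^2}^2 = \langle Mu,u\rangle$$
combined with the two-sided quadratic bound above.

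The diagonal case \eqref{eqn:M:diag} is a one-line check: if $N^2$ is constant, then $\int f_n f_m \rhob = \frac{1}{N^2}\int f_n f_m \rhob N^2 = \frac{\delta_{nm}}{N^2}$ by orthonormality, so $M_{n,m} = c_n^2\delta_{nm}/N^2$.

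The only delicate point is justifying the Gram identity, that is, interchanging the infinite sums with the integral. This is handled by the uniform convergence coming from the $L^\infty$ bound on the $f_n$ and the square-summability of $(c_n)_n$. Beyond that, everything reduces to the equivalence of the weighted norms.
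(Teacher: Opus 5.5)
Your proof is correct, and its skeleton matches the paper's. Both establish the Gram identity $\langle Mu,u\rangle=\int_{-H}^0\bigl(\sum_m c_mu_mf_m\bigr)^2\rhob$, justified through \eqref{eqn:fn:borne_sup} and the summability of $(c_n^2)_n$. Both then compare the weights $\rhob$ and $\rhob N^2$, invoke the abstract square-root theorem, and use orthonormality for the diagonal case.

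The genuine difference is in \eqref{eqn:M:bounded}. The paper argues entrywise from $|M_{n,m}|\le|1/N^2|_{L^\infty}c_nc_m$. Entrywise bounds of this type only control $\sum_m c_m|u_m|$, so they yield boundedness on $\ell^2$ but not a bound by $|(c_nu_n)_n|_{\ell^2}$. The rank-one matrix $c_nc_m$ with $c_mu_m=1/m$ for $m\le K$ shows this. In addition, the intermediate inequality $|(Mu)_n|^2\le\sum_m|M_{n,m}u_m|^2$ used there goes the wrong way. Your representation $(Mu)_n=c_n\langle f_n,F_u/N^2\rangle_{L^2_{\rhob N^2}}$ combined with Bessel's inequality exploits the orthogonality of the $f_n$, and it delivers \eqref{eqn:M:bounded} exactly as stated.

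Your handling of the weights is also the careful one. The lower bound \eqref{eqn:M:positive} rests on an \emph{upper} bound for $N^2$, not on the lower bound the paper invokes at that step. That upper bound depends on $|\rhob'|_{L^\infty}$, so $C$ is not uniform in $\delta$ for $\rhob_\delta$. Your two-sided quadratic bound also gives the upper half of \eqref{eqn:M:sqrt:positive}, which the paper leaves implicit.

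As a minor simplification, you do not actually need \eqref{eqn:fn:borne_sup}. The interchange of sums and integrals only requires the $L^2_{\rhob}$-convergence of the partial sums of $F_u$ and the continuity of the inner product.
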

In the case where $N^2$ is independent of $r$, \eqref{eqn:M:diag} together with \eqref{eqn:modes:couple} show that the evolution of the modes $(V_n,\eta_n)_n$ is decoupled. This case has been well-studied, see for instance \cite[Chapter 7]{Vallis2017} and references within.  In particular, internal waves propagate in this case according to the Saint Andrew's cross pattern, see Figure \ref{fig:saint_andrew}. However, for a general stratification profile $\rhob$, the Brunt-Väisälä frequency $N^2$ depends on $r$, and the operator $M$ is not diagonal. Thus the evolution of the modes $((V_n,\eta_n))_{n}$ through \eqref{eqn:modes:couple} is coupled. This coupling is referred to as a dispersive coupling, see \cite{Desjardins2019}. 
\begin{proof}
For $n,m \in \N^*,$ the definition \eqref{eqn:def:Mnm} of $M_{n,m}$ and the fact that $(f_n)_n$ is an orthonormal basis of $L^2_{\rhob N^2}$ yield
\begin{equation}
\label{eqn:M:borne_coefs}
|M_{n,m}| \leq \left\vert \frac{1}{N^2} \right\vert_{L^{\infty}([-H,0])} c_n c_m.
\end{equation}
Thus, there exists a constant $ C_0 > 0$ such that, for any $n \in \N^*$:
$$|(M u)_n|^2 \leq \sum_m |M_{n,m} u_m|^2 \leq C_0 c_n^2 |(c_m u_m)_{m\in \N^*}|_{\ell^2}^2. $$
Because of \eqref{eqn:c_n:behaviour}, the series $\sum_{n\in \N^*} c_n^2$ converges, and thus there exists $C > 0$ such that \eqref{eqn:M:bounded} holds.\\
\nv{The symmetry property} \eqref{eqn:M:sym} is an immediate consequence of \eqref{eqn:def:Mnm}. For \eqref{eqn:M:positive}, recall that, for $n \in \N^*$, $|f_n|$ is bounded uniformly in $n$, i.e. \eqref{eqn:fn:borne_sup} holds. Then, we write, for $u \in \ell^2$, 
\begin{equation}
\label{eqn:M:temp}
\begin{aligned}
\left\langle Mu,u \right\rangle &= \sum_{m,n} c_n c_m u_n u_m \int_{-H}^0 f_n f_m \rhob \\
&= \sum_{n} c_nu_n   \sum_{m} \int_{-H}^0 c_mu_m f_n f_m \rhob \\
&= \sum_{n} c_nu_n  \int_{-H}^0 \sum_{m} c_m  u_m f_n f_m \rhob, \\
\end{aligned}
\end{equation}
where we used the dominated convergence theorem for the last equality, since $f_n$ and $f_m$ are bounded uniformly in $n$ and $m$, and $\sum_m |u_m c_m|$ converges. The same computations yield  
\begin{equation}
\label{eqn:M:temp:2}
\begin{aligned}
\sum_{n} c_nu_n   \int_{-H}^0 f_n\sum_{m}  c_mu_m  f_m \rhob &=   \int_{-H}^0 \left(\sum_{m}  c_m u_m f_m\right)^2 \rhob. \\
\end{aligned}
\end{equation}
Using the definition \eqref{eqn:def:N2}, we get $N^2 \geq c_*/c^* > 0$, so that we can write
\begin{equation}
\label{eqn:M:temp:3}
\begin{aligned}
\int_{-H}^0 \left(\sum_{m}  c_m u_m f_m\right)^2 \rhob  &= \int_{-H}^0 \left(\sum_{m}  c_m u_m f_m\right)^2 \frac{\rhob N^2}{ N^2} \\
&\geq c_*/c^* \sum_{m,n} c_n c_m u_n u_m \int_{-H}^0 f_n f_m \rhob N^2\\
&\geq c_*/c^* \vert c_n u_n \vert_{\ell^2}^2
\end{aligned}
\end{equation}
where we used once again the dominated convergence theorem and the fact that $(f_n)_n$ is an orthonormal basis of $L^2_{\rhob N^2}$. This yields \eqref{eqn:M:positive}. The existence and uniqueness of a positive square root for a bounded positive operator on a Hilbert space is standard, see for instance \cite[Problem 121]{Halmos1982}. \\
 When $N^2$ is constant (and positive), \eqref{eqn:M:diag} follows from \eqref{eqn:def:Mnm} and the fact that $(f_n)_n$ is an orthonormal basis of $L^2_{\rhob N^2}$. 
\end{proof}
\subsection{Extension to a non-zero shear flow}
\label{subsection:extension_shear}
We now write the stratified Euler equations in isopycnal coordinates, linearized around a shear flow. A shear flow is an equilibrium of the stratified Euler equations \eqref{eqn:euler:euleriennes} with $V = \Vb$ that only depends on the vertical variable. In the case $\Vb = 0$, one finds the system \eqref{eqn:euler:isopycnal:no_shear}, so that the present section is a generalization of the previous one. We assume that $\Vb$ is a smooth function of $r$ only; the Euler equations in isopycnal coordinates, linearized around a shear flow read
\begin{equation}
\label{eqn:euler:isopycnal:Vb}
\sys{\partial_t V + \Vb \partial_x V +  w \Vb' + \frac{1}{\rhob} \partial_x P &= 0,\\
	\partial_t w + \Vb \partial_x w  + \frac{1}{\rhob} \partial_r P - \frac{g\rhob'}{\rhob} \eta &= 0,\\
	\partial_t \eta + \Vb \partial_x \eta - w &= 0,\\
	\partial_x V + \partial_r w &=0,} \qquad \text{in } [0,T] \times S,
\end{equation}
completed with the boundary conditions \eqref{eqn:euler:iso:bc}. 
\nouveau{
\begin{remark}
Recall the energy $\mathcal E$ defined in Remark \ref{rk:nrj:stab}. One can check that, for a smooth solution of \eqref{eqn:euler:isopycnal:Vb} with boundary conditions \eqref{eqn:euler:iso:bc}, the energy is not conserved:
$$\frac{d}{dt} \mathcal E(t) = -\int_{\mathbb T_L \times [-H,0]} \rhob \Vb' w V.$$
Hence, in the presence of a shear flow, the equations \eqref{eqn:euler:isopycnal:Vb}-\eqref{eqn:euler:iso:bc} may admit unstable modes (in the sense of Remark \ref{rk:nrj:stab}, see also Definition \ref{def:dispersion_relation}). The purpose of Section \ref{subsection:sharp_strat:shear} is to provide a description of these instabilities, in the sharp stratification limit, through a numerical investigation. 
\end{remark}
}

From \eqref{eqn:euler:isopycnal:Vb} one can derive the Taylor-Goldstein equation, which is a generalization of \eqref{eqn:taylor_goldstein:reduit} in the presence of a shear flow:
\begin{equation}
\label{eqn:taylor_goldstein}
(\partial_t + \Vb \partial_x)^2 \left( \frac{1}{\rhob}\partial_r(\rhob \partial_r w) + \partial_x^2 w \right) - (\partial_t + \Vb \partial_x) \frac{1}{\rhob}(\rhob \Vb')' \partial_x w + N^2 \partial_x^2 w =0;
\end{equation}
we postpone these computations to Appendix \ref{section:apdx:derivation}. The boundary conditions \eqref{eqn:euler:iso:bc} on $w$ complete this equation.\\
As for \eqref{eqn:taylor_goldstein:reduit}, the equation \eqref{eqn:taylor_goldstein} contains the Sturm-Liouville operator $\frac{1}{\rhob} \partial_r \left(\rhob \partial_r \cdot \right)$. We thus perform the same modal decomposition as in Section \ref{subsection:normal_modes}; the computations are postponed to Appendix \ref{section:apdx:derivation}. The evolution of the modes $((V_n,\eta_n))_n$ is coupled, and is given by the following system, for $n \geq 1$:
\begin{equation}
\label{eqn:modes:couple:shear}
\sys{ &((I- M \partial^2_x) \partial_t V)_n + c_n \partial_x \eta_n + \left((2 A^1 + A^2) \partial_x V\right)_n  - (A^3 \partial_x^3 V)_n = 0,\\
	&\partial_t \eta_n + c_n \partial_x V_n +  (A^4 \partial_x \eta)_n =0,}
\end{equation}
where $M$ is defined through \eqref{eqn:def:M}-\eqref{eqn:def:Mnm} and $A^1,A^2,A^3,A^4$ are defined from their coefficients through
\begin{equation}
\label{eqn:def:A}
\begin{aligned}
A^1_{n,m} &\coloneqq  \int_{-H}^0 \Vb(r) g_n(r)g_m(r) \rhob(r)dr,
&
A^2_{n,m} &\coloneqq -\frac{c_m}{c_n}\int_{-H}^0 \Vb(r) f_n(r)f_m(r) \rhob(r)N^2(r)dr, \\
A^3_{n,m} &\coloneqq  c_n c_m\int_{-H}^0 \Vb(r) f_n(r)f_m(r) \rhob(r)dr, 
&
A^4_{n,m} &\coloneqq  \int_{-H}^0 \Vb(r) f_n(r)f_m(r) \rhob(r)N^2(r)dr.
\end{aligned}
\end{equation}

\begin{remark}
\label{rk:A2}
Notice that, with an integration by parts and \eqref{eqn:SL}, we can write
\begin{equation}
\label{eqn:A1plusA2}
A^1_{n,m} + A^2_{n,m} = - c_m\int_{-H}^0 \Vb' c_n f_n' f_m \rhob,
\end{equation}
which yields, thanks to Cauchy-Schwarz inequality and the fact that $(c_n f'_n)_n$ and $(f_n)_n$ are orthonormal families for some weighted scalar product, that there exists a constant $C > 0$ such that
\begin{equation}
\label{eqn:Aun_Adeux}
|A^{1}_{n,m} + A^{2}_{n,m} | \leq C c_m.
\end{equation}
We use this property in Proposition \ref{prop:nrj:boussinesq:couple:shear}. Note also that it does not hold for $A^{2}_{n,m}$ when considered alone.\\

Owing to \eqref{eqn:A1plusA2}, we obtain an alternative formula for the first equation in \eqref{eqn:modes:couple:shear}:
\begin{equation}
\label{eqn:modes:couple:shear:V2}
((I- M \partial^2_x) \partial_t V)_n + c_n \partial_x \eta_n + ( (A^1 + \tilde{A}^2) \partial_x V)_n  - (A^3 \partial_x^3 V)_n = 0,
\end{equation}
where $\tilde{A}_2$ is defined through 
\begin{equation}
\label{eqn:def:Atilde}\tilde{A}^2_{n,m} = A^1_{n,m} + A^2_{n,m} = - c_m\int_{-H}^0 \Vb' c_n f_n' f_m \rhob.
\end{equation}
The advantage of $\tilde{A}^2_{n,m}$ over $A^2_{n,m}$ is that the terms the terms in \eqref{eqn:modes:couple:shear:V2} are either skew-symmetric or bounded (see the proof of Proposition \ref{prop:nrj:boussinesq:couple:shear} for more details), so that we can derive suitable energy estimates. Note however that \eqref{eqn:def:Atilde} requires that $\Vb'$ is bounded, which is not the case uniformly to the small parameter $\delta$ that we introduce in Section \ref{section:sharp_strat}. \\

The advantage of $A^2_{n,m}$ over $\tilde{A}^2_{n,m}$ is that it can be computed from $A^{4}_{n,m}$, up to a multiplication by the scalar $c_m/c_n$. Thus, computing $A^{1}_{n,m}, A^{2}_{n,m}, A^{3}_{n,m}, A^{4}_{n,m}$ requires to compute three integrals, instead of four if one replaces $A^{2}_{n,m}$ by $\tilde{A}^2_{n,m}$. From the computational perspective, this shortens the time required to compute these coefficients, which happens to be significant in the simulations we show in Section \ref{subsubsection:sharp_strat:shear:quantitative}.
\end{remark}
\begin{remark}
\label{rk:coef_zero:shear}
We do not write an evolution equation on $V_0$. The same computations as in Remark \ref{rk:coef_zero} yield
\begin{equation}
\label{eqn:rk:zero_shear:temp}
\sys{\partial_x V_0 &= 0,\\
		\partial_t V_0 + \partial_x P_0 + \int_{-H}^0 \left(\partial_x(\Vb V) + w\Vb' g_0\rhob\right)  &=0.}
\end{equation}
Now \eqref{eqn:def:gn} \nv{implies that} $g_0 \rhob$ is a constant. Hence, integrating by parts in the last term of the integral in \eqref{eqn:rk:zero_shear:temp} and using the divergence-free condition in \eqref{eqn:euler:isopycnal:Vb} and the boundary conditions \eqref{eqn:euler:iso:bc} yields
$$\partial_t V_0 + \partial_x \left(P_0 + \int_{-H}^0 \Vb V (1+g_0 \rhob) \right) =0. $$
As $V_0$ does not depend on $x$, integrating the previous equation on $\mathbb T_L$ yields $\partial_t V_0 = 0$.  Therefore, as in the previous section, $V_0$ is prescribed by the initial data. Conversely, the value of $V_0$ does not play any role in the evolution of the other modes, given by \eqref{eqn:modes:couple:shear}.
\end{remark}

We conclude this section with an existence and stability result on the equations \eqref{eqn:modes:couple:shear} satisfied by the modes $((V_n,\eta_n))_{n\in \N^*}$.
\begin{proposition}
\label{prop:nrj:boussinesq:couple:shear}
Let $s \geq 0$, $\rhob, \Vb \in W^{1,\infty}([-H,0])$ such that there exist constants $c^*, c_* > 0$ such \mbox{that $c_* \leq \rhob \leq c^*$} and $-\rhob' \geq c_*$. Then there exists $C > 0$ such that the following holds. \mbox{Let $(V_{\ini},\eta_{\ini})\in\ell^2(\N^*,H^s(\mathbb T_L))$} such that 
$$ (c_n \partial_x V_{\ini, n})_{n \in \N^*} \in \ell^2(\N^*,H^s(\mathbb T_L)).$$
Then there exists a unique \mbox{solution $(V,\eta) \in C^0([0,+\infty), \ell^2(\N^*,H^s(\mathbb T_L))$} to \eqref{eqn:modes:couple:shear} with initial condition  $(V_{\ini,n},\eta_{\ini,n})_{n\in \N^*}$, and it satisfies the energy estimate, for $t > 0$:
\begin{equation}
\label{eqn:nrj:boussinesq:couple:shear}
\begin{aligned}
&\vert (V_n(t,\cdot))_{n \in \N^*} \vert_{\ell^2 H^s}^2 + \vert (c_n \partial_x V_n(t,\cdot))_{n \in \N^*} \vert_{\ell^2H^s}^2 + \vert (\eta_n(t,\cdot))_{n\in \N^*} \vert_{\ell^2 H^s}^2  \\
&\leq \left( \vert (V_{\ini,n})_{n \in \N^*} \vert_{\ell^2 H^s}^2 + \vert (c_n \partial_x V_{\ini,n})_{n \in \N^*} \vert_{\ell^2H^s}^2 + \vert (\eta_{\ini,n})_{n\in \N^*} \vert_{\ell^2 H^s}^2\right)e^{Ct}. 
\end{aligned}
\end{equation} 
\end{proposition}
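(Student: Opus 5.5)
The approach is an energy method on the modal system, written in the form \eqref{eqn:modes:couple:shear:V2}. Since $\Vb\in W^{1,\infty}$, the coefficients $\tilde A^2$ from \eqref{eqn:def:Atilde} are available, so every term is either skew-symmetric or bounded in the natural energy. Fourier modes in $x$ commute with all the operators involved, so it suffices to treat $s=0$ and apply the same estimate to $\partial_x^k(V,\eta)$ for $k\le s$.

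The energy is
$$E(t)\coloneqq \langle V,V\rangle+\langle M\partial_xV,\partial_xV\rangle+\langle\eta,\eta\rangle,$$
with the $\ell^2 L^2(\mathbb T_L)$ scalar product. By \eqref{eqn:M:positive} and \eqref{eqn:M:bounded} it is equivalent to $|V|_{\ell^2L^2}^2+|(c_n\partial_xV_n)_n|^2_{\ell^2L^2}+|\eta|^2_{\ell^2L^2}$, uniformly. Taking the scalar product of the first equation of \eqref{eqn:modes:couple:shear:V2} with $V$ and of the second equation of \eqref{eqn:modes:couple:shear} with $\eta$ gives $\tfrac12 E'(t)$, since $-\langle M\partial_x^2\partial_tV,V\rangle=\langle M\partial_x\partial_tV,\partial_xV\rangle$ by \eqref{eqn:M:sym}. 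The individual terms are handled as follows.

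\textbf{Coupling terms.} Integration by parts in $x$ gives $\langle c_n\partial_x\eta_n,V_n\rangle+\langle c_n\partial_xV_n,\eta_n\rangle=0$.

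\textbf{$A^1$, $A^3$, $A^4$.} These matrices are symmetric. Combined with the skew-adjointness of $\partial_x$ and $\partial_x^3$, this gives $\langle A^1\partial_xV,V\rangle=0$, $\langle A^3\partial_x^3V,V\rangle=0$ and $\langle A^4\partial_x\eta,\eta\rangle=0$. For $A^3$, one first needs $|A^3_{n,m}|\le Cc_nc_m$, exactly as in \eqref{eqn:M:borne_coefs}, so that the sums make sense.

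\textbf{$\tilde A^2$.} This is the only non-conservative term. By \eqref{eqn:Aun_Adeux}, $|\tilde A^2_{n,m}|\le Cc_m$. More precisely, $\tilde A^2_{n,m}=-c_m\langle \Vb' g_n, f_m\rangle_{L^2_{\rhob}}$ and $(g_n)$ is orthonormal in $L^2_{\rhob}$. Bessel's inequality then gives
$$|\tilde A^2 u|_{\ell^2}\le C\Big\|\Vb'\sum_m c_mu_mf_m\Big\|_{L^2_{\rhob}}\le C|(c_mu_m)_m|_{\ell^2},$$
where the last step uses the computation \eqref{eqn:M:temp:2}--\eqref{eqn:M:temp:3}, which bounds $\|\sum c_mu_mf_m\|_{L^2}$ by $|(c_mu_m)|_{\ell^2}$. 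Hence $|\langle\tilde A^2\partial_xV,V\rangle|\le C|(c_n\partial_xV_n)|\,|V|\le CE$.

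Gronwall then yields \eqref{eqn:nrj:boussinesq:couple:shear}. The constant $C$ depends only on $\|\Vb\|_{W^{1,\infty}}$, $c_*$, $c^*$ and $H$. The $L^\infty$ bound on $\Vb$ is used to make $A^1,A^3,A^4$ bounded in the relevant weighted senses, for example $|A^4u|\le C|u|$ and $|A^1u|\le C|u|$ via the orthonormality of $f_n$ and $g_n$.

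For existence, the operator $I-M\partial_x^2$ is invertible on each Fourier mode $k$: it equals $I+k^2M$ with $M\ge0$, so it is bounded below by $I$. This turns the system into an ODE in the Banach space $\ell^2$ for each frequency. For fixed $k$, all the operators $(I+k^2M)^{-1}(A\cdot)$ are bounded on the space with norm $|u|^2+k^2\langle Mu,u\rangle$ plus $|\eta|^2$. Cauchy--Lipschitz therefore gives a global solution for each frequency. The uniform energy estimate then allows summation over $k$ with the weights $\langle k\rangle^{2s}$, giving continuity in time with values in the stated space. Uniqueness follows from the estimate by linearity.

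The main obstacle I expect is the rigorous justification of the formal manipulations: the skew-symmetry of the unbounded $A^3\partial_x^3$ and the bound on $\tilde A^2$ involve infinite sums over modes. I would handle this by working frequency by frequency, where everything is bounded, or alternatively with a Galerkin truncation to $n\le\mathtt N$ and a passage to the limit using the uniform estimate.
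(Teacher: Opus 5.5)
Your proposal is correct and follows the paper's proof. It uses the same energy $|V|^2+|M^{\frac12}\partial_xV|^2+|\eta|^2$, the cancellation of the coupling terms and of the symmetric $A^1$, $A^3$, $A^4$ contributions, a single non-conservative term $A^1+A^2=\tilde A^2$ controlled through the factor $c_m$, then Gr\"onwall and application to $\partial_x^s$.

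Your Bessel-inequality bound on $\tilde A^2$ is more complete than the paper's appeal to the entrywise estimate \eqref{eqn:Aun_Adeux}, which alone does not yield $|\tilde A^2u|_{\ell^2}\le C|(c_mu_m)_m|_{\ell^2}$. Working frequency by frequency also properly justifies the formal skew-symmetry computations and the existence step that the paper omits.
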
 
\begin{remark}
\label{rk:incomp:CI}
Note that, given the coefficients $(V_{\ini,n})_{n\in \N^*}$ of $V \in L^2(S)$, we can use the incompressibility condition in \eqref{eqn:euler:isopycnal:Vb} to define the coefficients $(w_{\ini,n})_{n\in \N^*}$ of $w_{\ini}$ along the basis $(f_n)_{n\in \N^*}$ by
\begin{equation}
\label{eqn:rk:incomp:temp:1}
w_{\ini,n} \coloneq  -c_n \partial_x V_{\ini,n}.
\end{equation}
Thus, the assumption on $V_{\ini}$ in Proposition \ref{prop:nrj:boussinesq:couple:shear} implies that $ (w_{\ini,n})_{n\in \N^*} \in \ell^2$, so that
\begin{equation}
\label{eqn:rk:incomp:temp:2}w_{\ini} \coloneq  \sum_{n=1}^{\infty} w_{\ini,n} f_n \in H^{s,0}(S).
\end{equation}
In other words, the assumption on $(V_{\ini,n})_{n \in \N^*}$ in Proposition \ref{prop:nrj:boussinesq:couple:shear} together with some $V_{\ini,0}$
 which does not depend on $x$ ensures that one can reconstruct $V_{\ini} \in H^{s,0}(S)$ from $(V_{\ini,n})_{n\in \N}$ and $w_{\ini} \in H^{s,0}(S)$ from $(V_{\ini,n})_{n\in \N}$ through \eqref{eqn:rk:incomp:temp:1} and \eqref{eqn:rk:incomp:temp:2}, such that $(V_{\ini},w_{\ini})$ satisfies the incompressibility condition in \eqref{eqn:euler:isopycnal:Vb}.
\end{remark}
\begin{proof}

Let us start with the case $s=0$. Integrating \eqref{eqn:modes:couple:shear} against $V_n,\eta_n$ and summing over $n \in \N^*$, we find that the terms in  $A^1, \dots, A^4$ are skew symmetric, except for the term $A^1 + A^2$. This yields
\begin{equation}
\label{eqn:nrj:boussinesq:temp:shear}
\begin{aligned}
&\frac12 \frac{d}{dt} \sum_{n\in \N^*} \left( \vert V_n \vert_{L^2(\mathbb T_L)}^2 + \vert (M^{\frac 12} \partial_x V)_n\vert_{L^2(\mathbb T_L)}^2 + \vert \eta_n \vert_{L^2(\mathbb T_L)}^2 \right) \\
&+ \sum\limits_{n\in \N^*}c_n\left( \int_{\mathbb T_L}\partial_x \eta_n  V_n + \int_{\mathbb T_L} \partial_x V_n \eta_n \right) = -\sum_{n \in \N^*} \int_{\mathbb T_L}((A^1 + A^2) \partial_x V)_n V_n.
\end{aligned}
\end{equation} 
For each term of the second sum on the left-hand side in \eqref{eqn:nrj:boussinesq:temp:shear}, we integrate by parts in the last integral, so that the two integrals cancel out each other. For the right-hand side of \eqref{eqn:nrj:boussinesq:temp:shear}, we use \eqref{eqn:Aun_Adeux}, where $C$ does not depend on $n$ or $m$. Using Cauchy-Schwarz inequality on the right-hand side of \eqref{eqn:nrj:boussinesq:temp:shear}, \eqref{eqn:M:sqrt:positive} on both sides and Grönwall's inequality, we get \eqref{eqn:nrj:boussinesq:couple:shear} with $s=0$. \\

Let now $s \geq 0$. Applying $\partial_x^s$ to \eqref{eqn:modes:couple:shear}, we find that $(\partial_x^s V_n, \partial_x^s \eta_n)_{n\in \N^*}$ satisfies \eqref{eqn:modes:couple:shear}. Therefore, applying the previous $L^2-$ energy estimate to $(\partial_x^s V_n, \partial_x^s \eta_n)_{n\in \N^*}$ yields \eqref{eqn:nrj:boussinesq:couple:shear}. From such an energy estimate the uniqueness of the solution follows, and its existence is a standard result that we omit here.
\end{proof}
\section{The linear bilayer Euler equations}
\label{section:ebc}

In this section we present the bilayer Euler equations. We start with the general description of the linear bilayer Euler equations in the presence of a shear flow in Section \ref{subsection:ebc:shear}. Although we do not directly use this formulation in the present study, it serves as an introduction for next two sections: in Section \ref{subsection:ebc:shear:irrot}, we write a reformulation in the irrotational setting, which we use to perform numerical simulations. Finally, in Section \ref{subsection:ebc:no_shear}, we write the linear bilayer Euler equations without a shear flow. In this case, we state a well-posedness result in Sobolev spaces. This result will prove useful in the comparison between these linear bilayer Euler equations in the absence of a shear flow, and the linear stratified Euler equations without a shear flow, which we provide in Section \ref{subsection:sharp_strat:sans_shear}. 

\subsection{The linear bilayer Euler equations with a shear flow}
\label{subsection:ebc:shear}
We start with the description of the linear bilayer Euler equations with a shear flow. The setting is summarized in Figure \ref{fig:euler:bc:notations}.  
\begin{figure}
\includegraphics[width = 0.8\textwidth]{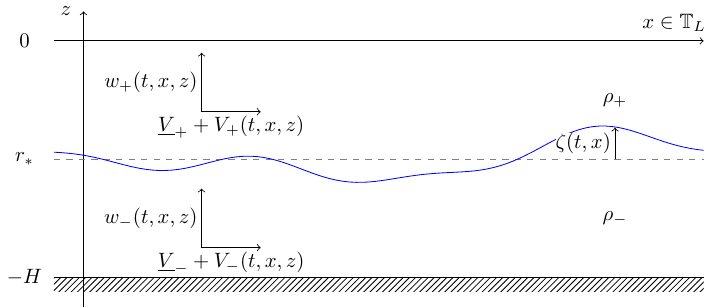}
\caption{\label{fig:euler:bc:notations}The bilayer setting. The unknowns $V_{\pm},w_{\pm}$ are the fluid velocities in the upper and lower layer, as well as the interface $\zeta$ between both layers.}
\end{figure}
These equations read
\begin{equation}
\label{eqn:euler:bilayer:shear}
\sys{
\partial_t V_{\bl} + \Vb_{\bl} \partial_x V_{\bl} + \frac{1}{\rho_{\bl}} \partial_x P_{\bl} + g\partial_x \zeta &= 0,\\
 \partial_t w_{\bl} + \Vb_{\bl} \partial_x V_{\bl} + \frac{1}{\rho_{\bl}} \partial_r P_{\bl} &= 0,\\
\partial_x V_{\bl} + \partial_r w_{\bl} &= 0,\\} \qquad \text{ in } S_+ \cup S_-,
\end{equation}
where we use the notations
\begin{equation}
\label{eqn:notation:bl:shear}
V_{\bl}(t,x,r) \coloneq  \sys{&V_+(t,x,r) \text{ if } r > r_*,\\
						&V_-(t,x,r) \text{ if } r < r_*,}
\end{equation}
respectively $\Vb_{\bl}, w_{\bl}$ and $P_{\bl}$, for some $r_* \in (-H,0)$, which denotes the position of the interface at rest. The quantities $\Vb_{+}$ and $\Vb_{-}$ are two given constants. Here, the indices $+$ and $-$ refer to quantities defined respectively in the upper layer $S_+ \coloneq  \mathbb T_L \times \{ r_* < r < 0\}$ and lower layer $S_- \coloneq  \mathbb T_L \times \{-H < r < r_*\}$. The quantities  $\rho_+$ and $\rho_-$ are two constants with $\rho_+ < \rho_-$. We also write a kinematic equation on the motion on the interface (parametrized by $r_*+\zeta$) between the upper layer $S_+$ and the lower layer $S_-$, which reads
\begin{equation}
\label{eqn:euler_bilayer:zeta:shear}
\partial_t \zeta = -\Vb_{+} \partial_x \zeta + (w_{+})_{|r=r_*}=- \Vb_{-}\partial_x \zeta + (w_{-})_{|r=r_*},
\end{equation}
which also encapsulates the continuity of the normal component of the velocity across the interface. The pressure is also continuous at the interface; we thus have two constraints on the interface, i.e.
\begin{equation}
\label{eqn:continuite:pression_w_interface:shear}
\sys{
(P_+)_{|r=r_*} &= (P_-)_{|r=r_*}, \\
-\Vb_{+} \partial_x \zeta + (w_{+})_{|r=r_*}&=- \Vb_{-}\partial_x \zeta + (w_{-})_{|r=r_*}.}
\end{equation}
The equations \eqref{eqn:euler:bilayer:shear} are completed with the boundary conditions
\begin{equation}
\label{eqn:euler:bilayer:bc:shear}
w_{-|r=-H} = w_{+|r=0} = 0,
\end{equation}
which are the impermeability of the flat bottom and lid. The equations \eqref{eqn:euler:bilayer:shear} are also completed with the initial conditions
\begin{equation}
\label{eqn:euler:bilayer:ci:shear}
V_{\bl |t=0} = V_{\bl,\ini}, \qquad  w_{\bl |t=0} = w_{\bl,\ini},\qquad  \zeta_{|t=0} = \zeta_{\ini}.
\end{equation}
The initial data \eqref{eqn:euler:bilayer:ci:shear} should be consistent with \eqref{eqn:continuite:pression_w_interface:shear}, the incompressibility condition in \eqref{eqn:euler:bilayer:shear} and the boundary conditions \eqref{eqn:euler:bilayer:bc:shear} , namely
\begin{equation}
\label{hyp:euler_bc:shear}
\sys{ \partial_x V_{\bl,\ini} + \partial_r w_{\bl,\ini} &= 0, \\
\Vb_+ \partial_x \zeta_{\ini} - w_{+,\ini|r=r_*} &= \Vb_-  \partial_x \zeta_{\ini} - w_{-,\ini|r=r_*},\\
w_{-,\ini|r=-H} &= w_{+,\ini|r=0} = 0.}
\end{equation}
\subsection{The irrotational case, with a shear flow}
\label{subsection:ebc:shear:irrot}
In this section we consider \eqref{eqn:euler:bilayer:shear} in the irrotational case. We use this formulation for the numerical simulations of the bilayer Euler equations, see Section \ref{subsection:num:bl}. \\

In this section, we assume that $r_*=-H/2$, as this simplifies some formula below; these can be found in more generality in \cite[Chapter A, Section 3.3]{Duchene2022a}. For the same reason, we choose $\Vb_+ = v, \Vb_- = -v$, for some $v \in \R$. The irrotationality of the flow implies that $V_{\bl}$ can be written as
$$V_{\bl} = \nabla_{x,r} \phi_{\bl},$$
where $\phi_+$ and $\phi_-$ are called the velocity potentials of $V_+$ and $V_-$. If such an assumption is satisfied by the initial data $V_{\bl, \ini}$, then it is satisfied for a solution of \eqref{eqn:euler:bilayer:shear} at all times. Let
$$\psi \coloneq  \frac{\rho_-}{\rho_+ + \rho_-} \phi_{-|r=r_*} - \frac{\rho_+}{\rho_+ + \rho_-} \phi_{+|r=r_*} .$$
Then, the equations \eqref{eqn:euler:bilayer:shear} can be recast as  evolution equations on $\zeta$ and $\psi$ only, see for instance \cite[Chapter A, Section 3.3]{Duchene2022a}. We write them directly as equations on the Fourier coefficients $\hat{\zeta}_k, \hat{\psi}_k$ of $\zeta$ and $\psi$. For $k \in \frac{1}{L} \mathbb Z$, these equations read
\begin{equation}
\label{eqn:interfacial_waves}
\sys{
\partial_t \hat{\zeta}_k + ik d(k) \hat{\zeta}_k - b(k) \hat{\psi}_k &= 0, \\
\partial_t \hat{\psi}_k + a(k) \hat{\zeta}_k + ikd(k) \hat{\psi}_k &= 0,}
\end{equation}
where
\begin{equation}
\begin{aligned}
a(k) &\coloneq  g \frac{\rho_- - \rho_+}{\rho_- + \rho_+} - \nouveau{\frac{\rho_- \rho_+}{(\rho_- + \rho_+)^2\tanh(|k|H/2)}} v^2 |k| \\
b(k) &\coloneq  \tanh(|k|H/2) |k|,\\
d(k) &\coloneq  \frac{\rho_+ - \rho_-}{\rho_- + \rho_+}v. 
\end{aligned}
\end{equation}
\subsection{The linear bilayer Euler equations without a shear flow}
\label{subsection:ebc:no_shear}
We now write the linear bilayer Euler equations, without a shear velocity (and without the irrotationality assumption), i.e. we specify \eqref{eqn:euler:bilayer:shear} with $\Vb_{\bl}=0$. These equations read
\begin{equation}
\label{eqn:euler:bilayer}
\sys{
\partial_t V_{\bl} + \frac{1}{\rho_{\bl}} \partial_x P_{\bl} + g\partial_x \zeta &= 0,\\
 \partial_t w_{\bl} + \frac{1}{\rho_{\bl}} \partial_r P_{\bl} &= 0,\\
\partial_x V_{\bl} + \partial_r w_{\bl} &= 0,\\} \qquad \text{ in } S_+ \cup S_-.
\end{equation}
Recall that the quantities  $\rho_+$ and $\rho_-$ are two constants with $\rho_+ < \rho_-$. We also write the kinematic equation on the motion on the interface (parameterized by $r_*+\zeta$) between the upper layer $S_+$ and the lower layer $S_-$, which reads
\begin{equation}
\label{eqn:euler_bilayer:zeta}
\partial_t \zeta = (w_{+})_{|r=r_*}=(w_{-})_{|r=r_*}.
\end{equation}
The pressure is continuous at the interface, and as is $w_{\bl}$, i.e. 
\begin{equation}
\label{eqn:continuite:pression_w_interface}
\sys{
(P_+)_{|r=r_*} &= (P_-)_{|r=r_*}, \\
(w_{+})_{|r=r_*}&=(w_{-})_{|r=r_*};}
\end{equation}the latter condition stems from the incompressibility condition on the interface. Thus, \eqref{eqn:euler_bilayer:zeta} is unambiguous. Note that we do not assume that the fluid velocity is irrotational in the upper and lower layer, and so the equations \eqref{eqn:euler:bilayer} do not reduce to a system on unknowns defined on the interface $r=r_*$ (see for instance Section \ref{subsection:ebc:shear:irrot} and \cite{Duchene2022a} for an account on the bilayer Euler equations in this irrotational case).
The equations \eqref{eqn:euler:bilayer} are completed with the boundary conditions
\begin{equation}
\label{eqn:euler:bilayer:bc}
w_{-|r=-H} = w_{+|r=0} = 0,
\end{equation}
which are the impermeability of the flat bottom and lid. The equations \eqref{eqn:euler:bilayer} are also completed with the initial conditions
\begin{equation}
\label{eqn:euler:bilayer:ci}
V_{\bl |t=0} = V_{\bl,\ini}, \qquad  w_{\bl |t=0} = w_{\bl,\ini}, \qquad \zeta_{|t=0} = \zeta_{\ini}.
\end{equation}
The initial data \eqref{eqn:euler:bilayer:ci} should be consistent with the incompressibility condition in \eqref{eqn:euler:bilayer} and the boundary conditions \eqref{eqn:euler:bilayer:bc}, namely
\begin{equation}
\label{hyp:euler_bc}
\sys{ \partial_x V_{\bl,\ini} + \partial_r w_{\bl,\ini} &= 0, \\
w_{+,\ini|r=r_*} &= w_{-,\ini|r=r_*},\\
w_{-,\ini|r=-H} &= w_{+,\ini|r=0} = 0.}
\end{equation}
We now state a result on the existence and uniqueness of solutions of \eqref{eqn:euler:bilayer}\nv{, which we} prove in Appendix \ref{subsection:euler:bilayer}.
\begin{proposition}
\label{prop:nrj:bilayer}
Assume $\rho_+ < \rho_-$. Let $s \geq 1$, $(V_{\bl,\ini},w_{\bl,\ini},\zeta_{\ini}) \in H^{s,0}(S) \times H^{s,1}(S) \times H^s(\mathbb T_L)$, satisfying \eqref{hyp:euler_bc}. There exists a unique solution $(V_{\bl},w_{\bl},\zeta) \in C^0\left([0,+\infty),H^{s,0}(S) \times H^{s,1}(S) \times H^s(\mathbb T_L)\right)$ to \eqref{eqn:euler:bilayer}, satisfying \eqref{eqn:euler_bilayer:zeta}, \eqref{eqn:continuite:pression_w_interface} and the boundary and initial conditions \eqref{eqn:euler:bilayer:bc} and \eqref{eqn:euler:bilayer:ci}.
\end{proposition}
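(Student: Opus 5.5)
Since the system \eqref{eqn:euler:bilayer} is linear with constant coefficients in $x$ on $\mathbb T_L$, I would work Fourier mode by Fourier mode in $x$ and reduce the problem, for each $k \in \frac1L\mathbb Z$, to an explicit finite-dimensional ODE for the interface coefficient $\hat\zeta_k$ coupled to a conserved (vortical) part. Concretely, I split the velocity in each layer as a potential part plus a part carrying the vorticity $\omega_{\bl} \coloneq \partial_r V_{\bl} - \partial_x w_{\bl}$. Taking the curl of the first two equations of \eqref{eqn:euler:bilayer} in each layer (where $\rho_{\bl}$ is constant) gives $\partial_t \omega_{\bl} = 0$, so the vorticity is frozen at its initial value. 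For $k\neq0$, given $\hat\omega_k(r)$ and the normal velocity at the interface, $\hat w_k$ is recovered in each layer from the ODE $(\partial_r^2 - k^2)\hat w_k = -ik\hat\omega_k$ with boundary conditions \eqref{eqn:euler:bilayer:bc} and matching value $\partial_t\hat\zeta_k$ at $r=r_*$; then $\hat V_k = \frac{i}{k}\partial_r \hat w_k$. The mode $k=0$ is treated separately: incompressibility and \eqref{eqn:euler:bilayer:bc} force $\hat w_0 = 0$, and $\hat V_0$ is constant in time in each layer.

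Next, I derive the evolution of $\hat\zeta_k$ from pressure continuity \eqref{eqn:continuite:pression_w_interface}. Writing $\partial_t \hat w = \partial_t \hat w^{\mathrm{pot}}$, since the vortical contribution is time-independent, the pressure in each layer is determined through the second equation of \eqref{eqn:euler:bilayer}, up to the potential part, which is driven by $\partial_t^2\hat\zeta_k$ via the harmonic extensions $\cosh/\sinh$ in each layer. Pressure continuity then yields a scalar equation
$$\Big(\frac{\rho_+}{\tanh(|k|(-r_*))}+\frac{\rho_-}{\tanh(|k|(r_*+H))}\Big)\partial_t^2\hat\zeta_k + g(\rho_--\rho_+)|k|\hat\zeta_k = F_k,$$
where $F_k$ is a time-independent source built from the frozen vorticity; its precise form should follow from the first equation of \eqref{eqn:euler:bilayer}, with the $g\partial_x\zeta$ term entering as shown. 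Because $\rho_+<\rho_-$, this is a harmonic oscillator with frequency $\omega_k^2 \sim g\frac{\rho_--\rho_+}{\rho_++\rho_-}|k|\tanh$-type, which is real, so the solution exists explicitly and is unique, with $|\hat\zeta_k(t)|$ controlled by its initial data and by $F_k$.

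For the Sobolev regularity, I would rather close an energy estimate than track explicit Fourier bounds. The natural energy is
$$\tfrac12\int_{S_+\cup S_-}\rho_{\bl}(|V_{\bl}|^2+|w_{\bl}|^2) + \tfrac12 g(\rho_--\rho_+)\int_{\mathbb T_L}|\zeta|^2,$$
which is conserved: integrating by parts, the interface terms combine to $(P_+-P_-)w|_{r_*} = 0$, and the gravity term matches $\partial_t\zeta = w|_{r_*}$. Applying $\partial_x^s$ commutes with everything and gives $H^{s,0}\times H^{s,0}\times H^s$ control. The $H^{s,1}$ control on $w_{\bl}$ follows from $\partial_r w_{\bl} = -\partial_x V_{\bl}$, as long as the energy is taken at level $s$ for $V$ and $\partial_r w$ is measured in $H^{s-1}$. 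This is the main obstacle, and I expect some care or a slight regularity loss in the definition of $H^{s,1}$. To handle it, I would control $\partial_x^{s-1}\partial_r w$ by $\partial_x^{s}V$, which explains why $s\ge1$ is assumed. Finally, uniqueness follows from the energy identity applied to the difference of two solutions, and continuity in time follows from the mode-wise explicit solution together with dominated convergence in $k$.
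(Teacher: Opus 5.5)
Your proof is correct. The regularity and uniqueness part coincides with the paper's proof: conservation of $\frac12\int\rho_{\pm}(|V_{\pm}|^2+|w_{\pm}|^2)+\frac g2(\rho_--\rho_+)\int_{\mathbb T_L}|\zeta|^2$, with the interface terms cancelling through pressure continuity and $\partial_t\zeta=w_{|r=r_*}$, followed by application of $\partial_x^\alpha$.

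Where you differ is existence. The paper first obtains $P_{\pm}$ from an elliptic transmission problem, namely a variational problem in a Hilbert space of pairs with matching traces at $r=r_*$. It then only sketches the energy estimate and omits existence as classical. You instead construct the solution explicitly, Fourier mode by Fourier mode, from the frozen vorticity in each layer and a scalar oscillator for $\hat\zeta_k$. This buys you several things:
\begin{itemize}
\item a complete existence argument;
\item continuity in time, since per-mode energies are conserved and so the $H^s$ tails are uniformly small;
\item the explicit real frequency $\omega_k^2=g(\rho_--\rho_+)|k|/\bigl(\rho_+\coth(|k||r_*|)+\rho_-\coth(|k|(H+r_*))\bigr)$, which makes the role of $\rho_+<\rho_-$ transparent.
\end{itemize}
The paper's energy method is more robust, and it is the template reused for the difference estimate in Proposition \ref{prop:cvce}.

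Three small corrections:
\begin{itemize}
\item The source $F_k$ is in fact zero. Since $\partial_t\hat V_k$ and $\partial_t\hat w_k$ are purely potential, the frozen vorticity never enters the interface pressure.
\item The $H^{s,1}$ issue you flag is not an obstacle. By the definition in Section \ref{section:notations}, $\Vert w\Vert_{H^{s,1}}^2=\Vert w\Vert_{H^{s,0}}^2+\Vert \partial_r w\Vert_{H^{s-1,0}}^2$, and $\partial_r w=-\partial_x V$ controls this with no loss.
\item With only $V_{\ini}\in H^{s,0}$, the vorticity is merely a distribution in $r$. It is cleaner to write $\hat w_k(t)=\hat w_{\ini,k}+\bigl(\partial_t\hat\zeta_k(t)-\partial_t\hat\zeta_k(0)\bigr)\phi_k$, where $\phi_k''=k^2\phi_k$ in each layer, $\phi_k(r_*)=1$ and $\phi_k(0)=\phi_k(-H)=0$.
\end{itemize}
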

\section{Numerical scheme}
\label{section:num_scheme}
We start by describing our strategy for solving \eqref{eqn:euler:isopycnal:Vb} numerically in Section \ref{subsection:num_scheme_description}\nv{, which} is based on the modal decomposition introduced in Section \ref{section:normal_modes}. \nv{In Section \ref{subsection:num_scheme:cvce}, we prove the convergence of our numerical scheme, and in Section~\ref{subsection:illustrations} we provide some illustrations on internal waves.}  Finally, in Section \ref{subsection:num:bl}, we present our numerical strategy regarding the irrotational bilayer Euler equations \eqref{eqn:interfacial_waves}.
\subsection{Numerical scheme for the stratified Euler equations}
\label{subsection:num_scheme_description}
Our numerical strategy to compute solutions of \eqref{eqn:euler:isopycnal:Vb} is as follows.\\
\paragraph{\bf Solving the Sturm-Liouville problem.}
First, let $\rhob$ be the density profile under consideration. Then, the Brunt-Väisälä frequency $N^2$ is given analytically or given via a finite-difference approximation, through \eqref{eqn:def:N2}; we use the latter solution. We solve the eigenvalue problem \eqref{eqn:SL} with Dirichlet boundary conditions, in order to find an approximation on $((c_n,f_n,g_n))_{1 \leq n \leq \mathtt{N}}$, for $\mathtt{N} \in \N^*$ some integer. We introduce a uniform discretization of $[-H,0]$: for $\mathtt{m_r} \in \N^*$, we write $r_j = -j h$, with $j \in \{0,\dots,\mathtt{m_r} \}$, and $h \coloneq  H/\mathtt{m_r}$.\\

Using a finite-difference scheme, solving the eigenvalue problem \eqref{eqn:SL} with Dirichlet boundary conditions boils down to solving the following generalized eigenvalue problem
\begin{equation}
\label{eqn:gen_eigv_pb}
\frac{1}{h^2}S Y = - \lambda D Y,
\end{equation}
where
\begin{equation}
\label{eqn:SL:matrix_finite_diff:1}
S =
\begin{pmatrix}
\alpha_0 & \beta_0 & 0 & \cdots & 0 \\
\beta_0 & \alpha_1 & \beta_1 & \ddots & \vdots \\
0 & \beta_1 & \alpha_2 & \ddots  & 0 \\
\vdots & \ddots & \ddots & \ddots & \beta_{\mathtt{m_r}-1} \\
0 & \cdots & 0 & \beta_{\mathtt{m_r}-1} & \alpha_{\mathtt{m_r}}
\end{pmatrix}, \qquad
D = \begin{pmatrix}
\gamma_0 & 0 & 0 & \cdots & 0 \\
0 & \gamma_1 & 0 & \ddots & \vdots \\
0 & 0 & \gamma_2 & \ddots  & 0 \\
\vdots & \ddots & \ddots & \ddots & 0 \\
0 & \cdots & 0 & 0 & \gamma_{\mathtt{m_r}}
\end{pmatrix},
\end{equation}
where, for $j\in \{0,\dots,\mathtt{m_r} \}$:
\begin{equation}
\label{eqn:SL:matrix_finite_diff:2}
\begin{aligned}
\alpha_j &=
-\left(\rhob\!\big(r_j-\tfrac{h}{2}\big)+\rhob\!\big(r_j+\tfrac{h}{2}\big)\right), \qquad &
\beta_j  &= \rhob\!\big(r_j+\tfrac{h}{2}\big),\\
\gamma_j &= \rhob(r_j) N^2(r_j).
\end{aligned}
\end{equation}
Solving for the first eigenvalues and eigenvectors, we obtain $(\frac{1}{c_n^2})_{1 \leq n \leq \mathtt{N}}$ and a family of vectors $(f_n^j)_{1 \leq n \leq \mathtt{N}, 0 \leq j \leq \mathtt{m_r}}$ approximating $(f_n(r_j))_{1 \leq n \leq \mathtt{N}, 0 \leq j \leq \mathtt{m_r}}$. Then using the definition \eqref{eqn:def:gn}, we obtain $(g_n)_{1\leq n \leq \mathtt{N}}$, using the finite difference approximation 
$$g_n(r_j) \approx g_n^j \coloneq  \frac{c_n}{2h}(f_n^{j+1} - f_n^{j-1}).$$

\paragraph{\bf Restriction to a finite number of (vertical) modes.}
We now show how to discretize the system \eqref{eqn:modes:couple:shear}\nouveau{, using the modes from the Sturm-liouville problem \eqref{eqn:SL} studied in Section \ref{subsection:normal_modes}, which we find numerically using \eqref{eqn:gen_eigv_pb}}. Given the density profile $\rhob$ and the shear flow $\Vb$, we can compute the matrices $M^{(\mathtt{N})} \coloneq  (M^{(\mathtt N)}_{n,m})_{n,m \leq \mathtt{N}}, A^{j, (\mathtt{N})} \coloneq  (A^{j,(\mathtt N)}_{n,m})_{n,m\leq \mathtt{N}}$ for $j \in \{1, \dots, 4\}$ that approximate the operators $M, A^j$, through the definitions of the coefficients in \eqref{eqn:def:Mnm} and \eqref{eqn:def:A}, where the integrals are replaced by a quadrature formula; we use the trapezoid rule. We get:
\begin{equation}
\label{eqn:def:Mnm:approx}
M^{(\mathtt N)}_{n,m} \coloneq  c_n c_m \frac{h}{2} \sum_{j=0}^{\mathtt m_r-1} (\rhob(r_j)f_n^j f_m^j + \rhob(r_{j+1})f_n^{j+1} f_m^{j+1}),
\end{equation}
as well as
\begin{equation}
\label{eqn:def:A:approx}
\begin{aligned}
A^{1,(\mathtt N)}_{n,m} &\coloneq  \frac{h}{2} \sum_{j=0}^{\mathtt m_r-1}\left( \Vb(r_j)\rhob(r_j) g_n^jg_m^j + \Vb(r_{j+1}) \rhob(r_{j+1}) g_n^{j+1}g_m^{j+1}\right), \\
A^{2,(\mathtt N)}_{n,m} &\coloneq  -\frac{c_m}{c_n}A^{4,\mathtt N}_{n,m}, \\
A^{3,(\mathtt N)}_{n,m} &\coloneq  c_n c_m \frac{h}{2} \sum_{j=0}^{\mathtt m_r-1}\left( \Vb(r_j)\rhob(r_j) f_n^jf_m^j + \Vb(r_{j+1}) \rhob(r_{j+1}) f_n^{j+1}f_m^{j+1}\right),\\
A^{4,(\mathtt N)}_{n,m} &\coloneq  \frac{h}{2} \sum_{j=0}^{\mathtt m_r-1}\left( \Vb(r_j)\rhob(r_j) N^2(r_j) f_n^jf_m^j + \Vb(r_{j+1}) \rhob(r_{j+1}) N^2(r_{j+1})f_n^{j+1}f_m^{j+1}\right),
\end{aligned}
\end{equation}
where we used that $A^2$ and $A^4$ are defined from the same integral, see Remark \ref{rk:A2}.
We thus consider the following system, that we call semi-discrete in the vertical variable as it only depends on a finite number of vertical modes,
\begin{equation}
\label{eqn:modes:couple:finite}
\sys{\left((I^{(\mathtt N)} - M^{(\mathtt N)} \partial_x^2)\partial_tV^{(\mathtt N)}\right)_n + c_n \partial_x \eta^{(\mathtt N)}_n + \left((2 A^{1,(\mathtt N)} + A^{2,(\mathtt N)} )\partial_x V^{(\mathtt N)}\right)_n  - (A^{3,(\mathtt N)} \partial_x^3 V^{(\mathtt N)})_n &= 0,\\
\partial_t \eta^{(\mathtt N)}_n + c_n \partial_x V^{(\mathtt N)}_n  + (A^{4,(\mathtt N)} \partial_x \eta^{(\mathtt N)})_n&= 0,
} 
\end{equation}
for $1 \leq n \leq \mathtt N$. Let now $((V_{\ini,n},\eta_{\ini,n}))_{n\in \N^*} \in \ell^2(\N^*,H^{s}(\mathbb T_L))$ such that $(c_n \partial_x V_{\ini,n})_{n\in \N^*} \in \ell^2(\N^*, H^{s}(\mathbb T_L))$, for some $s \in \N$. We complete  \eqref{eqn:modes:couple:finite} with the initial condition 
\begin{equation}
\label{eqn:modes:couple:finite:CI}
(V^{(\mathtt N)}_n,\eta^{(\mathtt N)}_n)(0,\cdot) = (V_{\ini,n},\eta_{\ini,n}),
\end{equation}
for $1 \leq n \leq \mathtt N$. 
\\

\paragraph{\bf Restriction to a finite number of (horizontal) Fourier modes.}
We approximate solutions $(V^{(\mathtt N)}_n, \eta^{(\mathtt N)}_n)_{n \in \{1, \dots, \mathtt N\}}$ of \eqref{eqn:modes:couple:finite} by the Fourier series of their first Fourier coefficients with respect to $x \in \mathbb T_L$; namely, take $\mathtt{K} \in \N^*$. Denoting by $\hat{V}^{(\mathtt N)}_{n,k}$ and $\hat{\eta}^{(\mathtt N)}_{n,k}$ the $k^{th}$ Fourier coefficient of $V^{(\mathtt N)}_n$ and $\eta^{(\mathtt N)}_n$ for $k \in \{ -\mathtt{K}, \dots, \mathtt{K}\}$, we obtain the following semi-discrete equations from \eqref{eqn:modes:couple:finite}:  
\begin{equation}
\label{eqn:EDO:modes}
\frac{d}{dt} \vec{\hat{Y}}_k(t) = B_k \vec{\hat{Y}}_k(t),
\end{equation} 
for $k \in \{-\mathtt{K}, \dots, \mathtt{K}\}$ and where 
$$(\vec{\hat{Y}}_k)_{n\leq \mathtt{N}} \coloneq  (\hat{V}^{(\mathtt N)}_{1,k}, \dots, \hat{V}^{(\mathtt N)}_{\mathtt N,k}, \hat{\eta}^{(\mathtt N)}_{1,k}, \dots \hat{\eta}^{(\mathtt N)}_{\mathtt N,k})^T,$$
and $B \coloneq  \begin{pmatrix} B^{1,1} & B^{1,2} \\ B^{2,1} & B^{2,2} \end{pmatrix}$ is defined by blocks, through
\begin{equation}
\label{eqn:def:B}
\begin{aligned}
B^{1,1} &\coloneq  ik (I^{(\mathtt{N})} +M^{(\mathtt{N})}k^2)^{-1} (2 A^{1,(\mathtt{N})} + A^{2,(\mathtt{N})}) +ik^3 (I^{(\mathtt N)} +M^{(\mathtt N)} k^2)^{-1} A^{3,(\mathtt{N})},  &B^{1,2} &\coloneq  (I^{(\mathtt N)} +M^{(\mathtt N)}k^2)^{-1} i k \text{diag}(c_n), \\
B^{2,1} &\coloneq  ik \text{diag}(c_n),  &B^{2,2} &\coloneq  ik A^{4,(\mathtt{N})};
\end{aligned} 
\end{equation}
here, $I^{(\mathtt N)}$ is the identity matrix, and $\text{diag}(c_n)$ denotes the $\mathtt N \times \mathtt N$ diagonal matrix whose $n^{th}$ diagonal coefficient is $c_n$. \\

\paragraph{\bf Initial conditions and time evolution.}
On the one hand, we use \eqref{eqn:EDO:modes} with suitable initial conditions to compute approximate solutions of \eqref{eqn:modes:couple:shear} (see Sections \ref{subsection:illustrations} and \ref{subsection:sharp_strat:sans_shear}). To this end, take an initial condition $(V_{\ini},\eta_{\ini})$ defined on the strip $S$. This initial condition should be sufficiently regular, i.e. satisfy the assumptions of Proposition \ref{prop:nrj:boussinesq:couple:shear} for some regularity index $s \geq 0$, and such that the first vertical mode $V_{\ini,0}$ of $V_{\ini}$ is independent of $x$.  We compute the $\mathtt N$ first coefficients $(\eta_{\ini, n})_{1\leq n \leq \mathtt N}$ of $\eta_{\ini}$ the basis $(f_n)_{n\in \N^*}$, and the $\mathtt N+1$ first coefficients $(V_{\ini, n})_{0\leq n \leq \mathtt N}$ of $V_{\ini}$ along the basis $(g_n)_{n \in \N}$. Recall the coefficient $V_{0}$ of the solution of \eqref{eqn:euler:isopycnal:Vb} is constant in time; it thus remains to compute the evolution of the coefficients $V_n$ for $n \in \{1, \dots, \mathtt N\}$. We consider the first Fourier coefficients of $(\eta_{\ini, n}, V_{\ini,n})_{1\leq n \leq \mathtt N}$, and we thus write, for $k \in \{-\mathtt K, \dots, \mathtt K\}$,
\begin{equation}
\label{eqn:EDO:modes:CI:coefs}
\vec{\hat{Y}}_{k,\ini} \coloneq  (\hat{V}_{\ini,1,k}, \dots, \hat{V}_{\ini,\mathtt N,k}, \hat{\eta}_{\ini,1,k}, \dots \hat{\eta}_{\ini,\mathtt N,k})^T.
\end{equation}
The system \eqref{eqn:EDO:modes} is thus completed with the initial condition
\begin{equation}
\label{eqn:EDO:modes:CI}
\vec{\hat{Y}}_k(0) = {\vec{\hat{Y}}}_{k,\ini}. 
\end{equation}
In order to solve \eqref{eqn:EDO:modes} together with the initial condition \eqref{eqn:EDO:modes:CI}, we need to perform a time discretization. We use a Runge-Kutta method of order $4$. Note however that other choices are possible. For instance, one could compute a numerical approximation of the exponential of the matrices $B_k$, for $k \in \{-\mathtt K, \dots, \mathtt K \}$. \\

\paragraph{\bf Dispersion relation.}
On the other hand, we use \eqref{eqn:EDO:modes} to compute an approximation of the dispersion relation of  \eqref{eqn:euler:isopycnal:Vb}, see Section \ref{subsection:sharp_strat:shear}. \nv{We} give a more precise definition of the dispersion relation in our setting.
\begin{definition}
\label{def:dispersion_relation}
Let $k \in \frac{1}{L} \mathbb Z$, that we call a {\rm frequency}. We define the {\rm phase velocities for the frequency k} as the set $\C(k)$ of complex numbers $c$ so that there exists a solution to \eqref{eqn:modes:couple:shear} of the form 
\begin{equation}
\label{eqn:def:dispersion:mode}
(e^{ik(x-ct)} \hat{V}_n, e^{ik(x-ct)} \hat{\eta}_n)_{n\in \N^*},
\end{equation}
with $\hat{V}_n, \hat{\eta}_n \in \ell^2(\N^*)$. We say that $(k,\omega(k))$ satisfies the {\rm dispersion relation} of \eqref{eqn:euler:isopycnal:Vb} if $c(k)\coloneq \omega(k)/k \in \C(k)$. If there exists $c \in \mathcal C(k)$ with $\Im(c) >0$, then the associated solution grows exponentially \nouveau{in time}. In this case, we say that $k$ is an {\rm unstable} frequency. Otherwise, we say that $k$ is {\rm stable}. In Section \ref{section:sharp_strat}, we consider a density profile and a shear flow that depend on a parameter $\delta$; in this case, we write $\mathcal{C}_{\delta}(k)$ to specify the dependency of $\mathcal{C}_{\delta}(k)$ on $\delta$.
\end{definition}
Let $\mathtt N, \mathtt K \in \N^*$, and $k \in \{ -\mathtt K, \dots, \mathtt K \}.$ We define $\mathcal C^{\mathtt N}(k)$ as the set of complex numbers $c$ such that there exist $\hat{V}^{\mathtt N}, \hat{\eta}^{\mathtt N} \in \mathbb C^{\mathtt N}$ such that 
$$(e^{ik(x-ct)} \hat{V}^{\mathtt N}_n, e^{ik(x-ct)} \hat{\eta}^{\mathtt N}_n)_{n\in \{1, \dots, \mathtt N\}}$$
is a solution of \eqref{eqn:EDO:modes}. Note that $\mathcal C^{\mathtt N}(k)$ can be computed from the eigenvalues of $B_k$ defined in \eqref{eqn:def:B}, namely
\begin{equation}
\label{eqn:def:mathcalC}
\mathcal{C}^{\mathtt N}(k) = \{ -i\lambda / k, \lambda \text{ eigenvalue of }B_k\}.
\end{equation}
We use \eqref{eqn:def:mathcalC} to compute a numerical approximation of the dispersion relation (more precisely, of the set $\mathcal C(k)$ of phase velocities, for $k \in \{ - \mathtt K, \dots, \mathtt K\}$) of \eqref{eqn:euler:isopycnal:Vb}.\\

The requirement associated to $\hat{V}_n, \hat{\eta}_n \in \ell^2(\N^*)$ in Definition \ref{def:dispersion_relation} is automatically satisfied by any eigenvector associated with an eigenvalue of $B_k$, when computing $\mathcal{C}^{\mathtt N}(k)$. This difference is linked to the distinction between eigenvalues and spectrum of unbounded operators, and falls outside of the scope of this study. 
\subsection{Convergence of the scheme}
\label{subsection:num_scheme:cvce}
In this section we prove the convergence of the numerical scheme given in Section \ref{subsection:num_scheme_description} for solving \eqref{eqn:modes:couple:shear}. More precisely, as \eqref{eqn:modes:couple:shear} is a linear system of equations with coefficients that do not depend on $x \in \mathbb T_L$, we do not study the time and horizontal space discretization, as these are fairly standard. For the same reason, we do not consider the matrices $M^{(\mathtt N)}, A^{1,(\mathtt N)}, \dots, A^{4,(\mathtt N)}$ given by the quadrature formulas, but rather the direct truncation of the operators $M, A^{1}, \dots, A^4$, namely
\begin{equation}
\label{eqn:def:M_A:tronques}
M^{(\mathtt N)}_{n,m} = M_{n,m}, \qquad A^{j,(\mathtt N)}_{n,m} = A^j_{n,m},
\end{equation}
for $j \in \{1,2,3,4\}$, $n,m \in \{1, \dots, \mathtt N\}$ and according to the definitions \eqref{eqn:def:Mnm} and \eqref{eqn:def:A} of $M_{n,m}$ and $A^{j}_{n,m}$. \\
 
The following proposition states that there exists a unique solution $(V^{(\mathtt N)}_n,\eta^{(\mathtt N)}_n)_{1 \leq n \leq \mathtt N}$ to \eqref{eqn:modes:couple:finite} defined with \eqref{eqn:def:M_A:tronques}, with initial conditions \eqref{eqn:modes:couple:finite:CI}, and provides an upper bound on its difference with the first $\mathtt N$ modes of $(V_n,\eta_n)_{n \in \N^*}$, solution to \eqref{eqn:modes:couple} with initial condition given by $(V_{\ini,n},\eta_{\ini,n})_{n\in \N^*}$.
\begin{proposition}
\label{prop:num_scheme:cvce}
Let $\mathtt N \in \N^*$, $s \in \N$, $\rhob, \Vb \in W^{2,\infty}([-H,0])$ such that there exist constants $c^*, c_* > 0$ such \mbox{that $c_* \leq \rhob \leq c^*$} and $-\rhob' \geq c_*$. Let $(V_{\ini},\eta_{\ini}) \in \ell^2(\N^*,H^{s+3}(\mathbb T_L))$ such that $(c_n \partial_x V_{\ini})_{n\in \N^*} \in \ell^2(\N^*, H^{s+3}(\mathbb T_L))$, and denote by $C_0$ an upper bound on its $\ell^2(\N^*, H^{s+3}(\mathbb T_L))$-norm. Under the assumptions of Proposition \ref{prop:nrj:boussinesq:couple:shear}, there exists a unique solution $((V^{(\mathtt N)}_n,\eta^{(\mathtt N)}_n))_{1\leq n \leq \mathtt N} \in C^0([0,\infty), H^{s+3}(\mathbb T_L)^{\mathtt N})$ to \eqref{eqn:modes:couple:finite} with initial condition \eqref{eqn:modes:couple:finite:CI}. Consider also the unique solution $(V_n,\eta_n)_{n \in \N^*} \in C^0([0,+\infty), \ell^2(\N^*,H^{s+3}(\mathbb T_L)))$ to \eqref{eqn:modes:couple:shear} with initial condition $(V_{\ini},\eta_{\ini})$ given by Proposition~\ref{prop:nrj:boussinesq:couple:shear}. Then, for $T > 0$, there exists $ C > 0$ depending on $C_0$ and $T$ such that
\begin{equation}
\label{eqn:nrj:modes:diff}
\begin{aligned}
&\sup_{t\in [0,T]} \left(\vert (V_n - V^{(\mathtt N)}_n)(t,\cdot)|_{\ell^2H^{s}} + \vert c_n \partial_x (V_n - V^{(\mathtt N)}_n)(t,\cdot)|_{\ell^2H^{s}} + \vert (\eta_n - \eta^{(\mathtt N)}_n)(t,\cdot)\vert_{\ell^2 H^{s}}\right) \\
&\leq C\left( \vert V_{\ini,n} - V^{(\mathtt N)}_{\ini,n}|_{\ell^2 H^{s}} + \vert c_n \partial_x (V_{\ini,n} - V^{(\mathtt N)}_{\ini,n})|_{\ell^2H^{s}}+ \vert \eta_{\ini,n} - \eta^{(\mathtt N)}_{\ini, n}\vert_{\ell^2 H^{s}} +\nouveau{\epsilon(\mathtt N)}\right),
\end{aligned}
\end{equation}
where 
$$\epsilon(\mathtt N) \underset{\mathtt N \to 0}{\longrightarrow} 0,$$
and with the convention $(V^{(\mathtt N)}_n,\eta^{(\mathtt N)}_n)=(0,0)$ for $n > \mathtt N$.
\end{proposition}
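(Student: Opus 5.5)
Existence and uniqueness for the truncated system \eqref{eqn:modes:couple:finite} with \eqref{eqn:def:M_A:tronques} will be obtained exactly as in Proposition \ref{prop:nrj:boussinesq:couple:shear}. The structure of the proof is preserved by truncation. $M^{(\mathtt N)}$ is the compression of $M$ to $\mathrm{span}(e_1,\dots,e_{\mathtt N})$, so it is symmetric and satisfies \eqref{eqn:M:positive} uniformly in $\mathtt N$. Hence $I^{(\mathtt N)}-M^{(\mathtt N)}\partial_x^2$ is invertible. The truncations of $A^3$, $A^4$ and of $A^1$ are still symmetric, so $A^{3,(\mathtt N)}\partial_x^3$, $A^{4,(\mathtt N)}\partial_x$ and $A^{1,(\mathtt N)}\partial_x$ contribute skew-symmetric terms. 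The remaining term $A^{1,(\mathtt N)}+A^{2,(\mathtt N)}$ obeys \eqref{eqn:Aun_Adeux} with the same constant. Thus the energy estimate \eqref{eqn:nrj:boussinesq:couple:shear} holds for the finite system with a constant $C$ independent of $\mathtt N$. Since the system is a linear ODE in Fourier variables for each $k$, existence in $C^0([0,\infty),H^{s+3}(\mathbb T_L)^{\mathtt N})$ follows.

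For the error estimate, I will let $\Pi_{\mathtt N}$ denote the projection onto the first $\mathtt N$ modes and set $\Pi_{\mathtt N}^\perp = I-\Pi_{\mathtt N}$. I will split the error as $V-V^{(\mathtt N)} = \Pi_{\mathtt N}^\perp V + (\Pi_{\mathtt N}V - V^{(\mathtt N)})$, and similarly for $\eta$.

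The tail $\Pi_{\mathtt N}^\perp V$ is bounded by $\big(\sum_{n>\mathtt N} |V_n|_{H^s}^2+\dots\big)^{1/2}$. I will show this tends to $0$ uniformly on $[0,T]$ by a compactness argument. The family $\{(V,\eta)(t)\}_{t\in[0,T]}$ is the continuous image of a compact interval, so it is compact in $\ell^2H^{s}$, and the tails of a compact set in $\ell^2$ go to $0$ uniformly. This gives a contribution $\epsilon(\mathtt N)$.

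The difference $D=\Pi_{\mathtt N}V-V^{(\mathtt N)}$ (and similarly for $\eta$) is handled next. Applying $\Pi_{\mathtt N}$ to \eqref{eqn:modes:couple:shear} shows that $\Pi_{\mathtt N}(V,\eta)$ solves the truncated system with a source term. The source consists of the coupling terms $\Pi_{\mathtt N}M\Pi_{\mathtt N}^\perp\partial_x^2\partial_t V$, $\Pi_{\mathtt N}(2A^1+A^2)\Pi_{\mathtt N}^\perp\partial_xV$, $\Pi_{\mathtt N}A^3\Pi_{\mathtt N}^\perp\partial_x^3 V$ and $\Pi_{\mathtt N}A^4\Pi_{\mathtt N}^\perp\partial_x\eta$. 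Subtracting the truncated equation and applying the uniform energy estimate of the finite system with a source, together with Grönwall, bounds $D$ on $[0,T]$ by the initial discrepancy plus $C\int_0^T|\text{source}|$.

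The main obstacle is to show that the source is $o(1)$ in $\ell^2H^s$ uniformly on $[0,T]$. This is where the extra three derivatives in the hypotheses are used. For $M$ and $A^3$, \eqref{eqn:M:borne_coefs} gives $|(M\Pi_{\mathtt N}^\perp u)_n|\le Cc_n|(c_mu_m)_{m>\mathtt N}|_{\ell^2}$. Combined with the time derivative $\partial_t V$, which is controlled through the equation in $H^{s+2}$ using $H^{s+3}$ data, this is bounded by tails of a compact family and hence is $o(1)$.

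For $A^4$, and especially for $A^2$ with its factor $c_m/c_n$, I will use Cauchy--Schwarz together with the orthonormality of $(f_n)$ and $(g_n)$. Each entry $\Pi_{\mathtt N}A\Pi^\perp_{\mathtt N}$ is a sum against the tail of $\partial_x V$ or $\partial_x\eta$, and the operator norms are bounded via the integral kernel. For $A^1$ and $A^4$, multiplication by $\Vb$ is bounded on $L^2_{\rhob}$ and $L^2_{\rhob N^2}$, so they are bounded operators. For $A^2$ I will instead use the combination $A^1+A^2$ with \eqref{eqn:Aun_Adeux}, which requires $\Vb\in W^{1,\infty}$, and the $W^{2,\infty}$ assumption gives a margin here. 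Each source term is then bounded by $C\,|\Pi_{\mathtt N}^\perp(\cdot)|$ of a quantity lying in a compact subset of $\ell^2H^s$. This tends to $0$ uniformly in time and completes the bound with some $\epsilon(\mathtt N)\to0$.
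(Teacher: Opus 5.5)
Your proof is correct, but it runs the comparison in the opposite direction from the paper.

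\textbf{The paper's route.} It extends $(V^{(\mathtt N)},\eta^{(\mathtt N)})$ by zero and treats the whole error $(V-V^{(\mathtt N)},\eta-\eta^{(\mathtt N)})$ as a solution of the \emph{infinite} system \eqref{eqn:modes:couple:shear}. The sources are the blocks $n>\mathtt N$, $m\le\mathtt N$ of $M,A^1,\dots,A^4$ applied to the \emph{truncated} solution. It then relies on:
\begin{itemize}
\item the stability of the infinite system;
\item the $\mathtt N$-uniform $H^{s+3}$ bound \eqref{eqn:modes:diff:temp:0} together with Lemma~\ref{lemma:M:inverse}, to control $\partial_t V^{(\mathtt N)}$;
\item $c_n\sim c/n$, to get an explicit $C/\mathtt N$ for the $M$-term;
\item an integration by parts (this is where $\Vb\in W^{2,\infty}$ enters) to extract a factor $c_n$, $n>\mathtt N$, from the $A^1+A^2$ term.
\end{itemize}

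\textbf{Your route.} You project the exact solution and use the stability of the truncated system. Your sources are the transposed blocks $n\le\mathtt N$, $m>\mathtt N$, acting on tails of the \emph{exact} solution, and $D(0)=0$. All regularity demands fall on a single trajectory, so $\Vb\in W^{1,\infty}$ suffices.

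Your compactness argument also makes explicit something the paper leaves implicit. The multiplication-type terms ($A^1$, $A^4$) carry no spare factor $c_m$, so their smallness can only come from the exact tails. In the paper this happens through $\tilde V_n=V_n$, $\tilde\eta_n=\eta_n$ for $n>\mathtt N$, and those tails must be small uniformly on $[0,T]$.

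The price is that your $\epsilon(\mathtt N)$ depends on the trajectory, not only on $C_0$. Still, your $M$, $A^3$ and $A^1+A^2$ sources carry a factor $c_m$ with $m>\mathtt N$, so they are $O(1/\mathtt N)$ as in Remark~\ref{rk:better_cvce}.

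\textbf{One precision.} The entrywise bounds \eqref{eqn:M:borne_coefs} and \eqref{eqn:Aun_Adeux} do not by themselves give the $\mathtt N$-uniform weighted operator bounds you state; the second has no decay in $n$ at all. Use the integral forms instead:
\begin{itemize}
\item $(Mu)_n=c_n\int_{-H}^0 f_n\rhob\sum_m c_mu_mf_m$;
\item $((A^1+A^2)u)_n$ is the $L^2_{\rhob}$-coefficient along $g_n$ of $-\Vb'\sum_m c_mu_mf_m$.
\end{itemize}
These give exactly the bounds $|(M\Pi_{\mathtt N}^\perp u)_n|\le Cc_n|(c_mu_m)_{m>\mathtt N}|_{\ell^2}$ and $|(A^1+A^2)\Pi_{\mathtt N}^\perp u|_{\ell^2}\le C|(c_mu_m)_{m>\mathtt N}|_{\ell^2}$ that your argument needs.
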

\begin{remark}
\label{rk:cvce_scheme}
In the right-hand side of \eqref{eqn:nrj:modes:diff}, $(V_{\ini,n} - V^{(\mathtt N)}_{\ini,n}, \eta_{\ini,n} - \eta^{(\mathtt N)}_{\ini, n}) = (0,0)$ for $n \leq \mathtt N$, so that the first three terms on the right-hand side of \eqref{eqn:nrj:modes:diff} account for the convergence rate of the modal decomposition of the unknowns $V$ and $\eta$ in \eqref{eqn:euler:isopycnal:no_shear} along the basis $(g_n)_n$ and $(f_n)_n$. This convergence rate can be linked to the vertical regularity of $V$ and $\eta$, see \cite[Proposition 3]{Desjardins2019}.
\end{remark}
\begin{remark}
\label{rk:better_cvce}
In the case $\Vb = 0$, the operators $A^1, \dots, A^4$ in \eqref{eqn:modes:couple:shear} and the matrices $A^{1,(\mathtt N)}, \dots, A^{4,(\mathtt N)}$ in \eqref{eqn:modes:couple:finite} vanish. In this case, \eqref{eqn:nrj:modes:diff} can be replaced with the more precise estimate
\begin{equation}
\label{eqn:nrj:modes:diff:sans_shear}
\begin{aligned}
&\vert (V_n - V^{(\mathtt N)}_n)(t,\cdot)|_{\ell^2 H^{s}} + \vert c_n \partial_x (V_n - V^{(\mathtt N)}_n)(t,\cdot)|_{\ell^2H^{s}} + \vert (\eta_n - \eta^{(\mathtt N)}_n)(t,\cdot)\vert_{\ell^2 H^{s}}\\
 \leq &\vert V_{\ini,n} - V^{(\mathtt N)}_{\ini,n}|_{\ell^2 H^{s}} + \vert c_n \partial_x (V_{\ini,n} - V^{(\mathtt N)}_{\ini,n})|_{\ell^2H^{s}}+ \vert \eta_{\ini,n} - \eta^{(\mathtt N)}_{\ini, n}\vert_{\ell^2 H^{s}} + C\frac{t}{\mathtt N},
 \end{aligned}
\end{equation}
where $C$ does not depend on $t$. There are two differences between \eqref{eqn:nrj:modes:diff} and \eqref{eqn:nrj:modes:diff:sans_shear}. The first one is that the constant $C$ in \eqref{eqn:nrj:modes:diff:sans_shear} does not depend on time. The second difference between \eqref{eqn:nrj:modes:diff:sans_shear} and \eqref{eqn:nrj:modes:diff} is that in \eqref{eqn:nrj:modes:diff:sans_shear}, the error is of size $\frac{t}{\mathtt N}$; indeed, when $\Vb = 0$, the only contribution to this error term is the truncation of the operator $M$ (see \eqref{eqn:modes:diff:temp:3}), and because of \eqref{eqn:def:Mnm} and \eqref{eqn:c_n:behaviour}, we get
$$M_{n,m} \approx \frac{1}{nm}.$$
\end{remark}
\begin{proof}
The proof for the existence of $V^{(\mathtt N)},\eta^{(\mathtt N)}$ is the same as the proof of Proposition \ref{prop:nrj:boussinesq:couple:shear} and we omit it. In particular, it yields that there exists $C > 0$ independent of $\mathtt N$ such that, for $t \in [0,T]$:
\begin{equation}
\label{eqn:modes:diff:temp:0}
\Vert (V^{(\mathtt N)}_n(t,\cdot))_n \Vert_{\ell^2 H^{s+3}} + \Vert (c_n\partial_x V^{(\mathtt N)}_n(t,\cdot))_n \Vert_{\ell^2 H^{s+3}} + \Vert (\eta^{(\mathtt N)}_n(t,\cdot))_n \Vert_{\ell^2 H^{s+3}} \leq Ce^{CT}.
\end{equation}
We now proceed to prove \eqref{eqn:nrj:modes:diff} with $s=0$. With the convention $(V^{(\mathtt N)}_n,\eta^{(\mathtt N)}_n)=(0,0)$ for $n > \mathtt N$, we write
\begin{equation}
\label{eqn:def:Vtilde}
(\tilde{V}_n,\tilde{\eta}_n) \coloneq  (V_n - V^{(\mathtt N)}_n,\eta_n - \eta^{(\mathtt N)}_n),
\end{equation} 
for $n \in \N^*$. Taking the difference between \eqref{eqn:modes:couple:shear} and \eqref{eqn:modes:couple:finite} we get
\begin{equation}
\label{eqn:modes:diff}
\sys{
&\left((I - M) \partial^2_x)\partial_t\tilde{V}\right)_n + c_n \partial_x \tilde{\eta}_n + ((2 A^1 + A^2) \partial_x \tilde{V})_n - (A^3 \partial_x^3 \tilde{V})_n\\
&= \left((M-M^{(\mathtt N)}) \partial^2_x \partial_t V^{(\mathtt N)}\right)_n + \left(2(A^{1, (\mathtt N)} - A^{1}) + A^{2, (\mathtt N)} - A^2) \partial_x V^{(\mathtt N)}\right)_n - \left(( A^{3, (\mathtt N)} - A^3) \partial_x^3 V^{(\mathtt N)}\right)_n,\\
&\partial_t \tilde{\eta}_n + c_n \partial_x \tilde{V}_n + (A^{4} \partial_x \tilde{\eta})_n= ((A^{4,(\mathtt N)} - A^4) \partial_x \eta^{(\mathtt N)})_n.
} 
\end{equation}
Multiplying \eqref{eqn:modes:diff} by $\tilde{V}_n$ and $\tilde{\eta}_n$ for $n \in \N^*$, integrating over $\mathbb T_L$ and summing over $\N^*$ we get, similarly to the proof of Proposition \ref{prop:nrj:boussinesq:couple:shear}:
\begin{equation}
\label{eqn:modes:diff:temp}
\begin{aligned}
&\frac12 \frac{d}{dt} \left( \vert (\tilde{V}_n)_{n \in \N} \vert_{\ell^2 L^2}^2 + \vert (M^{\frac12} \partial_x \tilde{V})_{n \in \N} \vert_{\ell^2L^2}^2 + \vert (\tilde{\eta}_n)_{n\in \N} \vert_{\ell^2 L^2}^2 \right)\\
& =  \sum_{n=1}^{\infty} \int_{\mathbb T_L}\left((M - M^{(\mathtt N)}) \partial_x^2 \partial_t V^{(\mathtt N)}\right)_n \tilde{V}_n + \sum_{n=1}^{\infty} \int_{\mathbb T_L}((A^{1, (\mathtt N)} + A^{2, (\mathtt N)} - (A^{1} + A^2)) \partial_x V^{(\mathtt N)})_n \tilde{V}_n +\sum_{n=1}^{\infty} \int_{\mathbb T_L}(A^{1, (\mathtt N)} - A^{1})\partial_x V^{(\mathtt N)})_n \tilde{V}_n \\
&- \sum_{n=1}^{\infty} \int_{\mathbb T_L}((A^{3, (\mathtt N)} - A^3) \partial_x^3 V^{(\mathtt N)})_n \tilde{V}_n 
+ \sum_{n=1}^{\infty} \int_{\mathbb T_L}((A^{4,(\mathtt N)} - A^4) \partial_x \eta^{(\mathtt N)})_n \tilde{\eta}_n. 
\end{aligned}
\end{equation}
We now bound the terms on the right-hand side of \eqref{eqn:modes:diff:temp} from above. We only treat the first and second terms, the other ones are treated the same way. Because $(M-M^{(\mathtt N)})_{n,m}$ is $0$ if $n\leq \mathtt N$ and $m \leq \mathtt N$, and $M_{n,m}$ otherwise, and $V^{(\mathtt N)}_n = 0$ when $n > \mathtt N$, we can write
$$\sum_{n=1}^{\infty} \int_{\mathbb T_L}\left((M - M^{(\mathtt N)}) \partial_x^2 \partial_t V^{(\mathtt N)}\right)_n \tilde{V}_n = \sum_{n=\mathtt N+1}^{\infty} \sum_{m=1}^{\mathtt N}\int_{\mathbb T_L} M_{n,m} \partial_x^2 \partial_t V^{(\mathtt N)}_m \tilde{V}_n.$$
We then use \eqref{eqn:M:borne_coefs} and \eqref{eqn:c_n:behaviour} as well as Cauchy-Schwarz inequality to write
\begin{equation}
\label{eqn:modes:diff:temp2}
\sum_{n=1}^{\infty}\int_{\mathbb T_L} \left((M - M^{(\mathtt N)}) \partial_x^2 \partial_t V^{(\mathtt N)}\right)_n \tilde{V}_n \leq \frac{C}{\mathtt N} \vert c_m \partial_x \partial_t V^{(\mathtt N)}_m \vert_{\ell^2 H^{1}} \vert \tilde{V} \vert_{\ell^2 L^2}.
\end{equation}
We now need the following lemma to bound $\partial_t V^{(\mathtt N)}$ from above using \eqref{eqn:modes:couple:finite}.
\begin{lemma}
\label{lemma:M:inverse}
Let $s \geq 0$, $f \in  H^s(\mathbb{T}_L)^{\mathtt N}$, there exists a unique solution $u \in  H^s(\mathbb T_L)^{\mathtt N}$ to the equation
\begin{equation}
\label{eqn:M:inverse}
(I - M^{(\mathtt N)} \partial^2_x) u = f,
\end{equation}
and $u$ satisfies
\begin{equation}
\label{eqn:M:inverse:borne}
\Vert u \Vert_{\ell^2H^s} + \Vert (c_n \partial_x u_n)_n \Vert_{\ell^2 H^s} \leq C \Vert f \Vert_{\ell^2H^s},
\end{equation}
with $C > 0$, a constant that does not depend on $\mathtt N$. 
\end{lemma}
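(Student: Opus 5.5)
The plan is to work Fourier mode by Fourier mode in $x$. Since $M^{(\mathtt N)}$ does not depend on $x$, equation \eqref{eqn:M:inverse} decouples into
$$(I^{(\mathtt N)} + k^2 M^{(\mathtt N)})\hat u_k = \hat f_k, \qquad k \in \tfrac1L\mathbb Z.$$
I would show that each matrix $I^{(\mathtt N)} + k^2 M^{(\mathtt N)}$ is invertible, with bounds uniform in $k$ and $\mathtt N$. Summing over $k$ with weights $(1+k^2)^s$ then gives existence, uniqueness and \eqref{eqn:M:inverse:borne}.

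\textbf{Positivity of the truncated matrix.} By \eqref{eqn:def:M_A:tronques}, $M^{(\mathtt N)}$ is the compression of $M$ to the first $\mathtt N$ coordinates. For $v \in \mathbb C^{\mathtt N}$, extended by zero to an element of $\ell^2$, we have $\langle M^{(\mathtt N)} v, \bar v\rangle = \langle M v, \bar v\rangle$. Proposition \ref{prop:M}, namely \eqref{eqn:M:sym} and \eqref{eqn:M:positive}, applied to real and imaginary parts, then shows that $M^{(\mathtt N)}$ is real symmetric and satisfies
$$\langle M^{(\mathtt N)} v,\bar v\rangle \geq \tfrac1C\,|(c_n v_n)_n|^2,$$
with the same constant $C$ as for $M$, hence independent of $\mathtt N$. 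Consequently $I^{(\mathtt N)} + k^2 M^{(\mathtt N)}$ is symmetric positive definite with eigenvalues at least $1$, so it is invertible.

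\textbf{Energy identity for each mode.} Take the Hermitian product of the mode-$k$ equation with $\hat u_k$:
$$|\hat u_k|^2 + k^2\langle M^{(\mathtt N)}\hat u_k,\bar{\hat u}_k\rangle = \langle \hat f_k,\bar{\hat u}_k\rangle \leq |\hat f_k|\,|\hat u_k|.$$
The left-hand side is at least $|\hat u_k|^2 + \tfrac1C k^2 |(c_n\hat u_{n,k})_n|^2$. This gives $|\hat u_k| \leq |\hat f_k|$ and then $k^2|(c_n \hat u_{n,k})_n|^2 \leq C|\hat f_k|^2$. Since $|(c_n \partial_x u_n)_n|$ has Fourier coefficients $|k|\,|(c_n\hat u_{n,k})_n|$, multiplying by $(1+k^2)^s$ and summing over $k$ yields \eqref{eqn:M:inverse:borne} with a constant depending only on the constant in \eqref{eqn:M:positive}. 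Uniqueness follows from linearity and this bound. For existence, $u$ is defined through its Fourier coefficients, and the bound shows that $u$ belongs to $H^s(\mathbb T_L)^{\mathtt N}$.

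\textbf{Main obstacle.} The only delicate point is that the ellipticity constant must not degrade with $\mathtt N$. This is exactly what the compression argument provides, because we use the exact truncation \eqref{eqn:def:M_A:tronques} rather than the quadrature matrices. Everything else is a routine Fourier multiplier estimate.
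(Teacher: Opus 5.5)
Your proof is correct. It rests on the same underlying fact as the paper's proof, namely the coercivity of $M$ from Proposition~\ref{prop:M}, but it follows a genuinely more explicit and elementary route.

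\textbf{The paper's route.} The paper disposes of the lemma in one line. It invokes standard elliptic regularity (citing Evans), given the two-sided bound \eqref{eqn:M:sqrt:positive}.

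\textbf{Your route.} You diagonalize in $x$ by Fourier series, which works because $M^{(\mathtt N)}$ has $x$-independent coefficients on the torus. You then obtain the estimate from a per-mode energy identity that uses only the lower bound \eqref{eqn:M:positive}. This gives explicit constants ($|\hat u_k|\le|\hat f_k|$ and $k^2|(c_n\hat u_{n,k})_n|^2\le C|\hat f_k|^2$) and needs no elliptic theory.

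\textbf{What your route buys.} More importantly, it makes the uniformity in $\mathtt N$ transparent. The identity $\langle M^{(\mathtt N)}v,\bar v\rangle=\langle Mv,\bar v\rangle$, for $v$ supported on the first $\mathtt N$ indices, is exactly what transfers \eqref{eqn:M:positive} to the truncated matrix. Equivalently, it transfers \eqref{eqn:M:sqrt:positive}: since $(M^{(\mathtt N)})^{1/2}$ is not the truncation of $M^{1/2}$, some quadratic-form observation of this kind is genuinely needed. The paper leaves this step implicit.

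\textbf{What the paper's route buys.} It is shorter. It would also adapt to variable coefficients or non-periodic domains, where Fourier diagonalization is unavailable. In the present setting, however, yours is the more complete argument.

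In both arguments the constant depends on $\rhob$, and hence possibly on $\delta$. This is harmless, since the lemma only claims independence from $\mathtt N$.
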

\begin{proof}
This a standard elliptic regularity result, given \eqref{eqn:M:sqrt:positive} (see for instance \cite[Part II, Section 6.3, Theorem 2]{Evans})
\end{proof}
Using Lemma \ref{lemma:M:inverse} together with the first equation in \eqref{eqn:modes:couple:finite}, \eqref{eqn:modes:diff:temp2} yields 
\begin{equation}
\label{eqn:modes:diff:temp:3}
\int \sum_{n=1}^{\infty} \left((M - M^{(\mathtt N)}) \partial_x^2 \partial_t V^{(\mathtt N)}\right)_n \tilde{V}_n \leq \frac{C}{\mathtt N} (\Vert V^{(\mathtt N)} \Vert_{\ell^2 H^{2}} + \Vert c_n \partial_x V_n^{(\mathtt N)} \Vert_{\ell^2 H^3} + \Vert \eta^{(\mathtt N)} \Vert_{\ell^2 H^{2}}) \Vert \tilde{V} \Vert_{\ell^2 L^2}.
\end{equation}
We now turn to the second term on the right-hand side of \eqref{eqn:modes:diff:temp}. We use the identity \eqref{eqn:A1plusA2}, and perform an integration by parts to get, for $n \geq \mathtt N+1$ and $m \leq \mathtt N$:
\begin{equation}
\label{eqn:A1A2:ipp}
A^{1,(\mathtt N)}_{n,m} + A^{2,(\mathtt N)}_{n,m}- (A^{1}_{n,m} + A^{2}_{n,m}) = c_n\int_{-H}^0 c_m (\Vb'' f_m \rhob + \Vb' \rhob' f_m + \Vb' \rhob \frac{1}{c_m}g_m ) f_n ,
\end{equation}
where we used \eqref{eqn:def:gn}. We can now use that $c_n \in \ell^2$ and Fubini's theorem together with Cauchy-Schwarz inequality to write, for $n \in \N^*$,
\begin{equation}
\label{eqn:modes:diff:temp:4}
\begin{aligned}
&\sum_{n=\mathtt N+1}^{\infty}\int_{\mathbb T_L}((A^{1,(\mathtt N)}_{n,m} + A^{2,(\mathtt N)}_{n,m}- (A^{1}_{n,m} + A^{2}_{n,m})) \partial_x V^{(\mathtt N)})_n\tilde{V}_n \\
&= \int_{-H}^0 \int_{\mathbb T_L} \left(\sum_{m=1}^{\mathtt N}c_m(\Vb'' f_m \rhob + \Vb' \rhob' f_m + \Vb' \rhob \frac{1}{c_m}g_m ) \partial_x V^{(\mathtt N)}_m\right) \left( \sum_{n=\mathtt N+1}^{\infty} c_n f_n \tilde{V}_n \right)\\
& \leq C\mathtt N  \Vert c_m\partial_x V^{(\mathtt N)}_m \Vert_{\ell^2 L^2}  \frac{1}{\mathtt N} \left\Vert \sum_{n=\mathtt N+1}^{\infty} \tilde{V}_n f_n \right\Vert_{L^2(\mathbb T_L \times [-H,0])} \\
&\leq C \epsilon(\mathtt N),
\end{aligned}
\end{equation}
where we also used $\Vb \in W^{2,\infty}([-H,0])$ and $\rhob \in W^{1,\infty}([-H,0])$.
Plugging \eqref{eqn:modes:diff:temp:3}, \eqref{eqn:modes:diff:temp:4} into \eqref{eqn:modes:diff:temp} (together with similar estimates for the other summands), and using Grönwall's inequality, we get \eqref{eqn:modes:diff} for $s=0$. \\

Now applying $\partial_x^{\tau}$ with $0 \leq \tau \leq s$ to  \eqref{eqn:modes:diff:temp}, we get that 
\begin{equation}
\label{eqn:def:Vtilde:tau}
(\partial_x^{\tau}(V_n - V^{(\mathtt N)}_n), \partial_x^{\tau}(\eta_n - \eta^{(\mathtt N)}_n))
\end{equation} 
also satisfies \eqref{eqn:modes:diff:temp}. Applying \eqref{eqn:nrj:modes:diff} with $s=0$ to these new unknowns and summing over $\tau$ yields \eqref{eqn:nrj:modes:diff} with $s > 0$.
\end{proof}
\subsection{Illustrations}
\label{subsection:illustrations}
In the \nv{absence of} shear flow ($\Vb= 0$) and when the Brunt-Väisälä frequency $N^2$ is a constant, the evolution of the modes is decoupled. Then, the dispersion relation of \eqref{eqn:euler:isopycnal:no_shear} can be computed analytically (see for instance \cite{GerkemaZimmerman2008}). Indeed, for $k \in \frac{1}{L}\mathbb Z$, the set $\mathcal C(k)$ of the phase velocities is 
$$\mathcal C(k) = \bigcup_{n \in \N^*}\{c(k,n), -c(k,n) \}, \qquad \text{ with } c(k,n) = \frac{|k| N}{\sqrt{k^2 + c_n^2}}.$$This leads to the Saint-Andrew's cross pattern, which is observed in laboratory experiments, see \cite{SutherlandLinden1999}. We now check that our numerical strategy reproduces this phenomenon. We consider the density profile
\begin{equation}
\label{eqn:def:rhob:1}
\rhob^{(1)}(r) \coloneq  e^{-2r},
\end{equation}
so that the Brunt-Väisälä frequency $N^2$ is constant. We consider the response of the system \eqref{eqn:modes:couple:finite} with $\mathtt N$ modes to a forcing that is localized in space. More precisely, we consider the system \eqref{eqn:modes:couple:finite} with $\Vb=0$ and an additional source term; dropping the superscripts $(\mathtt N)$ for readability, we get
\begin{equation}
\label{eqn:modes:finite:forcing}
\sys{
\left((1-M\partial_x^2)\partial_t V\right)_n + c_n \partial_x \eta_n &= 0,\\
\partial_t \eta_n + c_n \partial_x V_n &= F_n(t,x),
} \text{ for } n \in \{ 1, \dots, \mathtt N\},
\end{equation}
where the forcing $(F_n)_{n \in \{1, \dots, \mathtt N \}}$ is defining as follows. Let 
\begin{equation}
\label{eqn:def:forcing}
F(t,x,r) \coloneq  0.1\sin(0.4Nt)\left( 1-\tanh(|x|/0.02) \right)\left( 1-\tanh(|r+0.7|/0.02) \right)
\end{equation}
be a forcing that is oscillating in time at the frequency $0.4 N$, where $N$ is the Brunt-Väisälä frequency, and localized around $x=0$ and $r=-0.7$ in space. Then, for $n \in \{1, \dots, \mathtt N \}$, we define its first coefficients $F_n$ as the coefficients along the basis $(f_n)$, so that
$$ F(t,x,r) \approx \sum_{n=1}^{\mathtt N} F_n(t,x)f_n(r).$$
We show the response of the system under the forcing $F$ in Figure \ref{fig:saint_andrew}, showing the deviation $\rho$ from the density profile $\rhob^{(1)}$. We use $\mathtt N = 100$ vertical modes, computed with $\mathtt m_r = 5000$ vertical points, and $\mathtt K = 512$ horizontal Fourier modes. We need an extensive number of modes to accurately represent $F(t, \cdot)$, whose coefficients decay slowly, as it is localized in a small region in space. The time interval we consider is $[0,80]$, discretized with $500$ points; the variable $t$ in Figure \ref{fig:saint_andrew} is renormalized by $80$. We also used $H=L=1$. \\ 
\begin{figure}
\includegraphics[width=0.9\textwidth]{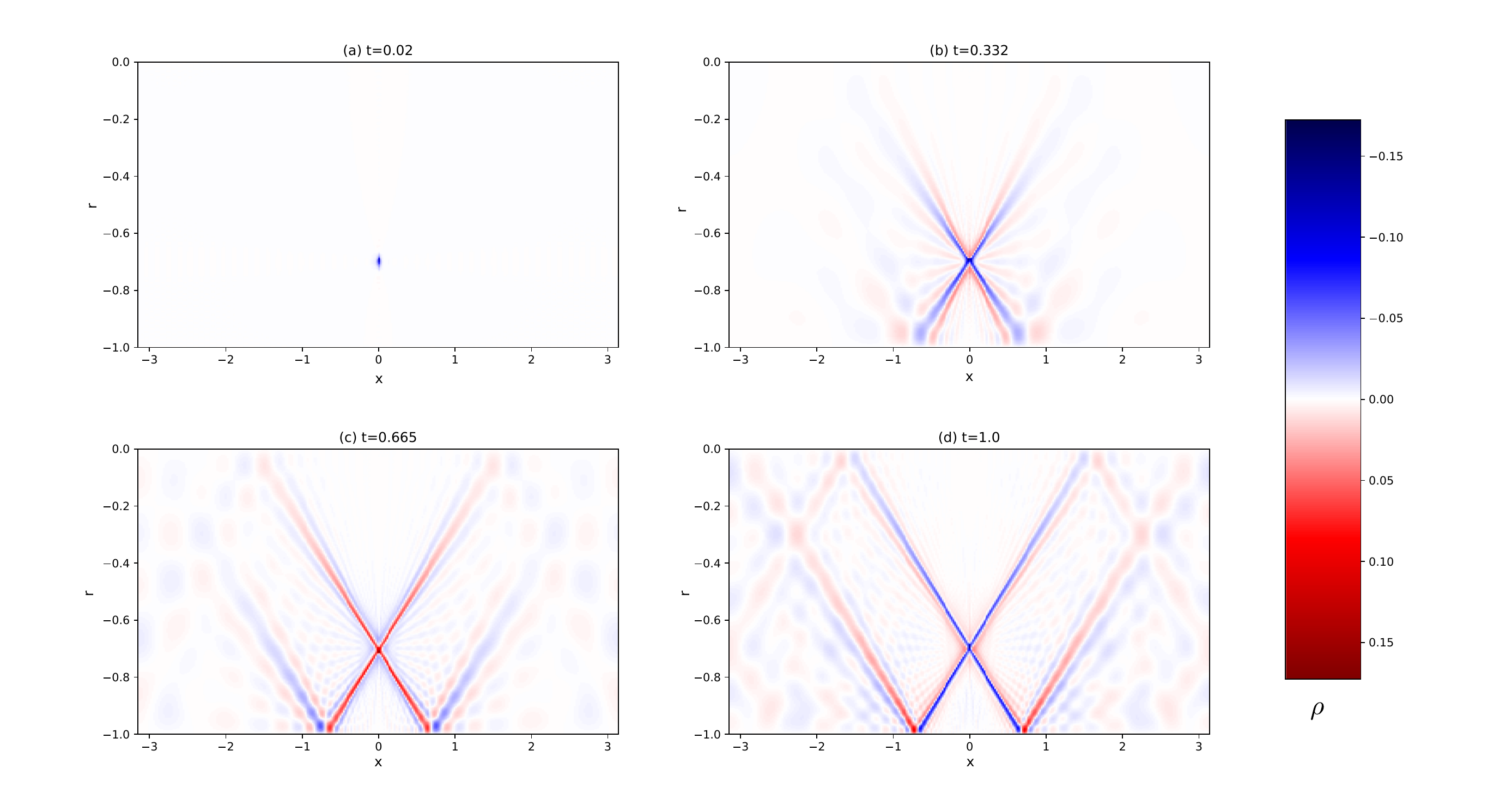}
\caption{\label{fig:saint_andrew} Response of the system \eqref{eqn:modes:finite:forcing} to a localized forcing. Internal waves propagate following Saint-Andrew's cross pattern, in a uniformly stratified fluid. The fluid starts at rest, with an oscillating localized source term at $x=0$ and $r=-0.7$, visible in $(a)$. Colors show the size of the density deviation $\rho$ from the density profile $\rhob^{(1)}$. }
\end{figure}
\FloatBarrier

\nv{Still assuming} $\Vb = 0$, \nv{we now} consider the case where the Brunt-Väisälä frequency $N^2$ is not constant. We rather consider two stably stratified layers, separated by a pycnocline of thickness $\delta > 0$, and use the profile
\begin{equation}
\label{eqn:def:rhob:2}
\rhob^{(2)}_{\delta}(r) \coloneq  0.75 +  0.75\frac{\exp(-2 r)-1 - 5\left(\frac{1}{\pi}\arctan\left(\frac{r+0.5}{0.5\delta}\right)+0.5\right)}{\exp(2) -1 - 5\left(\frac{1}{\pi}\arctan\left(\frac{-1}{\delta}\right) +0.5\right)},
\end{equation}
represented in Figure \ref{fig:waveguide:strat}. The first modes associated to $\rhob^{(2)}_{5.10^{-2}}$ are shown in Figure \ref{fig:waveguide:modes}. The evolution of the first coefficients $(V_n,\eta_n)_{n\in \{1, \dots, \mathtt N\}}$ is given by the system \eqref{eqn:modes:finite:forcing}. In particular, the matrix $M$ is not diagonal and the modes are coupled.\\
\begin{figure}
\centering
\begin{minipage}[t]{0.49\textwidth}
\includegraphics[width = \textwidth]{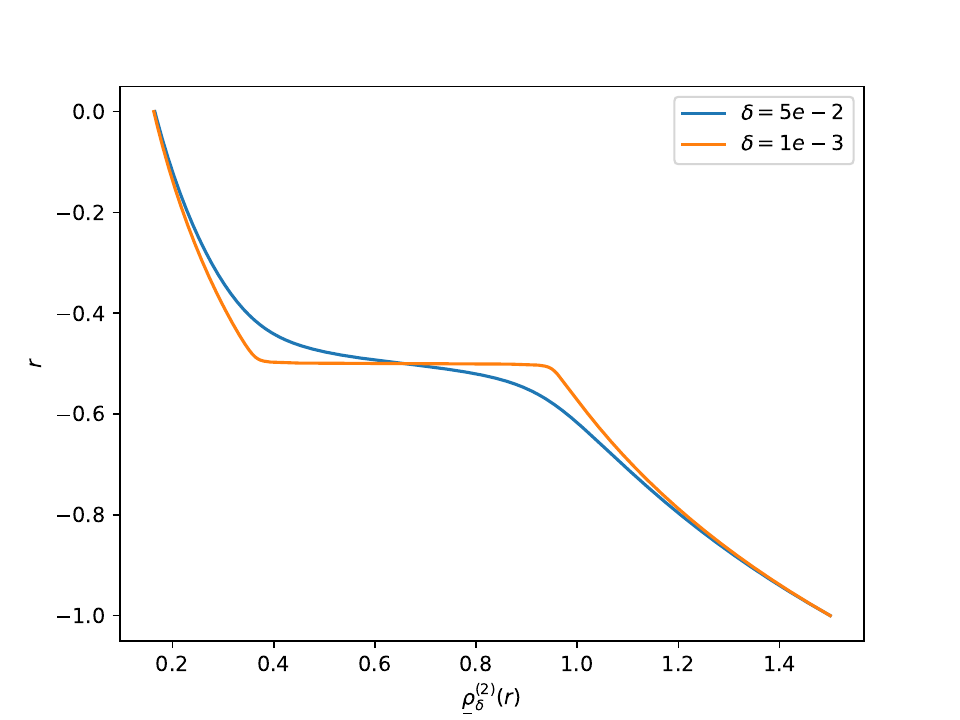}
\captionof{figure}{\label{fig:waveguide:strat}Two stratification profiles with an upper layer ($r > -0.5+\delta$), a pycnocline ($r \in [-0.5-\delta,-0.5+\delta]$) where the density increases sharply and a lower layer ($r < -0.5-\delta$), with $\delta = 5 . 10^{-2}$ and $\delta = 10^{-3}$. Both the upper and lower layers are stably stratified with a roughly constant Brunt-Väisälä frequency.}
\end{minipage}
\begin{minipage}[t]{0.49\textwidth}
\includegraphics[width = \textwidth]{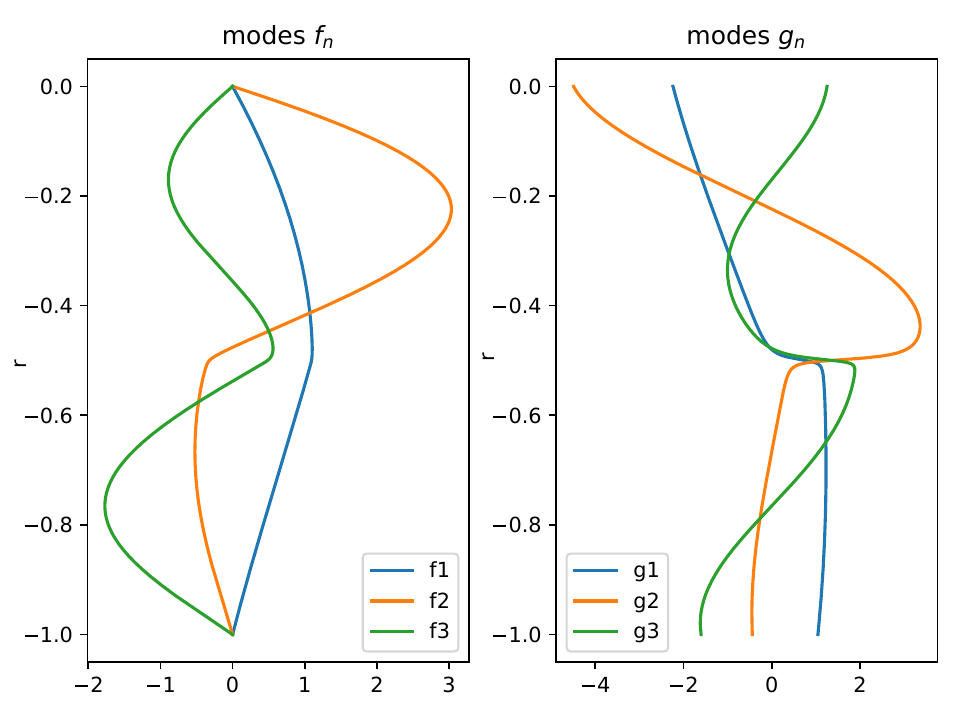}
\captionof{figure}{First modes associated to the stratification profile $\rhob^{(2)}_{\delta}$ with $\delta = 5.10^{-2}$.}
\label{fig:waveguide:modes}
\end{minipage}
\end{figure}
\begin{figure}
\centering
\begin{minipage}[t]{0.49\textwidth}
\includegraphics[width=\textwidth]{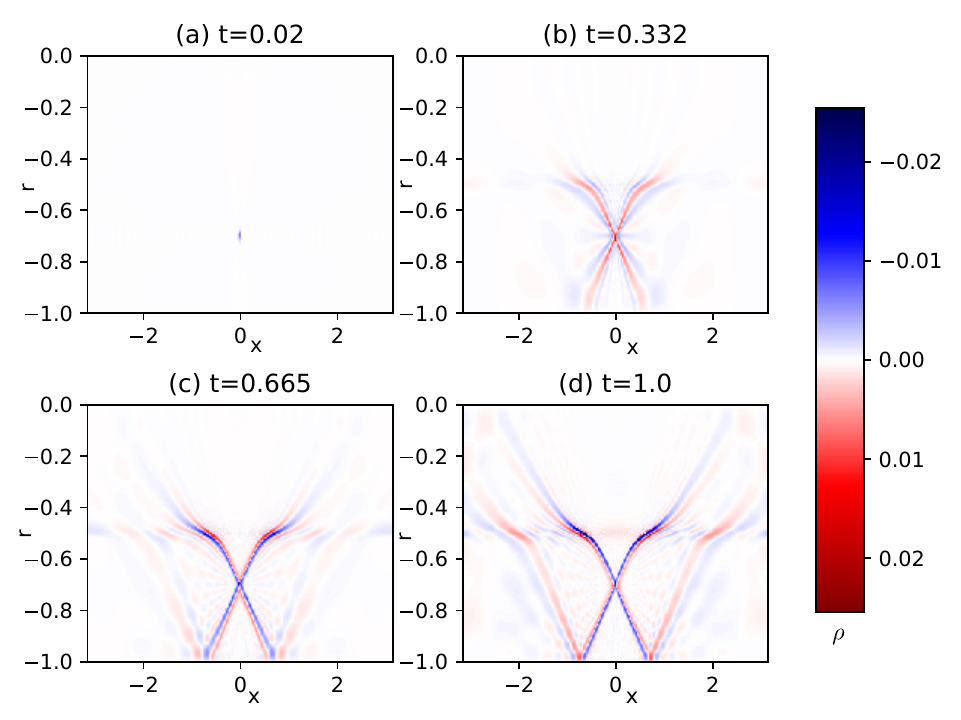}
\captionof{figure}{\label{fig:waveguide:snapshots}Response of the system \eqref{eqn:modes:finite:forcing} to the localized forcing $F$ defined in \eqref{eqn:def:forcing}. Internal waves propagate following Saint-Andrew's cross pattern, in the lower layer, and are deviated in the pycnocline, before being transmitted to the upper layer. The fluid starts at rest, with an oscillating localized source term at $x=0$ and $r=-0.7$, visible in $(a)$. Colors show the size of the density deviation $\rho$ from the density profile $\rhob^{(2)}_{\delta}$ with $\delta = 5.10^{-2}$. }
\end{minipage}
\begin{minipage}[t]{0.49\textwidth}
\includegraphics[width=\textwidth]{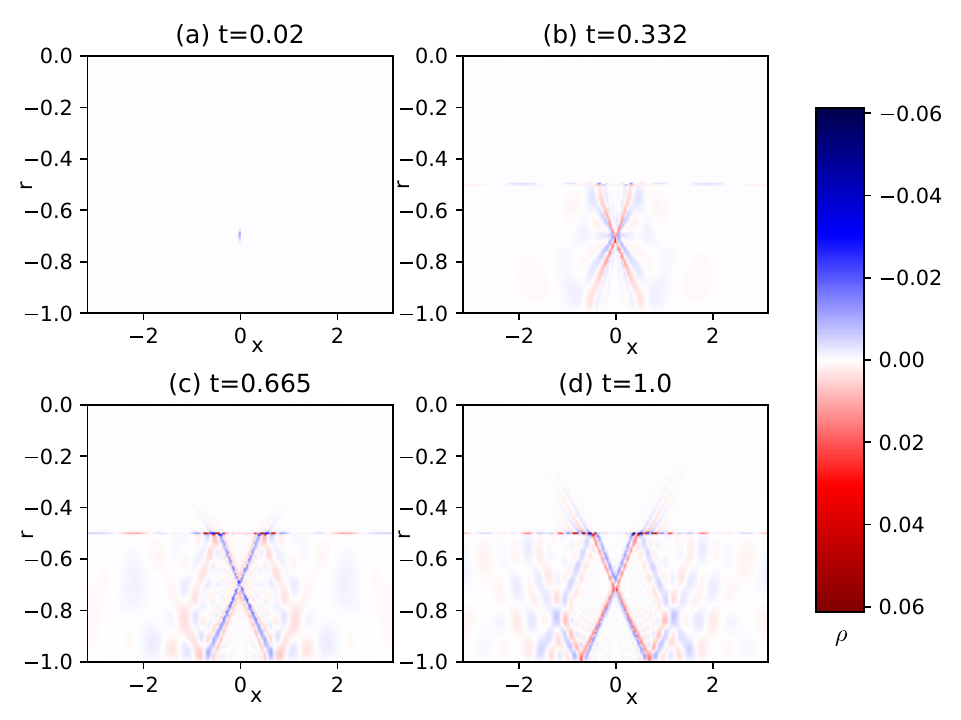}
\captionof{figure}{\label{fig:waveguide:reflection:snapshots}Response of the system \eqref{eqn:modes:finite:forcing} to the localized forcing $F$ defined in \eqref{eqn:def:forcing}. Internal waves propagate following Saint-Andrew's cross pattern, in the lower layer, and are partially reflected on the pycnocline. The fluid starts at rest, with an oscillating localized source term at $x=0$ and $r=-0.7$, visible in $(a)$. Colors show the size of the density deviation $\rho$ from the density profile $\rhob^{(2)}_{\delta}$ with $\delta = 10^{-3}$. }
\end{minipage}
\end{figure}

In Figure \ref{fig:waveguide:snapshots}, the propagation of a deviation of the density from the profile $\rhob^{(2)}_{\delta}$ with $\delta = 5.10^{-2}$ shown in Figure \ref{fig:waveguide:strat} is represented. Except for the density profile, the other parameters are the same as for the simulation shown in Figure \ref{fig:saint_andrew}. The fluid is at rest at the initial time, but is excited by the source term $F$ defined in \eqref{eqn:def:forcing}, localized in space and oscillating in time, visible in Figure \ref{fig:waveguide:snapshots}(a). Then the propagation follows Saint-Andrew's cross pattern in the lower layer, which is stably stratified and with Brunt-Väisälä frequency roughly constant, see Figure \ref{fig:waveguide:snapshots}(b). When the perturbation reaches the pycnocline, it is strongly deviated along the pycnocline, which acts as a waveguide, see Figure \ref{fig:waveguide:snapshots}(c). Finally some of the perturbation is transmitted in the upper layer (see Figure \ref{fig:waveguide:snapshots}(d)), but with a lower intensity. Note also the reflection of the waves against the bottom, visible in Figure \ref{fig:waveguide:snapshots}(c) and (d). However, no reflection against the pycnocline is observed, as the pycnocline is too thick, i.e. the transition between the two layers is not sharp enough. This is in contrast with the simulations from Figure \ref{fig:waveguide:reflection:snapshots}, performed with $\delta = 10^{-3}$, with the same parameters otherwise. Then, the pycnocline is thinner (see Figure \ref{fig:waveguide:strat}), and some reflection against the pycnocline is observed in Figure \ref{fig:waveguide:reflection:snapshots}(c) and (d). This is consistent with the simulations from \cite[Section 5]{GerkemaZimmerman2008} performed in the bilayer setting (corresponding to $\delta = 0$, i.e. when the pycnocline is approximated by a jump in the density profile).

\subsection{Numerical scheme for the irrotational bilayer Euler equations}
\label{subsection:num:bl}
We now explain how we discretize the irrotational bilayer Euler equations \eqref{eqn:interfacial_waves}. Let $\mathtt K \in \N^*$.
\paragraph{\bf Initial conditions and time evolution}
On the one hand, we compute solutions to \eqref{eqn:interfacial_waves} for $k \in \{ -\mathtt K , \dots, \mathtt K\}$, in the case where there is no shear flow (i.e. $v=0$). To this end, let $\zeta_{\ini}, \psi_{\ini}$ be two functions defined on $\mathbb T_L$; we start by computing their first Fourier coefficients $(\hat{\zeta}_k, \hat{\psi}_{k})$, for $|k| \leq \mathtt K$. Then, we \nouveau{could} solve the ODEs \eqref{eqn:interfacial_waves} with initial conditions $(\hat{\zeta}_k, \hat{\psi}_{k})$ with a Runge-Kutta method of order 4, for $|k|\leq \mathtt K$\nouveau{, as for our numerical strategy to solve the stratified Euler equations, described in Section \ref{subsection:num_scheme_description}. However, in view of the explicit formulas \eqref{eqn:interfacial_waves}, it is more efficient to integrate directly, and to get, for $t \geq 0$ and $\Delta_t > 0$:
$$\begin{pmatrix}
\hat{\zeta}_k(t+\Delta_t) \\ \hat{\psi}_k(t+\Delta_t)
\end{pmatrix} = \exp \left( \Delta_t 
\begin{pmatrix}
-ik d(k) & b(k) \\
-a(k) & -ikd(k)
\end{pmatrix} \right) \begin{pmatrix}
\hat{\zeta}_k(t) \\ \hat{\psi}_k(t)
\end{pmatrix},$$
where the exponential of the matrix can be computed numerically.}

\paragraph{\bf Dispersion relation}
Let $k \in \N$. We look for solutions to \eqref{eqn:interfacial_waves}, with $v \neq 0$ a priori, of the form $e^{ik(x-c(k)t)}$. We get that the set of phase velocities of \eqref{eqn:interfacial_waves} at frequency $k$ is
\begin{equation}
\label{eqn:dispersion:ebc:1}
\mathcal C_0(k) \coloneq  \{c(k), \tilde{c}(k)\},
\end{equation}
where $c(k)$ and $\tilde{c}(k)$ are the two (not necessarily distinct) complex solutions to
\begin{equation}
\label{eqn:dispersion:ebc:2}
(c - d(k))^2 = a(k) b(k)/|k|^2.
\end{equation}

\section{The sharp stratification limit}
\label{section:sharp_strat}
In this section we investigate the link between the stratified Euler equations, with a pycnocline of size $\delta > 0$, and the bilayer equations, both with and without a shear flow $\Vb$. 
More precisely, we specify the background stratification to one of the form
\begin{equation}
\label{eqn:def:sharp_strat}
\rhob_{\delta}(r) \coloneq  \rho_- + (\rho_+ - \rho_-) \left(\frac{1}{\pi}\arctan((r-r_*)/\delta) + \frac12\right),
\end{equation}
for some $r_* \in (-H,0)$, $\delta > 0$ and $0 < \rho_+ < \rho_-$. As $\delta \to 0$, the density \nv{approaches} the constants $\rho_+$ in the upper layer ($r > r_* + \delta$) and $\rho_-$ in the lower layer ($r < r_* - \delta$), and varies continuously but \nv{rapidly} in the pycnocline ($r \in [r_*-\delta, r_* + \delta]$)\nouveau{, see Figure \ref{fig:sharp_strat}}. The statements in Section \ref{subsection:sharp_strat:sans_shear} extend to more general profiles, see Remark \ref{rk:rhobdelta_plus_gen}. We then study \eqref{eqn:euler:isopycnal:Vb} with $\rhob = \rhob_{\delta}$.\\
\begin{figure}
\includegraphics[width=0.4\textwidth]{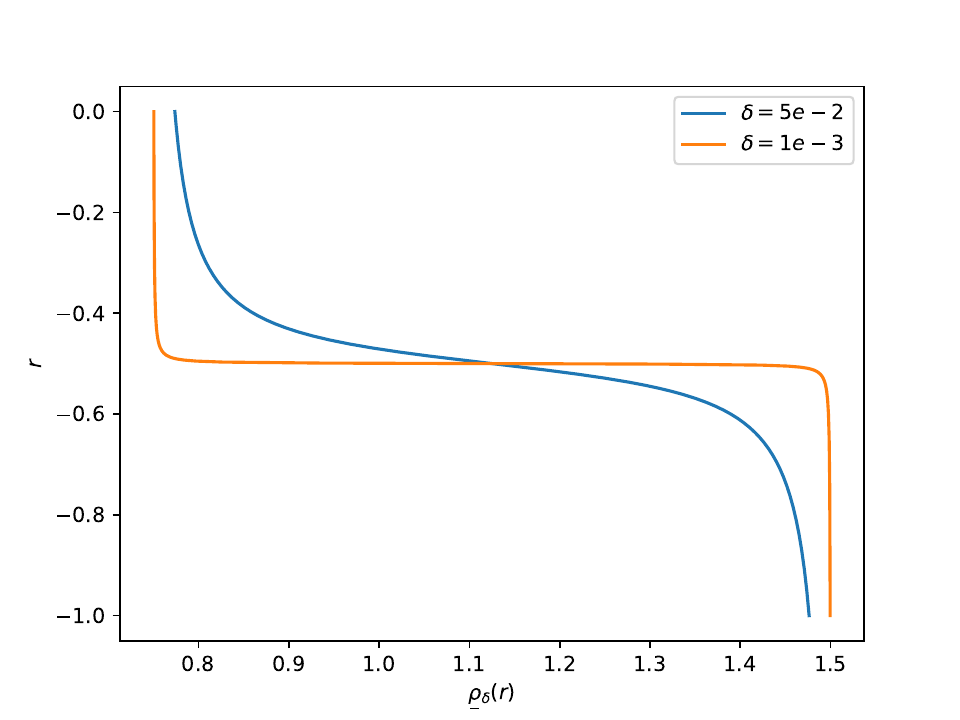}
\caption{The sharp stratification profile $\rhob_{\delta}$, used in this section. Here, $\rho_+ = 0.75$, $\rho_- = 1.5$. }
\label{fig:sharp_strat}
\end{figure}

We start with the case $\Vb =0$ in Section \ref{subsection:sharp_strat:sans_shear}. In this setting, we first show that from the system \eqref{eqn:euler:isopycnal:no_shear}, we can derive suitable energy estimates, which are uniform in $\delta$. This allows us to compare the solutions of \eqref{eqn:euler:isopycnal:no_shear} in this setting with the solutions of the linear bilayer Euler equations without shear flow \eqref{eqn:euler:bilayer} in Proposition \ref{prop:cvce}. The latter equations are well-posed (recall Proposition \ref{prop:nrj:bilayer}). The proofs of these results are postponed to  Appendix~\ref{section:apdx:preuve}. We also illustrate numerically this convergence, both on the solution itself but also on the modes $f_n$ and their speed $c_n$ defined in Section \ref{section:normal_modes}. \\
In Section \ref{subsection:sharp_strat:shear}, we specify a continuous profile $\Vb_{\delta}$ that approximates a piecewise constant shear flow, see \eqref{eqn:def:sharp:shear}. Then, one might expect that the system \eqref{eqn:euler:isopycnal:Vb}
 is well approximated by the bilayer Euler equations linearized around a shear flow. However, these equations are ill-posed in Sobolev spaces due to the Kelvin-Helmholtz instabilities (see for instance \cite{IguchiTanakaTani1997}, \cite{LebeauKamotski2005} for a setting close to ours, as well as \cite[Chapter A, Section 3.5]{Duchene2022a} and references within). \nv{Therefore}, the analysis of Section \ref{subsection:sharp_strat:sans_shear} for comparing solutions of both systems cannot be performed in Section \ref{subsection:sharp_strat:shear}. We thus study numerically the dispersion relation of \eqref{eqn:euler:isopycnal:Vb} with the help of the modal decomposition defined in Section \ref{subsection:normal_modes}.
Finally, we emphasize in Section \ref{subsection:sharp_strat:shear:csq} that the numerical evidence displayed in Section \ref{subsubsection:sharp_strat:shear:quantitative} prevents the standard method from Section \ref{subsection:sharp_strat:sans_shear} in the case $\Vb =0$ to be applied in the case $\Vb =\Vb_{\delta}$ to justify bilayer models.


\subsection{The stable case : $\Vb = 0$}
\label{subsection:sharp_strat:sans_shear}
In this setting of a sharp stratification, one expects that the motion of the fluid density can be described solely by the motion of the interface between the upper layer and the lower layer, whose evolution follows the bilayer Euler equations, see Section \ref{subsection:ebc:no_shear}. In the present section, we study the convergence from the stratified Euler equations to the bilayer equations. We state in Proposition \ref{prop:cvce} that the solution of the bilayer Euler equations is a good approximation of the stratified Euler equations \eqref{eqn:euler:isopycnal:no_shear} relative to the small parameter $\delta$, which is the size of the pycnocline. This quantitative comparison of solutions of the stratified Euler equations and the linear bilayer Euler equations is referred to as {\it full justification} - following for instance the nomenclature detailed in \cite[Appendix C.1]{Lannes2013}. The proof of Proposition \ref{prop:cvce} is postponed to Appendix \ref{section:apdx:preuve}.\\

Let us mention once again that we assume that there is no shear flow ($\Vb = 0$) in this section. The reason is that the bilayer Euler equations linearized around a non-zero shear flow are ill-posed in the present setting of Sobolev spaces, see Section \ref{subsection:sharp_strat:shear}. This restricted framework is also the reason why the well-posedness of the equations \eqref{eqn:euler:bilayer} and \eqref{eqn:euler:isopycnal:no_shear} is global, as opposition to the local well-posedness results for the non-linear stratified Euler equations from \cite{Desjardins2019, Duchene2022,Fradin2024}.\\

The following result is the convergence of the solution of \eqref{eqn:euler:isopycnal:no_shear} towards the solution of \eqref{eqn:euler:bilayer}, both stemming from the same initial data, as $\delta \to 0$. Indeed, in Proposition \ref{prop:nrj} on the well-posedness of the stratified Euler equations \eqref{eqn:euler:isopycnal:no_shear}, we allow $V$ to be discontinuous so that the initial data \eqref{eqn:euler:bilayer:ci} are admissible for the system \eqref{eqn:euler:isopycnal:no_shear}. In order to compare both systems, we define 
\begin{equation}
\label{eqn:def:etastar}
\eta_* \coloneq  \eta_{r=r_*}.
\end{equation}
We now state the following proposition, whose proof is postponed to Appendix \ref{subsection:strat_to_bilayer}.
\begin{proposition}
\label{prop:cvce}
Let $s \geq 0$,  $(V_{\ini},w_{\ini}, \eta_{\ini}) \in H^{s+3,0}(S) \times H^{s+3,1}(S) \times H^{s+3,1}(S)$ satisfying the incompressibility condition in \eqref{eqn:euler:isopycnal:no_shear} and the boundary conditions \eqref{eqn:euler:iso:bc}. Then there exists a unique solution 
$$(V_{\delta},w_{\delta},\eta_{\delta}) \in C^0([0,\infty), H^{s+3,0}(S) \times H^{s+3,1}(S) \times H^{s+3,1}(S))$$
to \eqref{eqn:euler:isopycnal:no_shear} with the density profile $\rhob_{\delta}$ given in \eqref{eqn:def:sharp_strat}, with boundary conditions \eqref{eqn:euler:iso:bc} and initial conditions $(V_{\ini},w_{\ini}, \eta_{\ini})$.\\

Moreover, let $\zeta_{\ini} \coloneq  \eta_{\ini |r=r_*}$, and 
$$(V_{\bl},w_{\bl},\zeta) \in C^0([0,\infty),H^{s+3,0}(S)\times H^{s+3,1}(S) \times H^{s+3}(\mathbb T_L))$$
be the solution of the bilayer Euler equations \eqref{eqn:euler:bilayer} with boundary conditions \eqref{eqn:euler:bilayer:bc}, with the initial data $(V_{\ini},w_{\ini}, \zeta_{\ini})$ \nv{provided} by Proposition \ref{prop:nrj:bilayer}. Then there exists $C > 0$, which depends only on an upperbound of the $H^{s+3}$ norm of $(V_{\ini},w_{\ini}, \eta_{\ini})$, as well as $\rho_-, \rho_+,H,L$, such that for any $t \geq 0$:
\begin{equation}
\label{eqn:erreur:bl}
\Vert (V_{\delta} - V_{\bl})(t,\cdot), (w_{\delta} - w_{\bl})(t,\cdot)\Vert_{H^{s,0}} + \vert \zeta(t,\cdot) - \eta_{\delta|r=r_*}(t,\cdot) \vert_{H^{s}} \leq C \delta^{\frac12} |\log \delta | (1+t).
\end{equation}
\end{proposition}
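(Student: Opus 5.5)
Our plan is to treat the existence statement and the convergence estimate separately; both rest on energy estimates for \eqref{eqn:euler:isopycnal:no_shear} that are uniform in $\delta$.

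\textbf{Existence.} This part follows the proof of Proposition \ref{prop:nrj}. Since the system has constant coefficients in $x$, we may apply $\partial_x^k$ for $k\le s+3$ and use the conserved energy of Remark \ref{rk:nrj:stab}, namely
\[
\int \rhob_\delta(|V|^2+|w|^2)+g|\rhob_\delta'||\eta|^2 .
\]
Vertical regularity of $w$ comes from the divergence-free condition $\partial_r w=-\partial_x V$. Vertical regularity of $\eta$ comes from $\partial_t\eta=w$, integrated in time; this produces at most linear growth.

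The weights, however, are not uniform in $\delta$. We have $\rhob_\delta\in[\rho_+,\rho_-]$, which is harmless, but $|\rhob_\delta'|\sim 1/\delta$ inside the pycnocline and $\lesssim \delta$ outside it. The $\eta$-part of the energy therefore controls $\eta$ only in a degenerate way. We avoid this by controlling $\eta$ through $\eta(t)=\eta_{\ini}+\int_0^t w$, which is where the factor $(1+t)$ comes from. We also control $V,w$ in $H^{s+3,0}\times H^{s+3,1}$ by the energy plus the uniform bound $\int g|\rhob_\delta'||\eta_\ini|^2\le C\|\eta_\ini\|_{H^{0,1}}^2$. This bound holds because $\int|\rhob_\delta'|=\rho_--\rho_+$ and $\eta_\ini\in L^\infty_r$ by the $H^1_r$ Sobolev embedding.

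\textbf{Convergence.} We compare both systems on the common domain $S$. We write the bilayer system as the stratified system \eqref{eqn:euler:isopycnal:no_shear} with density $\rho_0:=\rho_\pm$, a step function. Its buoyancy term $-g\rhob_0'\eta/\rhob_0$ is then a Dirac mass $g(\rho_--\rho_+)\delta_{r=r_*}\zeta/\rho_\pm$. In weak form this is exactly the jump condition for the pressure: with $P_{\bl}$ continuous, the term $g\partial_x\zeta$ in \eqref{eqn:euler:bilayer} corresponds to a hydrostatic pressure jump $g(\rho_--\rho_+)\zeta$.

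Next we set $\tilde V=V_\delta-V_{\bl}$, $\tilde w=w_\delta-w_{\bl}$, and introduce the extension $\eta_{\bl}(t,x,r):=\zeta(t,x)$. The difference satisfies the $\delta$-system with source terms of two types:
\begin{enumerate}
\item $(1/\rhob_\delta-1/\rho_\pm)\nabla P_{\bl}$;
\item the difference between $g|\rhob_\delta'|\eta_\delta/\rhob_\delta$ and the Dirac term.
\end{enumerate}
We would test these against $(\tilde V,\tilde w)$ in the $\rhob_\delta$-weighted energy. The first type has $L^1_r$-size
\[
\int|\rhob_\delta-\rho_\pm|\,dr\sim\delta|\log\delta|,
\]
because the arctan tail decays like $\delta/|r-r_*|$. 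Its $L^2_r$-size is $\delta^{1/2}$. Pairing with $\tilde V\in L^2$ while using the extra regularity of $P_{\bl}$ should give $\delta^{1/2}|\log\delta|$.

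The second type pairs with $\tilde w$. We split it as
\[
|\rhob_\delta'|\bigl(\eta_\delta(r)-\eta_\delta(r_*)\bigr)+\bigl(|\rhob_\delta'|-(\rho_--\rho_+)\delta_{r_*}\bigr)\eta_\delta(r_*) + (\text{a Dirac term with }\eta_*-\zeta).
\]
Integrated against $\tilde w$, the first piece is bounded by $\|\partial_r\eta_\delta\|_{L^\infty}\int|\rhob_\delta'||r-r_*|\,dr\lesssim\delta|\log\delta|$. The second piece is bounded by $\|\partial_r\tilde w\|\cdot\int|\rhob_\delta'||r-r_*|$. The third piece is treated in a combined energy: we add $g(\rho_--\rho_+)|\eta_*-\zeta|^2$ to the energy. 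Its time derivative $\tilde w_{|r=r_*}$ cancels with the Dirac pairing, exactly as in the bilayer energy of Proposition \ref{prop:nrj:bilayer}.

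Finally, Grönwall is not needed because the system is conservative. Integrating the source bounds in time gives $C\delta^{1/2}|\log\delta|(1+t)$. We apply $\partial_x^k$, $k\le s$, for the $H^s$ version; the loss of three derivatives is absorbed by the bounds on $\partial_r\eta_\delta$ in $L^\infty$ and on the pressures.

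\textbf{Main obstacle.} We expect the hardest step to be the uniform-in-$\delta$ control of the pressure $P_\delta$ and of $\partial_r\eta_\delta$ near $r_*$. The pressure solves a divergence-form elliptic equation with coefficient $1/\rhob_\delta$ and a source supported in the thin layer, so its $H^{s,1}$ bounds must be derived uniformly. We would first try $x$-Fourier analysis, treating each mode as a one-dimensional Sturm–Liouville problem in $r$ with bounded coefficients. This yields only $H^1_r$ bounds, which is why we expect the rate to be $\delta^{1/2}$ rather than $\delta$.
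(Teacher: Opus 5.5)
Your architecture is the paper's. It is an $L^2$ energy estimate for $(\tilde V,\tilde w,\tilde\eta_*)$, with $\eta_{\delta,*}:=\eta_{\delta|r=r_*}$ and $\tilde\eta_*:=\eta_{\delta,*}-\zeta$. The interface energy $g(\rho_--\rho_+)|\tilde\eta_*|^2$ absorbs the pairing with $\tilde w_{|r=r_*}$. The sources come from $\rhob_\delta-\rho_{\bl}$ in the pressure term and from the buoyancy remainder $\rhob_\delta'(\eta_\delta-\eta_{\delta,*})$. The paper makes your Dirac-mass bookkeeping rigorous through the continuous pressure $\check P=P-g\rhob_\delta\eta_*$. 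The existence part is fine, including your observation that $\int g|\rhob_\delta'||\eta_{\ini}|^2$ is bounded uniformly in $\delta$.

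\textbf{The gap is in the buoyancy remainder.} For piece (i) you use $\|\partial_r\eta_\delta\|_{L^\infty}$, and for piece (ii) you use $\partial_r\tilde w\in L^\infty_r$, each against the weight $|r-r_*|$. Neither is available under the hypotheses. The data give only $\partial_r\eta_{\ini}\in L^2_r$ and no vertical regularity of $V_{\ini}$. Since $\partial_t\partial_r\eta_\delta=-\partial_x V_\delta$ and $\partial_r\tilde w=-\partial_x\tilde V$, both quantities are only $L^2$ in $r$. Horizontal derivatives never give $L^\infty_r$ control, so the loss of three derivatives cannot be spent this way. What you do have is \eqref{eqn:nrj:delta:dreta}. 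Hence only $|\eta_\delta-\eta_{\delta,*}|\le|r-r_*|^{1/2}\|\partial_r\eta_\delta\|_{L^2_r}$ holds, and the relevant weight is $\int_{-H}^0|\rhob_\delta'||r-r_*|^{1/2}\,dr$.

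This is the key step of the paper, \eqref{eqn:nrj:diff:temp:11}. Using only $y^{3/2}\rhob'(y)\in L^\infty$ and the choice $\epsilon=1/|\log\delta|$, it gives $\|\rhob_\delta'(\eta_\delta-\eta_{\delta,*})\|_{L^1}\lesssim\delta^{1/2}|\log\delta|(1+t)$. For the arctan profile the weight is even $O(\delta^{1/2})$. So it is this term, not the pressure, that fixes the rate, and your claimed $O(\delta|\log\delta|)$ for it is unfounded. For piece (ii) the paper avoids $\partial_r\tilde w$ altogether. In \eqref{eqn:euler:diff} the mismatch appears as $g(\rho_{\bl}/\rhob_\delta-1)\partial_x\tilde\eta_*$ in the $V$-equation, and it is paired with $\tilde V$ at cost $|\rhob_\delta-\rho_{\bl}|_{L^2_r}\lesssim\delta^{1/2}$.

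\textbf{Two further corrections.} First, once this term is only $O(\delta^{1/2})$, your step ``integrating the source bounds in time'' needs every pairing to be bounded linearly by $\|\tilde V\|_{L^2}+\|\tilde w\|_{L^2}+|\tilde\eta_*|_{L^2}$. A bound $C\delta^{1/2}$ obtained from mere uniform bounds on $\tilde w$ controls the squared energy, so it yields only $\delta^{1/4}$ for the norm. The factor $\|\tilde w\|_{L^\infty}$ in \eqref{eqn:nrj:diff} raises the same issue. The fix is as follows. Since $\tilde w$ is continuous at $r_*$ and vanishes at $r=-H$, you can write $\tilde w=-\partial_x\int_{-H}^r\tilde V$ and integrate by parts in $x$:
\[
\Bigl|\int_S g\rhob_\delta'(\eta_\delta-\eta_{\delta,*})\,\tilde w\Bigr|\le g\sqrt{H}\Bigl(\int_{-H}^0|\rhob_\delta'||r-r_*|^{1/2}\,dr\Bigr)\|\partial_x\partial_r\eta_\delta\|_{L^2(S)}\|\tilde V\|_{L^2(S)} .
\]
This gives the $\delta^{1/2}|\log\delta|$ rate at the price of one horizontal derivative on $\eta_\delta$, which the $H^{s+3}$ assumption affords.

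Second, the obstacle you single out does not arise. The pressure difference is continuous across $r_*$, and $(\tilde V,\tilde w)$ is divergence free with $\tilde w=0$ at $r=0$ and $r=-H$. So $P_\delta$ drops out of the energy identity entirely. The only pressure left in the sources is $P_{\bl}$, which is controlled independently of $\delta$ by \eqref{eqn:P:bilayer:estimates}. No Sturm--Liouville analysis of $P_\delta$ is needed.
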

\begin{remark}
Proposition \ref{prop:cvce} holds for more general profiles than $\rhob_{\delta}$ given in \eqref{eqn:def:sharp_strat}, see Remark \ref{rk:rhobdelta_plus_gen}.
\end{remark}
\begin{remark}
\label{rk:cvce:perte}
The constant $C$ in Proposition \ref{prop:cvce} depends on an upper bound on the $H^{s+3,0}(S)$-norm of the unknowns $V,w,\eta$ and $V_{\bl}, w_{\bl}, \eta_{\bl}$ and the $H^s(\mathbb T_L)$-norm of $\zeta$. However, we only estimate the difference of the two solutions in $H^{s,0}$ norm. This is due to the fact that we reformulate the systems \eqref{eqn:euler:isopycnal:no_shear} and \eqref{eqn:euler:bilayer} in such a way that they share a common structure, see \eqref{eqn:euler:diff}. We then allow a loss of derivatives in order obtain a better dependency of the estimates with respect to $\delta$. 
\end{remark}

\nv{We now} illustrate the convergence stated in Proposition \ref{prop:cvce}. In Figure \ref{fig:strat_to_bc:1e2}, we plot the density computed from the stratified Euler equations with the profile $\rhob_{\delta}$ described in \eqref{eqn:def:sharp_strat}, with $ \delta = 10^{-2}$ (as well as $r_*=-0.5$, $\rho_- = 1.5$, $\rho_+=0.75$), for four different times. We use $\mathtt N=40$ vertical modes, computed on a vertical grid with $\mathtt m_r=50001$ points, and  $\mathtt m_x=32$ Fourier modes, on a time interval $[0,T]$ with $T=10$ discretized with $\mathtt m_t = 2000$ points. We use $H=L=1$. For the initial data, we consider no initial velocity, and choose for $\eta_{\ini}$:
\begin{equation}
\label{eqn:eta_in}
\eta_{\ini}(x,r) = 0.4r(r+1)\left((\frac{2}{\pi}(\arctan((x+0.8)/0.2)-1)\mathbb{1}_{\R_-}(x) + (-1-\frac{2}{\pi}\arctan((x-0.8)/0.2))\mathbb{1}_{\R_+}(x)  +2)\right),
\end{equation}
where $\mathbb{1}_{\R_+}(x) = 1$ if $x > 0$ and $0$ otherwise (and $\mathbb{1}_{\R_-}(x) = 1- \mathbb{1}_{\R_+}(x)$).
We superimpose the interface computed from the irrotational bilayer Euler equations, computed with the method described in Section \ref{subsection:num:bl}. We choose zero initial condition for the velocity potential $\psi$, and 
$$ \zeta_{\ini} \coloneq  \eta_{\ini|r=-0.5}.$$

\nv{We} define the error between the interface $\zeta$ computed from the bilayer Euler equations and the isopycnal $\eta_*$ of density $\rhob(r_*)$, computed from the stratified Euler equations with $\delta > 0$\nv{, by}
\begin{equation}
\label{eqn:error:bc}
\mathrm{Err} \coloneq  \Vert \zeta - \eta_* \Vert_{L^{\infty}([0,T]\times \mathbb{T}_L)},
\end{equation}
for some $T > 0$. Note that our numerical strategy defines (the approximations of) $\zeta$ and $\eta_*$ as piecewise constant in time and continuous in space functions, so that we can compute $\mathrm{Err}$ using \eqref{eqn:error:bc} numerically. In Figure \ref{fig:erreur_bc}, we plot the logarithm of Err versus the logarithm of $\delta$ for $\delta$ between $10^{-2}$ and $6.10^{-3}$. Apart from $\delta$, the other parameters are the same as for Figure \ref{fig:strat_to_bc:1e2}. We observe a clear decrease of the error, \nouveau{at a faster rate than the theoretical upper bound from Proposition \ref{prop:cvce}. However, independently of the (high enough) number of modes and time and space resolutions, the points are not well aligned so that the measured rate (roughly $\delta^{0.59}$) should be only be interpreted as an indicative value. Moreover, the apparent discrepancy between the measured rate ($\delta^{0.59}$) and the theoretical upper bound from Proposition \ref{prop:cvce} ($\delta^{0.5} |\log \delta |$) may reflect either that the initial data \eqref{eqn:eta_in} does not realize the optimal convergence rate, or that the upper bound from Proposition \ref{prop:cvce} is not optimal.}\\

\nv{We} now give some numerical results on the modes from the Sturm-Liouville problem \eqref{eqn:SL}. The speeds $c_1,c_2,c_3$ associated to the first three modes are computed in Figure \ref{fig:c}, for $\delta$ between $10^{-3}$ and $10^{-4}$. We observe that $c_1$ is independent of $\delta$, in accordance with \cite[Proposition 6]{Desjardins2019}. It appears that $c_n$ decays as $\sqrt{\delta}$ for $n \geq 2$.   
\begin{figure}
\begin{minipage}[t]{0.59\textwidth}
\vspace*{0pt}
\includegraphics[width = \textwidth]{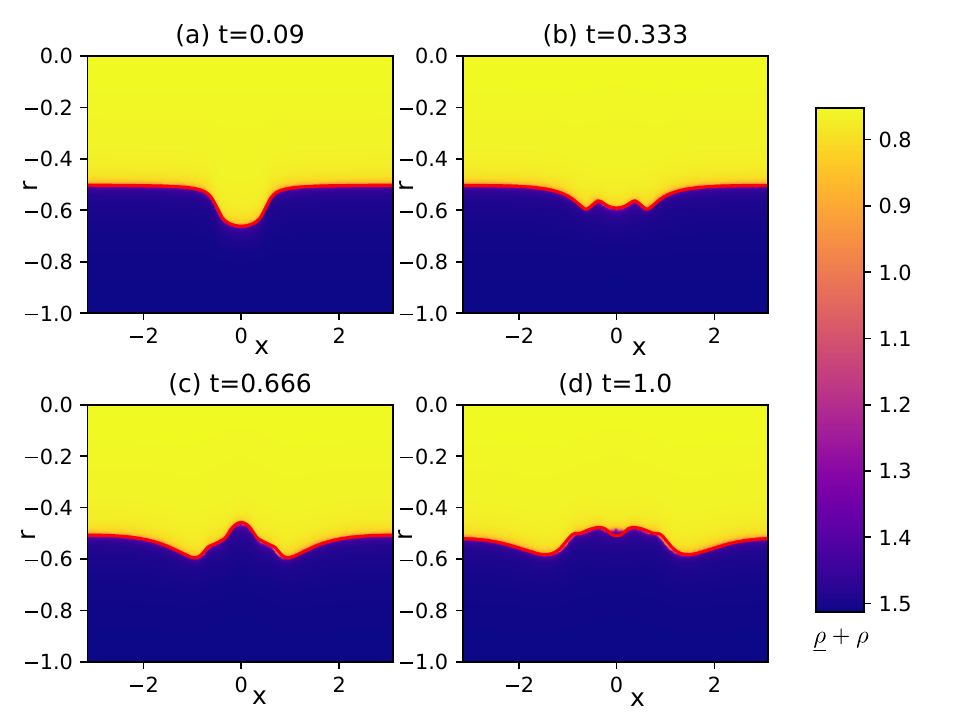}
\captionof{figure}{\label{fig:strat_to_bc:1e2}Comparison of the solutions of the linear stratified Euler equations \eqref{eqn:euler:isopycnal:no_shear} with $\Vb = 0$ and with the sharp stratification profile \eqref{eqn:def:sharp_strat} with $\delta = 10^{-2}$ and the bilayer Euler equations \eqref{eqn:euler:bilayer}. The colors show the value of the density computed from the solution of the stratified Euler equations, and the red line is the interface computed from the bilayer Euler equation. The interface is an approximation of the diffuse interface, when $\delta$ is small enough.}
\end{minipage}
\begin{minipage}[t]{0.39\textwidth}
\vspace*{0pt}
\includegraphics[width = \textwidth]{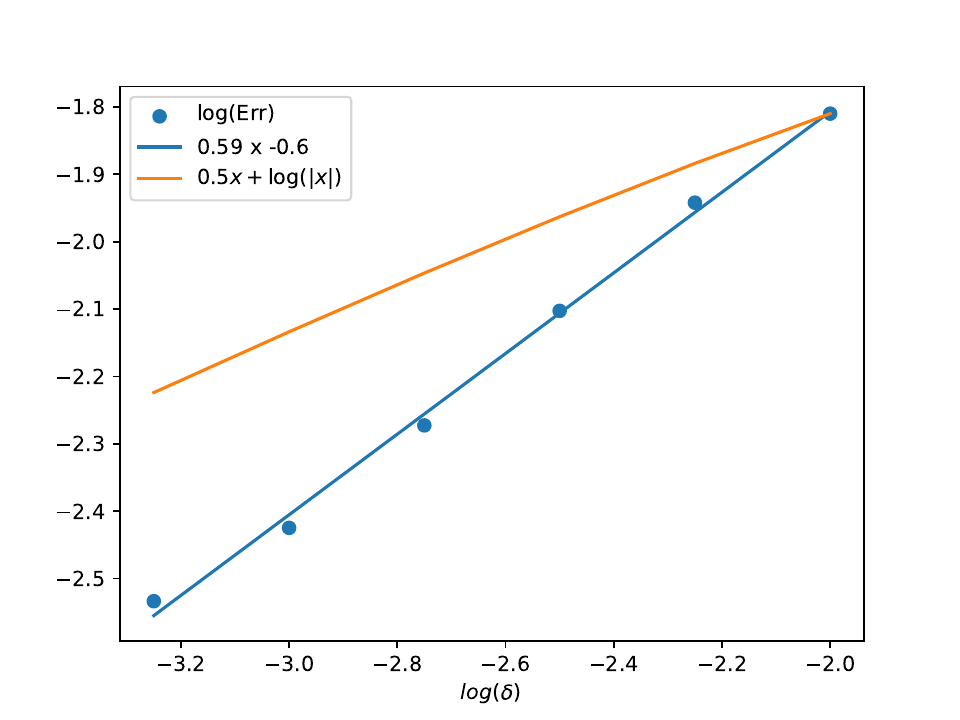}
\captionof{figure}{\label{fig:erreur_bc} $\log(\mathrm{Err})$ versus $\log(\delta)$\nouveau{, compared with the theoretical upper bound from Proposition~\ref{prop:cvce}}.}
\end{minipage}
\end{figure}
\begin{figure}
\includegraphics[width = 0.6\textwidth]{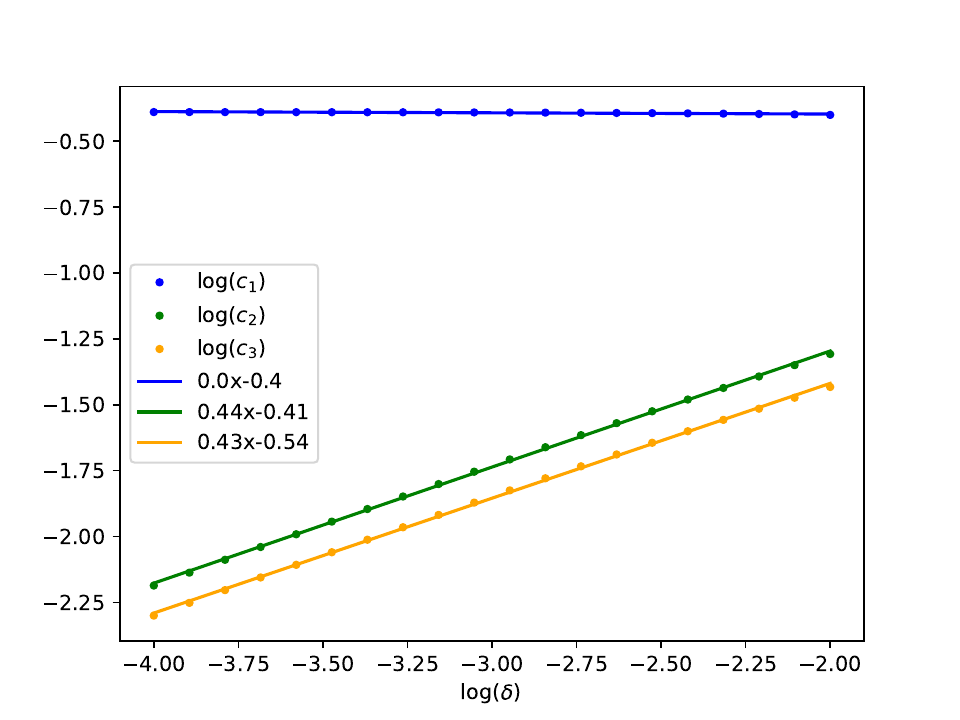}
\caption{\label{fig:c} $c_n$ versus $\log(\delta)$, for $n \in \{1,2,3\}$. The first mode speed $c_1$ does not decay to $0$, but $c_2$ and $c_3$ seem to go to zero as $\sqrt{\delta}$.}
\end{figure}
\FloatBarrier
\subsection{The unstable case : $\Vb = \Vb_{\delta}$}
\label{subsection:sharp_strat:shear}
In the presence of a shear flow, the bilayer Euler equations are ill-posed in Sobolev spaces, due to the Kelvin-Helmholtz instabilities (see for instance \cite{IguchiTanakaTani1997}, \cite{LebeauKamotski2005} for a setting close to ours, as well as \cite[Chapter A, Section 3.5]{Duchene2022a} and references within). In the irrotational case, the dispersion relation can be computed analytically, see \eqref{eqn:dispersion:ebc:1} and \eqref{eqn:dispersion:ebc:2} for an expression of the phase velocities. We show these phase velocities in Figure \ref{fig:dispersion:ebc}, with $\rho_-=1.5,$ $\rho_+ = 0.75$ and $v=0.25$. From the dispersion relation \eqref{eqn:dispersion:ebc:1} - \eqref{eqn:dispersion:ebc:2} of \eqref{eqn:interfacial_waves}, we can deduce the following result.

\begin{proposition}
\label{prop:dispersion:ebc}
Let $\rho_- > \rho_+ > 0$, and $v > 0$. Then the sets of the phase velocities $\mathcal{C}_0(k)$ for $k \in \frac{1}{L} \mathbb Z$ of \eqref{eqn:euler:bilayer:shear} (see \eqref{eqn:dispersion:ebc:1} and \eqref{eqn:dispersion:ebc:2})  satisfy the following statements, for some $k_{\min,\mathrm{BL}} \geq 0$.
\begin{itemize}
\item \underline{Stability of the low frequencies.} Let $|k|\leq k_{\min,\mathrm{BL}}$, then 
	$$ \Im(c(k)) = 0, \text{ for any } c(k) \in \mathcal C(k).$$
\item \underline{Instability of the high frequencies.}  Let $|k| > k_{\min,\mathrm{BL}}$. Then the two elements $c(k), \tilde{c}(k)$ in $\mathcal C(k)$ are complex conjugates with non-zero imaginary part; by convention we denote by $c(k)$ the element of $\mathcal C(k)$ with positive imaginary part. Then
	$$ \Im(c(k)) \underset{|k| \to \infty}{\longrightarrow} \frac{\sqrt{\rho_-\rho_+}}{\rho_-+\rho_+} v > 0.$$ 
\end{itemize}
\end{proposition}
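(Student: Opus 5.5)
The plan is to compute everything explicitly from \eqref{eqn:dispersion:ebc:2}. For $k\neq 0$, the two phase velocities are $c=d(k)\pm\sqrt{a(k)b(k)}/|k|$, with $d(k)=\frac{\rho_+-\rho_-}{\rho_-+\rho_+}v$ real and independent of $k$. Since $b(k)=|k|\tanh(|k|H/2)>0$, the sign of the discriminant is the sign of $a(k)$. Hence $\mathcal C_0(k)$ is real exactly when $a(k)\geq 0$. When $a(k)<0$, the two elements are complex conjugates $d\pm i\sqrt{-a(k)b(k)}/|k|$ with nonzero imaginary part.

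The first step is to study the sign of $a$ as a function of $\kappa=|k|>0$. Write
$$a=g\frac{\rho_--\rho_+}{\rho_-+\rho_+}-\frac{\rho_-\rho_+v^2}{(\rho_-+\rho_+)^2}\,\phi(\kappa),\qquad \phi(\kappa)\coloneq \frac{\kappa}{\tanh(\kappa H/2)}.$$
One checks that $\phi$ is continuous and strictly increasing on $(0,\infty)$, with $\phi(\kappa)\to 2/H$ as $\kappa\to0$ and $\phi(\kappa)\to\infty$. The increase follows since $\phi'$ has the sign of $\sinh(\kappa H)-\kappa H>0$, after differentiating $\kappa\cosh/\sinh$. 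Consequently $a$ is strictly decreasing in $\kappa$ and tends to $-\infty$, so it has at most one sign change.

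Define $k_{\min,\mathrm{BL}}$ as the unique root $\kappa_0$ of $a$ if $a(0^+)>0$, and $k_{\min,\mathrm{BL}}=0$ otherwise. With this choice, $|k|\leq k_{\min,\mathrm{BL}}$ gives $a\ge0$ and hence real phase velocities. The case $k=0$ is trivial or excluded by convention. For $|k|>k_{\min,\mathrm{BL}}$ we get $a<0$, which gives the complex conjugate pair described above.

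For the limit, I compute
$$\Im c(k)=\sqrt{-a(k)\,\tanh(|k|H/2)/|k|}.$$
As $|k|\to\infty$, $\tanh\to1$ and $-a(k)/|k|\to \rho_-\rho_+v^2/(\rho_-+\rho_+)^2$, because the constant term divided by $|k|$ vanishes. Hence $\Im c(k)\to\sqrt{\rho_-\rho_+}\,v/(\rho_-+\rho_+)$.

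The only mildly delicate step is the monotonicity of $\phi$, which guarantees a single threshold rather than alternating stable and unstable bands. If that step failed, the statement would need $k_{\min,\mathrm{BL}}$ defined as a supremum of stable frequencies, and monotonicity is precisely what makes that definition coincide with the threshold used above.
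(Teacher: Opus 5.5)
Your proposal is correct and follows the paper's approach. The paper gives no separate proof and simply deduces the proposition from the explicit dispersion relation \eqref{eqn:dispersion:ebc:1}--\eqref{eqn:dispersion:ebc:2}, exactly as you do: the sign of $a(k)$ decides between real and complex-conjugate phase velocities, and $\Im c(k)=\sqrt{-a(k)\tanh(|k|H/2)/|k|}$ gives the limit. Your monotonicity argument for $\kappa\mapsto\kappa/\tanh(\kappa H/2)$, via $\sinh(\kappa H)>\kappa H$, supplies a detail the paper leaves implicit: there is a single threshold $k_{\min,\mathrm{BL}}$ rather than alternating stable and unstable bands.
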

The instability of the high-frequencies is called the Kelvin-Helmholtz instability.\\ 

\nouveau{
The presence of the Kelvin-Helmholtz instabilities in a continuously stratified fluid with a thin pycnocline has already been invested, both with laboratory experiments (see for instance \cite{Grue1999}), and with direct numerical simulations at the non-linear level: in \cite{AlmgrenCamassaTiron2012}, the authors study the emergence of Kelvin-Helmholtz roll-ups, starting from a perturbation of a solitary wave. They also include a numerical study of the Taylor-Goldstein equation for the stratified Euler equations, linearized around the density and velocity profiles of the solitary wave that they consider. However, they do not provide a complete description of the dispersion relation of the stratified Euler equation, and in particular do not describe its dependency in the thickness of the pycnocline $\delta$, which is the main goal of the present section. See also \cite{CaulfieldVieweg2025}, where the authors perform direct numerical simulations for the non-linear density-dependent Euler equations, starting from a stably stratified continuous density profile, with a thin pycnocline. Focusing on the large-time dynamics, the authors exhibit a thickening of the pycnocline over time. \\
}

\begin{figure}
\includegraphics[width = 0.6\textwidth]{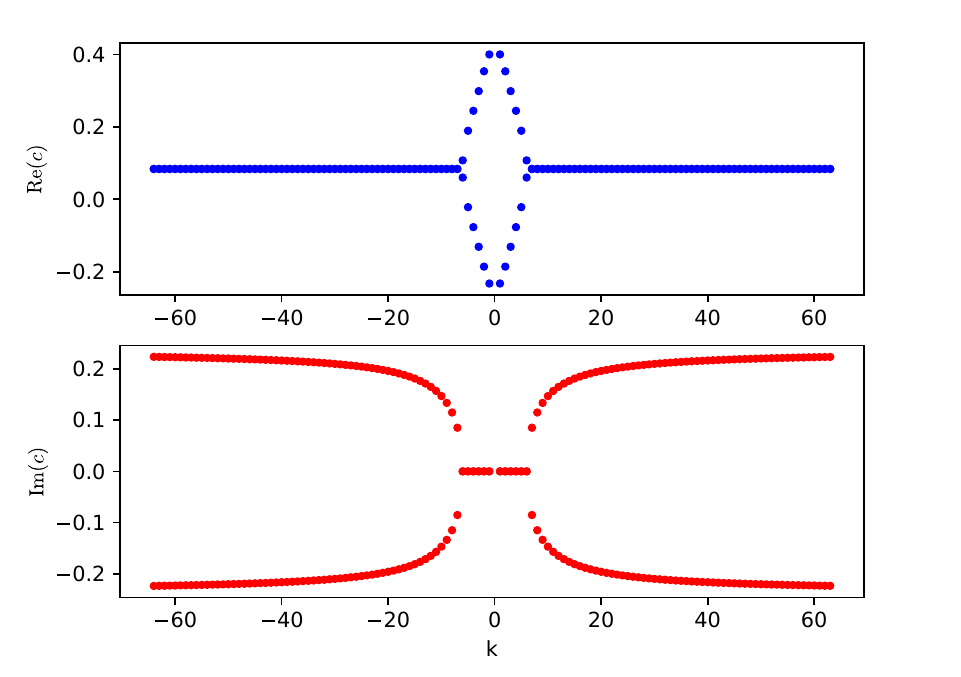}
\caption{Dispersion relation of the irrotational bilayer Euler equations with a non-zero shear flow. We plot the phase velocities, i.e. elements of $\mathcal C(k)$ versus $k$.}
\label{fig:dispersion:ebc}
\end{figure}
The goal of this section is to obtain a numerical approximation of the dispersion relation in phase-velocity formulation of the linear stratified Euler equations with a shear flow \eqref{eqn:euler:isopycnal:Vb} (see Definition \ref{def:dispersion_relation}), and to compare the result with Proposition \ref{prop:dispersion:ebc}. To this end, we define the sharp shear flow as
\begin{equation}
\label{eqn:def:sharp:shear}
\Vb_{\delta}(r) \coloneq   \frac{v}{2\pi} \arctan((r-r_*)/(0.5\delta)),
\end{equation}
with $v > 0$ and $\delta > 0$, so that $\Vb_{\delta}$ is roughly constant equal to $v$ (resp. $-v$) in the upper (resp. lower) layer, with a continuous but fast transition in the pycnocline, roughly of size $\delta$. In all the simulations shown in this section, we use $v=0.25$, $\rho_- = 1.5$, $\rho_+=0.75$, $H=L=1$, $r_* = -0.5$, and the profiles $\Vb_{\delta}$ and $\rhob_{\delta}$ defined in \eqref{eqn:def:sharp:shear} and \eqref{eqn:def:sharp_strat}.
\subsubsection{Qualitative description of the relation dispersion in the unstable case}
\label{subsubsection:sharp_strat:shear:qualitative}
In the presence of a non-zero shear flow $\Vb$, the evolution of the modes $(V_n,\eta_n)$ is given by \eqref{eqn:modes:couple:shear}. \nv{We} use the numerical strategy described in Section \ref{subsection:num_scheme_description} to find an approximation of the dispersion relation (more precisely, of the phase velocities, see Definition \ref{def:dispersion_relation}) of the stratified Euler equations \eqref{eqn:euler:isopycnal:Vb}, which can be seen in Figure \ref{fig:dispersion:strat}. We use $\mathtt N = 40$ vertical modes, computed with $\mathtt m_r = 150 \ 000$ grid points, and $\mathtt K = 128$ Fourier modes. See also Figure \ref{fig:dispersion:long}, which shows the same simulation as in Figure \ref{fig:dispersion:strat} with $\delta = 10^{-2}$, only with $\mathtt K = 450$. \nv{In} Figure \ref{fig:dispersion:superimposed}, we superimpose the imaginary part of the dispersions relations displayed in Figure \ref{fig:dispersion:strat} for $\delta = 5.10^{-2}$ and $\delta = 10^{-2}$ with the imaginary part of the dispersion relation of the bilayer Euler equations, displayed in Figure \ref{fig:dispersion:ebc}. \\

\nv{We} now describe the dispersion relations shown in Figure \ref{fig:dispersion:strat}. First, we observe that, regardless of the value of $\delta$ or $k$, any real number in $[-v,v]$ (that is, in the range of $\Vb_{\delta}$) is in $\mathcal C_{\delta}(k)$. This is linked with the presence of continuous spectrum for \eqref{eqn:euler:isopycnal:Vb}, which is already known (see \cite[Section 3]{CotiZelatiNualart2025} for a complete, analytical description of the spectrum of \eqref{eqn:euler:isopycnal:Vb}, and some additional references on the spectral study of \eqref{eqn:euler:isopycnal:Vb}. Note however that this latter work is restricted to the density profile $\rhob(r) = \rho_0 -\beta r$ for $\rho_0,\beta > 0$ some constants, and $\Vb(r) = r$, in the Boussinesq approximation). \\

\nv{Focusing} on the imaginary part of the phase velocities in Figure \ref{fig:dispersion:strat}\nv{, we} find that there exists a threshold $k_{\min,\delta}$ such that for $|k|\leq k_{\min,\delta}$, the imaginary parts of the elements in $\mathcal{C}_{\delta}(k)$ is small - more precisely, it seems to go to $0$ when $\delta \to 0$; let us call this configuration {\it asymptotic stability of the low frequencies}. \\

\nv{We} now turn to the behavior of the high frequencies. To this end, we show in Figure \ref{fig:dispersion:long} the results of the same simulation as in Figure \ref{fig:dispersion:strat} with $\delta = 10^{-2}$, although with $\mathtt K = 450$, to study the behavior of the large frequencies. We observe that there exists $k_{\max,\delta} > 0$ such that, for $k \in [k_{\min,\delta}, k_{\max,\delta}]$, there exist elements with a non-zero imaginary part in $\mathcal{C}_{\delta}(k)$. Moreover, we observe that
$$k_{\max,\delta} \underset{\delta \to 0}{\longrightarrow} +\infty.$$ 
This is consistent with the Kelvin-Helmholtz instabilities described in the bilayer case, where the high-frequencies are unstable. We conclude that the Kelvin-Helmholtz instabilities are present in the dispersion relation of the stratified Euler equations, with sharp density and shear profiles.\\

It should be noted that, in Figures \ref{fig:dispersion:strat} and \ref{fig:dispersion:long}, the frequencies $k$ with $|k| \geq k_{\max,\delta}$ are stable, i.e. elements of $\mathcal{C}_{\delta}(k)$ have zero imaginary part, up to numerical inconsistencies. This is in sharp contrast with the bilayer setting, where all the frequencies $|k| \geq k_{\min,\mathrm{BL}}$ are unstable. 
\begin{figure}
\includegraphics[width = \textwidth]{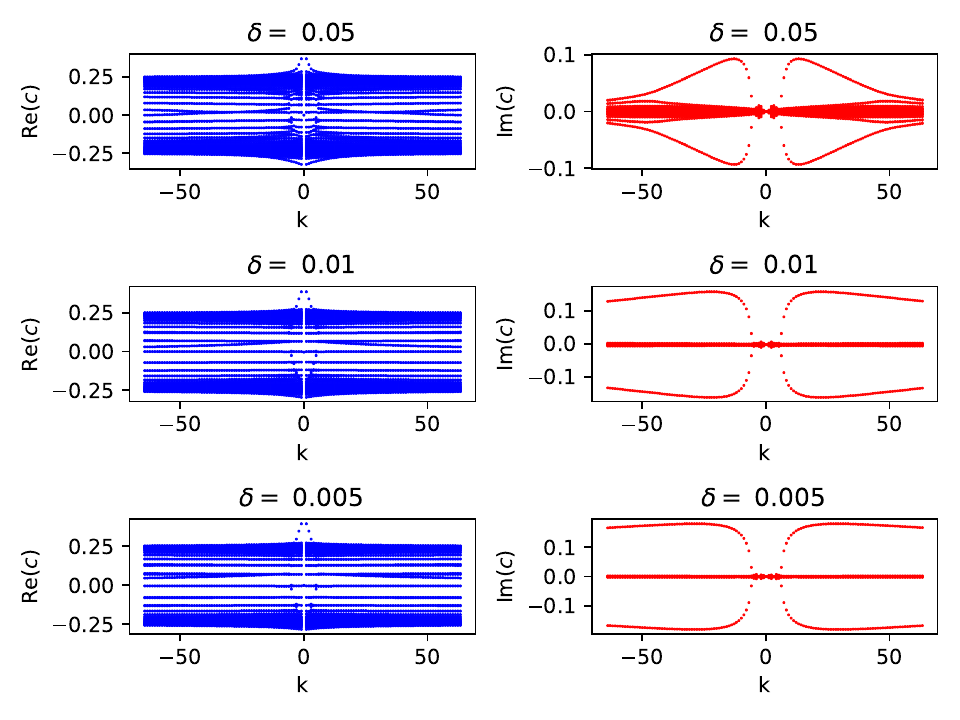}
\caption{Dispersion relations for the stratified Euler equations with a shear flow of the form of \eqref{eqn:def:sharp:shear}, obtained numerically. We plot the numerical approximation of phase velocities, i.e. elements of $\mathcal C^{(\mathtt N)}_{\delta}(k)$ versus $k$.} 
\label{fig:dispersion:strat}
\end{figure}
\begin{figure}
\includegraphics[width = 0.8\textwidth]{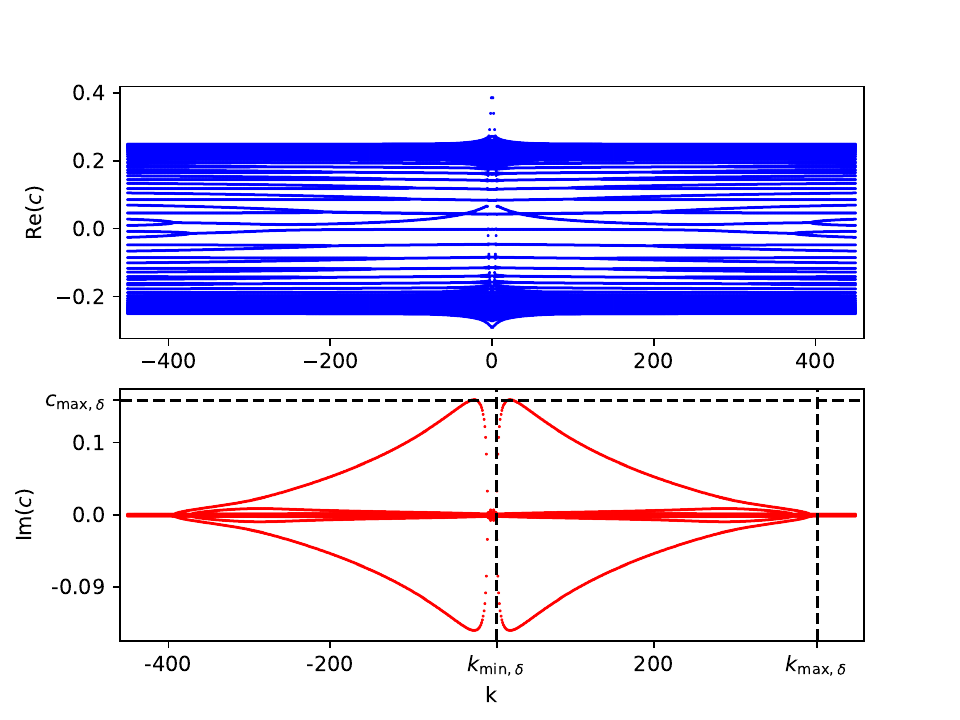}
\caption{Dispersion relation for $\delta = 10^{-2}$ for the stratified Euler equations obtained numerically. We plot the numerical approximation of phase velocities, i.e. elements of $\mathcal C^{(\mathtt N)}_{\delta}(k)$ versus $k$. After the threshold $k_{\max,\delta}$, the high frequencies are stable.}
\label{fig:dispersion:long}
\end{figure}
\begin{figure}
\includegraphics[width = 0.8\textwidth]{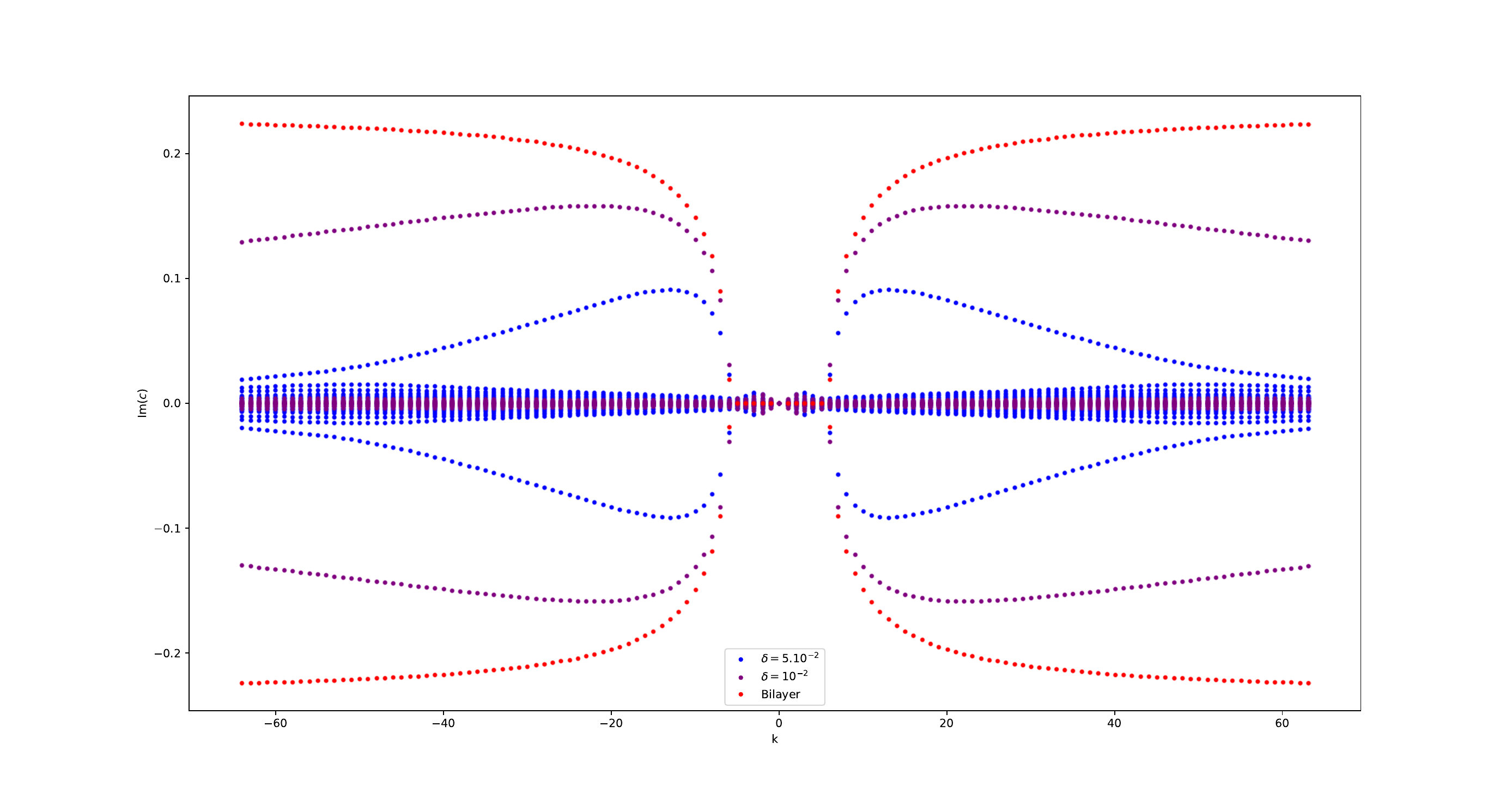}
\caption{Imaginary part of the dispersion relation for $\delta = 5.10^{-2}$ and $\delta = 10^{-2}$ for the stratified Euler equations, superimposed with the imaginary part of the dispersion relation of the bilayer Euler equations. We plot the phase velocities, i.e. elements of $\mathcal C(k)$ versus $k$.}
\label{fig:dispersion:superimposed}
\end{figure}
\nouveau{
\begin{remark}
Several instabilities are present in the dispersion relations from Figures \ref{fig:dispersion:strat}, \ref{fig:dispersion:long}. The strongest instabilities are the Kelvin-Helmholtz instabilities, which are our main focus. We provide in Appendix \ref{section:apdx:approx} some consistency checks towards the fact that these instabilities are well approximated by our numerical strategy. Additionally, the presence of a stable (but poorly resolved) continuous spectrum for \eqref{eqn:euler:isopycnal:Vb} induces small instabilities, due to numerical inconsistencies. These instabilities are artificial, but can clearly be distinguished from the Kelvin-Helmholtz instabilities in Figures \ref{fig:dispersion:strat}, \ref{fig:dispersion:long}. See however the paragraph {\it The upper threshold $k_{\max,\delta}$ for the Kelvin-Helmholtz instabilities} below. Finally, some additional instabilities are present, in particular in the low-frequency region. We conjecture that these instabilities disappear in the sharp stratification limit ($\delta \to 0$). However, when $\delta$ is too small, these instabilities become of size comparable with the numerical inconsistencies due to the presence of continuous spectrum, preventing a quantitative numerical assessment of their persistence - see the paragraph {\it Asymptotic stability of the low frequencies} below. 
\end{remark}
}
\subsubsection{Quantitative description of the dispersion relation in the unstable case}
\label{subsubsection:sharp_strat:shear:quantitative}
In this section we provide quantitative results on the features of the dispersion relation displayed in Figure \ref{fig:dispersion:long}.\\

\paragraph{\bf The upper threshold $k_{\max,\delta}$ for the Kelvin-Helmholtz instabilities.} As shown in Figure \ref{fig:dispersion:long}, there is a threshold frequency $k_{\max,\delta}$ after which the higher frequencies are stable, up to numerical inconsistencies. We now investigate numerically the dependency of $k_{\max,\delta}$ with respect to $\delta$. First, we compute the dispersion relation of \eqref{eqn:euler:isopycnal:Vb} for several $\delta \in [10^{-3}, 5.10^{-2}]$. We use $\mathtt K = 1200$, so that $k_{\max,\delta} \leq \mathtt K$ for the values of $\delta$ that we consider. We use $\mathtt m_r = 200 \ 000$ points on the vertical axis. We perform two sets of simulations, one with $\mathtt N = 80$ vertical modes, and one with $\mathtt N = 90$ vertical modes.\\

Some elements $c$ of the dispersion relation with $\mathrm{Re}(c)$ in the range of $\Vb$ (that correspond to continuous spectrum of the stratified Euler equations \eqref{eqn:euler:isopycnal:Vb}) have zero imaginary part but are computed with less precision (see Appendix \ref{section:apdx:approx}), so that there is a band of size $2.10^{-3}$ in which we cannot distinguish the Kelvin-Helmholtz instabilities from these numerical inconsistencies. This is why we set the threshold $3.10^{-3}$, above which the imaginary part of the computed phase velocities coming from the dispersion relation of \eqref{eqn:euler:isopycnal:Vb} and numerical inconsistencies can be distinguished. Then we find the highest frequency $k_{\max,\delta}$, for which an element of $\mathcal C_{\delta}(k_{\max,\delta})$ has an imaginary part larger than $3.10^{-3}$; see Figure \ref{fig:dispersion_k_delta} for a representation of this process. We then plot in Figure \ref{fig:k_delta} the obtained values for $k_{\max,\delta}$. There are two estimated $k_{\max,\delta }$ for each value of $\delta$, corresponding to two number of modes ($N=80$ and $N=90$) used to compute the dispersion relation. We conjecture that 
\begin{equation}
\label{eqn:conj:kmax}
k_{\max,\delta} \approx \delta^{-1}.
\end{equation} 
\begin{figure}
\centering
\begin{minipage}{0.49\textwidth}
\includegraphics[width = \textwidth]{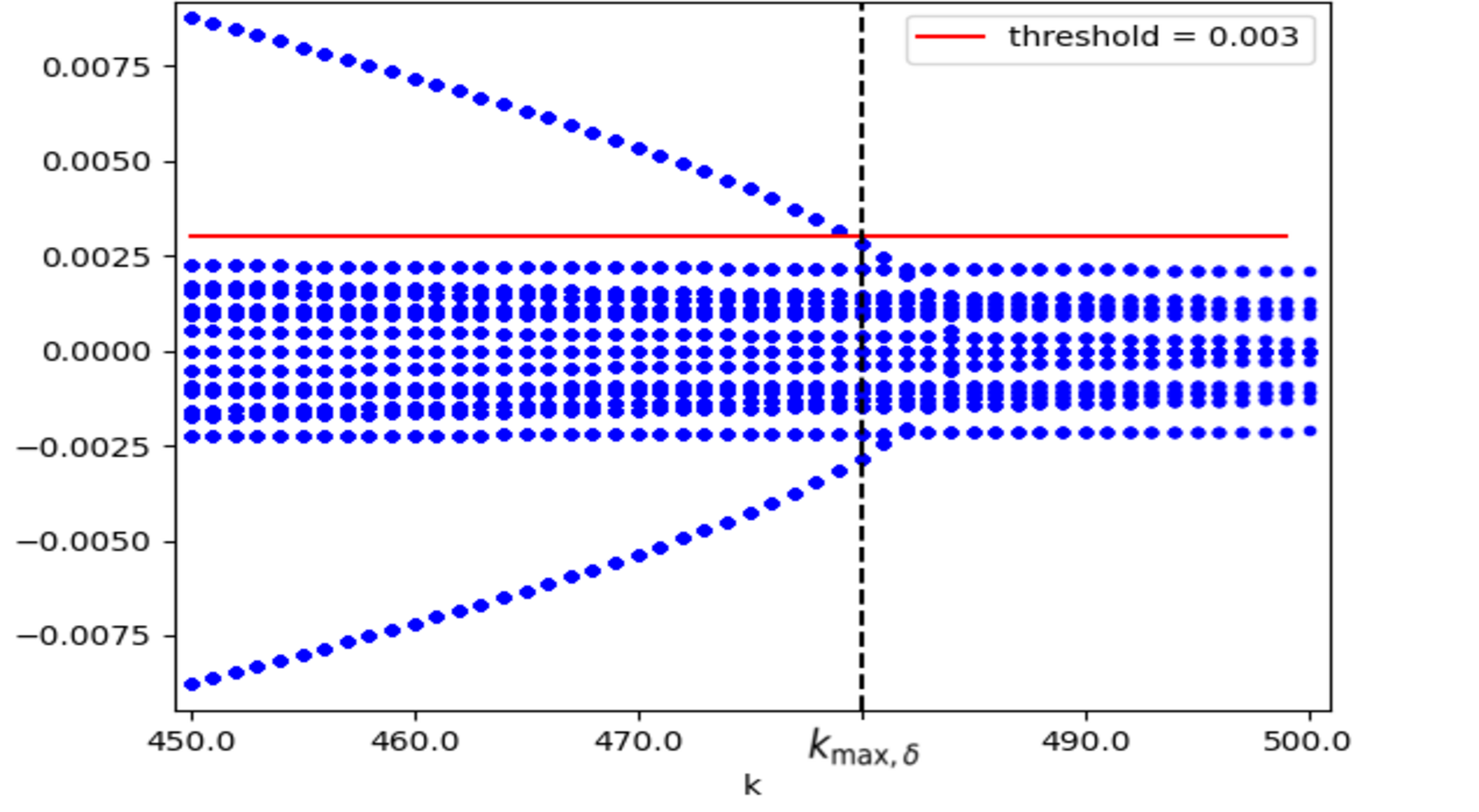}
\caption{Dispersion relation with $\mathtt N=80$, $\delta = 8.10^{-3}$. With the threshold $3.10^{-3}$, we get $k_{\max,\delta} = 480$. We plot the numerical approximation of phase velocities, i.e. elements of $\mathcal C^{(\mathtt N)}_{\delta}(k)$ versus $k$.}
\label{fig:dispersion_k_delta} 
\end{minipage}
\begin{minipage}{0.49\textwidth}

\includegraphics[width = \textwidth]{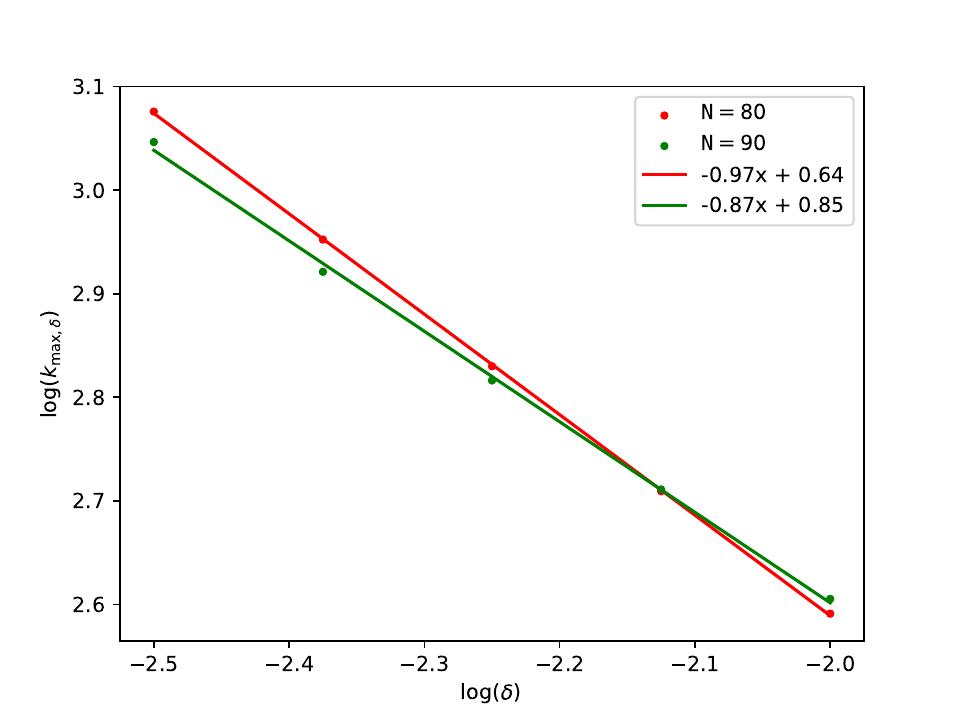}
\caption{$\log(k_{\max,\delta})$ versus $\log(\delta)$. For each $\delta$, the phase velocities are computed with $\mathtt N=80$ and $\mathtt N=90$ number of modes respectively, in order to find $k_{\max, \delta}$.}
\label{fig:k_delta} 
\end{minipage}
\end{figure}

\paragraph{\bf Strength of the Kelvin-Helmholtz instabilities, Part I} Let us now turn to the estimation of 
\begin{equation}
\label{eqn:def:cmax}
c_{\max,\delta} \coloneq  \argmax_{c\in \mathcal{C}_{\delta}(k), |k| \in [k_{\min,\delta},k_{\max,\delta}]\cap \frac{1}{L}\mathbb{Z}} \Im(c),
\end{equation}
which is the phase velocity with largest imaginary part, obtained numerically. The imaginary part of $c_{\max,\delta}$ is represented on Figure \ref{fig:dispersion:long}. This value is to be compared with the value 
$$\Im(c_{\max,\mathrm{BL}}) = \lim_{|k|\to\infty} \Im(c(k)) = \frac{\sqrt{\rho_- \rho_+}}{\rho_- + \rho_+} v,$$
where $c(k)$ refers here to the element of positive imaginary part in the dispersion relation of the bilayer equations, see Proposition \ref{prop:dispersion:ebc}. Indeed, from Figure \ref{fig:dispersion:superimposed}, we conjecture
\begin{equation}
\label{eqn:conj:khiI}
|\Im(c_{\max,\delta}) - \Im(c_{\max,\mathrm{BL}})| \underset{\delta \to 0}{\longrightarrow} 0.
\end{equation}
We compute the dispersion relation of \eqref{eqn:euler:isopycnal:Vb}, with $\mathtt N = 50$ vertical modes, computed with $\mathtt m_r = 200 \ 000$ points on the vertical axis, and $\mathtt K = 1000$, for several values of $\delta$ in $[10^{-3}, 5.10^{-2}]$. We show in Figure \ref{fig:cstar} the logarithm of the difference between $\Im(c_{\max,\delta})$ and $\Im(c_{\max,\mathrm{BL}})$. The difference seems to decrease polynomially as $\delta$ goes to $0$, however it decreases with a rather slow rate; this is consistent with the error at $\delta = 10^{-3}$, which is still non-negligible, namely
\begin{equation*}
\frac{|\Im(c_{\max,10^{-3}}) - \Im(c_{\max,\mathrm{BL}})|}{\Im(c_{\max,\mathrm{BL}})} \approx 10\%.
\end{equation*}

\begin{figure}
\begin{minipage}{0.49\textwidth}
\includegraphics[width =\textwidth]{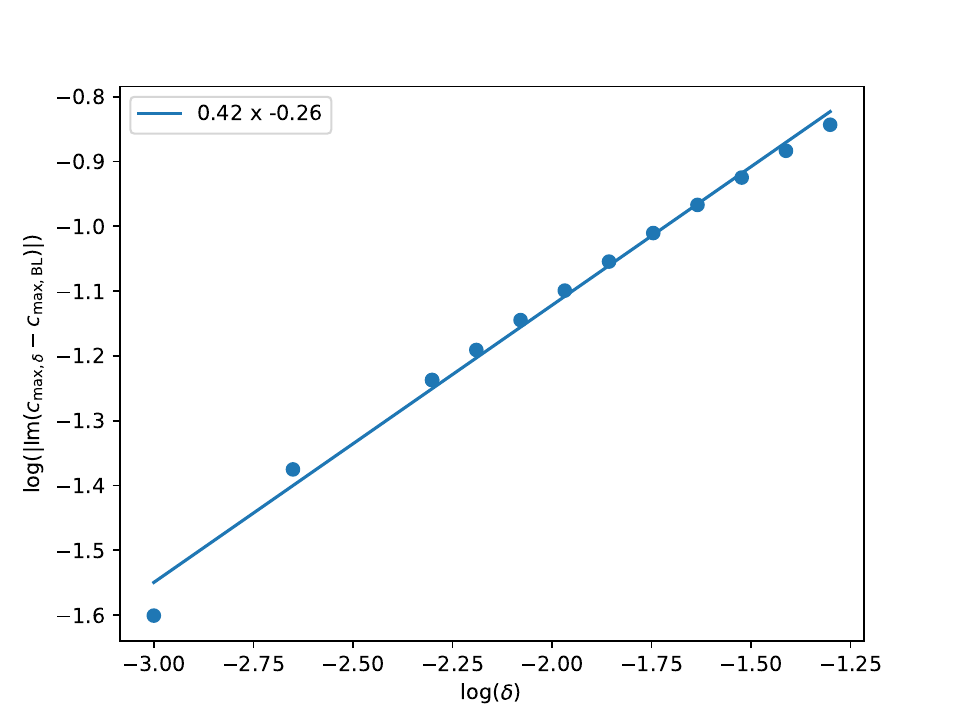}
\captionof{figure}{Value of \text{$\log(|\Im(c_{\max,\delta})-\Im(c_{\max,\mathrm{BL}})|)$} versus $\log(\delta)$, for $\delta$ between $5.10^{-2}$ and $10^{-3}$. The error $\frac{|\Im(c_{\max,10^{-3}}) - \Im(c_{\max,\mathrm{BL}})|}{\Im(c_{\max,\mathrm{BL}})}$ is approximately $10\%$.}
\label{fig:cstar}
\end{minipage}
\begin{minipage}{0.49\textwidth}
\includegraphics[width =\textwidth]{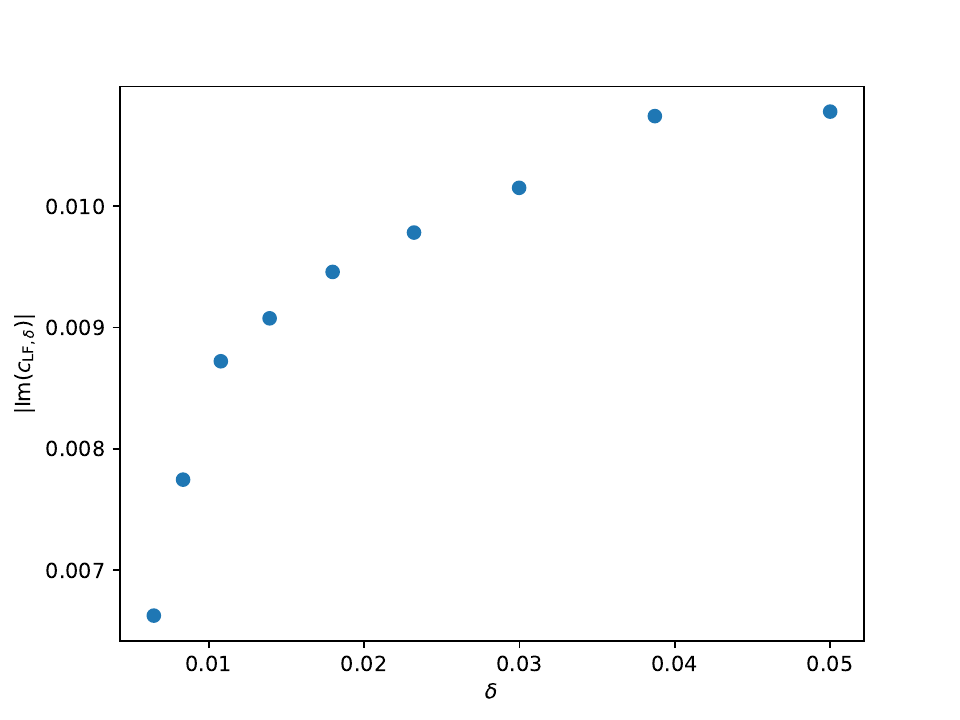}
\captionof{figure}{$|\Im(c_{\mathrm{LF},\delta})|$ versus $\delta$, for $\delta$ between $5.10^{-2}$ and $7.10^{-3}$, and $\mathtt N=50$ number of modes.}
\label{fig:clf}
\end{minipage}
\end{figure}

\paragraph{\bf Strength of the Kelvin-Helmholtz instabilities, Part II} We investigate the dependency of the quantity
\begin{equation}
\label{eqn:def:kc}
\omega^*_{\delta} \coloneq  \argmax_{c\in \mathcal{C}_{\delta}(k),|k| \in [k_{\min,\delta},k_{\max,\delta}]\cap\frac{1}{L}\N} \left(k \Im(c)\right).
\end{equation}
We are also interested in a value $k^*_{\delta}$ such that the maximum in \eqref{eqn:def:kc} is reached, namely such that
\begin{equation}
\label{eqn:def:kc:k}
\omega^*_{\delta} = k^*_{\delta} \argmax_{c \in \mathcal{C}(k^*_{\delta})} (\Im(c)),
\end{equation}
These quantities are related to the growth rate of growing modes, see Section \ref{subsection:sharp_strat:shear:csq}. We compute the dispersion relation of \eqref{eqn:euler:isopycnal:Vb}, with $\mathtt N = 50$ vertical modes, with $\mathtt m_r = 200 \ 000$ points on the vertical axis, and $\mathtt K = 1000$, for several values of $\delta$ in $[10^{-3}, 5.10^{-2}]$. We show the dependency of $\Im(\omega^*_{\delta})$ and $k^*_{\delta}$ with respect to $\delta$ in Figures \ref{fig:omegamax} and \ref{fig:kstar}. We conjecture:
\begin{equation}
\label{eqn:conj:khiII}
\sys{ \Im(\omega^*_{\delta}) &\approx \delta^{-1},\\
		k^*_{\delta} &\approx \delta^{-1}.}
\end{equation}
\begin{figure}
\begin{minipage}{0.49\textwidth}
\includegraphics[width=\textwidth]{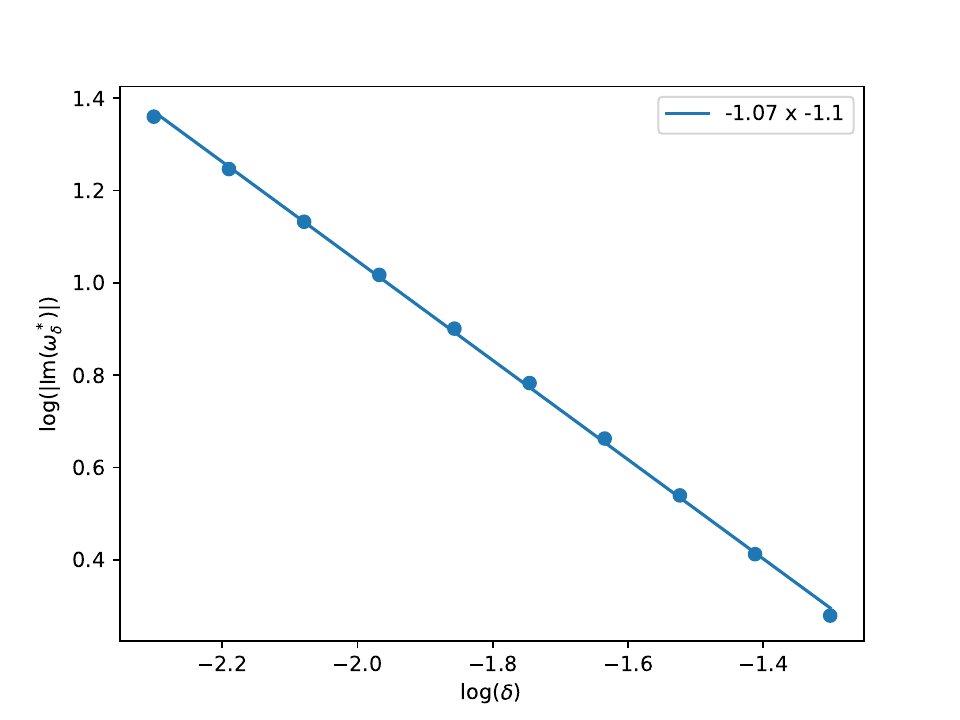}
\captionof{figure}{$\log(|\Im(\omega^*_{\delta})|)$ versus $\log(\delta)$, computed for $\delta$ between $5.10^{-2}$ and $5.10^{-3}$, with $\mathtt N = 50$ modes.}
\label{fig:omegamax}
\end{minipage}
\begin{minipage}{0.49\textwidth}
\includegraphics[width=\textwidth]{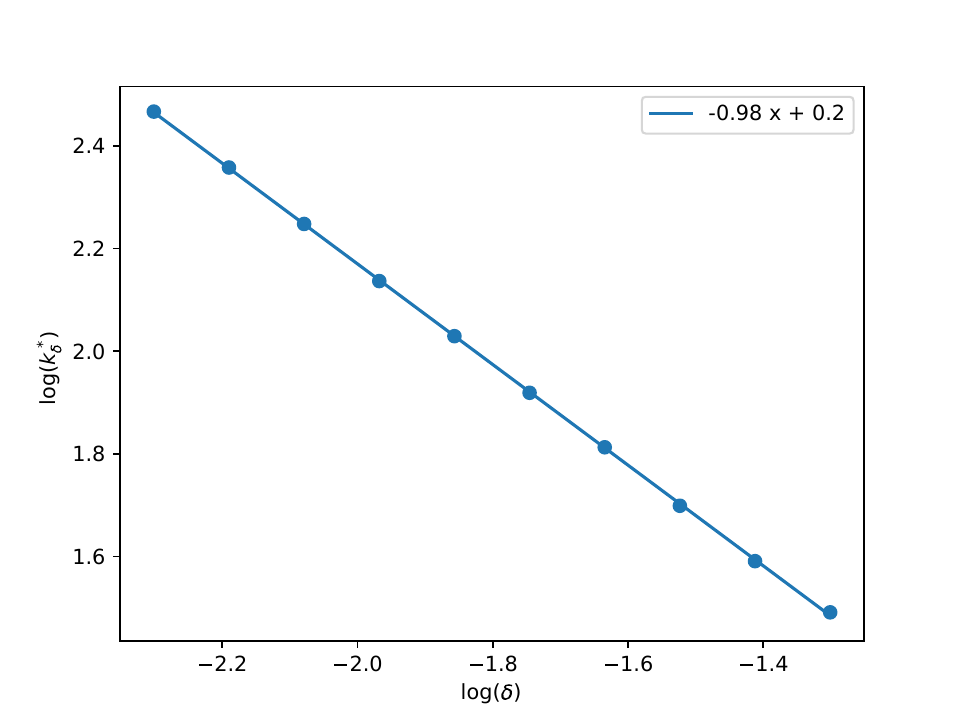}
\captionof{figure}{$\log(k^*_{\delta})$ versus $\log(\delta)$, computed for $\delta$ between $5.10^{-2}$ and $5.10^{-3}$, with $\mathtt N = 50$ modes.}
\label{fig:kstar}
\end{minipage}
\end{figure}

\paragraph{\bf Asymptotic stability of the low frequencies} Let us now turn to the behavior at low frequency. In the bilayer case, \nv{we define} 
\begin{equation*}
c_{\mathrm{LF},\mathrm{BL}} \coloneq  \argmax_{c(k), |k| < k_{\min,\mathrm{BL}}} |\Im(c(k)) |,
\end{equation*}
where $k_{\min,\mathrm{BL}}$ is the threshold of onset of the Kelvin-Helmholtz instabilities. This threshold can be computed from the dispersion relation of the bilayer Euler equations, see \eqref{eqn:dispersion:ebc:1}-\eqref{eqn:dispersion:ebc:2}, \nv{which also proves that} the low frequencies are stable in the bilayer case, i.e.
$$\Im(c_{\mathrm{LF},\mathrm{BL}}) =0.$$
In the continuously stratified case, we observe the same general behavior, with low-frequencies that are more stable than the high frequencies, and the threshold of the onset of the Kelvin-Helmholtz instabilities is roughly the same, see Figure \ref{fig:dispersion:superimposed}. Due to the fact that the low frequencies exhibit small but non-zero instabilities, this threshold is not as clearly defined as in the bilayer case. 
Finally we find that the low frequencies are stabilized in the limit $\delta \to 0$. More precisely, we define
\begin{equation*}
c_{\mathrm{LF},\delta} \coloneq  \argmax_{c \in \mathcal{C}(k), |k| \leq k_{\min,\mathrm{BL}}-1} |\Im(c(k))|,
\end{equation*}
which is the maximum instability in the low-frequencies, which we define as the frequencies below the threshold given by the analysis in the bilayer case. We compute the dispersion relation of \eqref{eqn:euler:isopycnal:Vb}, with $\mathtt N = 50$ vertical modes, computed with $\mathtt m_r = 200 \ 000$ points on the vertical axis, and $\mathtt K = 10$, for several values of $\delta$ in $[7.10^{-3}, 5.10^{-2}]$. We show the dependency of $c_{\mathrm{LF},\delta}$ with respect to $\delta$ in Figure \ref{fig:clf}. Note that we do not use a logarithmic scale, contrary to most of the plots in this study: this is because no polynomial, logarithmic or exponential pattern emerges from the data. Note also that the range of $\delta$ does not include as small values as in the rest of this study. The reason is that two factors may contribute to instabilities of the low frequencies: on the one hand, the low frequencies might exhibit instabilities. On the other hand, the continuous spectrum exhibits small instabilities, due to numerical inconsistencies. For lower values of $\delta$, the quantity $c_{\mathrm{LF},\delta}$ is of the same order as the (artificial) instabilities associated to the continuous spectrum. This range might be the reason why no general pattern emerges from the data. \\

\paragraph{\bf Conjecture on the form of the dispersion relation of \eqref{eqn:euler:isopycnal:Vb}} Our findings, described in the present section, suggest the following conjecture, which is an analogue of Proposition \ref{prop:dispersion:ebc}, for the stratified Euler equations.
\begin{conjecture}
\label{conj}
Let $\delta > 0$, $\rhob_{\delta}, \Vb_{\delta}$ as in \eqref{eqn:def:sharp_strat} and \eqref{eqn:def:sharp:shear}. There exist $k_{\min,\delta}, k_{\max,\delta}$, such that the following holds.
\begin{itemize}
\item \underline{Asymptotic stability of the low frequencies.} The low frequencies are asymptotically stable, i.e.
$$ \max_{c \in \mathcal{C}_{\delta}(k), |k| \leq k_{\min,\delta}} |\Im(c(k))| \underset{\delta \to 0}{\longrightarrow} 0,$$
and, with $k_{\min,\mathrm{BL}}$ defined in Proposition \ref{prop:dispersion:ebc}:
$$k_{\min,\delta} \underset{\delta \to 0}{\longrightarrow} k_{\min,\mathrm{BL}}.$$

\item \underline{Presence of the Kelvin-Helmholtz instabilities.} All the frequencies $k \in [k_{\min,\delta}, k_{\max,\delta}]$ are unstable. Moreover,
$$\max_{c\in \mathcal{C}_{\delta}(k),|k| \in [k_{\min,\delta},k_{\max,\delta}]\cap\frac{1}{L}\N} \left(k \Im(c)\right) \approx \delta^{-1},$$
and this maximum is reached for some frequency $k^*_{\delta}$ satisfying
$$k^*_{\delta} \approx \delta^{-1}.$$

\item \underline{Stability of the frequencies beyond $k_{\max,\delta}$.} All the frequencies $k$ with $|k| > k_{\max,\delta}$ are stable, and moreover
$$k_{\max,\delta} \approx \delta^{-1}.$$
\end{itemize}
\end{conjecture}
\nouveau{
\begin{remark}
The scaling $\delta^{-1}$ appearing in Conjecture \ref{conj} can be understood from a (formal) scaling argument in the Taylor-Goldstein equation \eqref{eqn:taylor_goldstein}. Indeed, our density and velocity profiles $\rhob_{\delta},\Vb_{\delta}$ are defined from (regular) profiles $\rhob,\Vb$ through
$$\rhob_{\delta}(r) = \rhob\left(\frac{r-r_*}{\delta}\right), \qquad \Vb_{\delta}(r) = \Vb\left(\frac{r-r_*}{\delta}\right),$$
which imply
$$ N^2_{\delta} = \frac{1}{\delta} N^2, \qquad N^2 = \frac{-g\rhob'}{\rhob}.$$
We apply the change of variable $s \coloneq \frac{r-r_*}{\delta}$ to \eqref{eqn:taylor_goldstein}, to write a rescaled Taylor-Goldstein equation:
\begin{equation}
\label{eqn:taylor_goldstein:rescaled:delta}
(\Vb - c(k))^2 \left( (\rhob w')' - (\delta k)^2 \rhob w \right) - (\Vb - c(k)) (\rhob \Vb')'w + \delta \rhob N^2 w =0.
\end{equation}
Now the only dependency in $\delta$ is through the product $\delta k$, and the factor $\delta$ in the last term in \eqref{eqn:taylor_goldstein:rescaled:delta}. At main order in $\delta$, we can therefore expect that this last term is negligible. Dropping this last term and defining $\tilde{k} \coloneq \delta k$, we get
\begin{equation}
\label{eqn:taylor_goldstein:rescaled:delta:bis}
(\Vb - c)^2 \left( (\rhob w')' - \tilde k^2 \rhob w \right) - (\Vb - c) (\rhob \Vb')'w \approx 0,
\end{equation}
which is an equation that does not depend on $\delta$. Let $w^{*}$, $ c^*\in \mathbb C, \tilde{k}^* \in \R_+$ be a solution of \eqref{eqn:taylor_goldstein:rescaled:delta:bis}, with the maximum value of $\Im(c)$ among solutions of \eqref{eqn:taylor_goldstein:rescaled:delta:bis} - although both the existence and uniqueness of such a solution are far from being trivial, and require a proof. Note in particular that $c^*$ and $\tilde k^*$ are independent of $\delta$, since \eqref{eqn:taylor_goldstein:rescaled:delta:bis} is independent of $\delta$. This leads to the following asymptotic for the maximum instability in the Taylor-Goldstein equation in the original variable $r$:
$$ k_{\delta}^* = \frac{\tilde{k}^*}{\delta}, \qquad \Im (c^*(k_{\delta}^*)) = \Im(c^*), \qquad \Im(\omega_{\delta}^*) = \frac{\Im(c^*) \tilde{k}^*}{\delta},$$
where we used the definition \eqref{eqn:def:kc} of $\omega_{\delta}^*$. Similarly, if there exists $\tilde{k}_{\max}$ such that any $(w,c,\tilde{k})$ solution to \eqref{eqn:taylor_goldstein:rescaled:delta} with $\tilde k \geq \tilde k_{\max}$ satisfies $\Im(c) = 0$, then
$$ k_{\max,\delta} = \frac{\tilde{k}_{\max}}{\delta},$$
with the notation of Conjecture \ref{conj}. This is exactly the scaling conjectured in Conjecture \ref{conj}, for the second and third points. \\

This reasoning requires two additional arguments to be made rigorous. First, the rescaling $s = \frac{r-r_*}{\delta}$ implies that the rescaled Taylor-Goldstein equation \eqref{eqn:taylor_goldstein:rescaled:delta} is defined on an interval of length of order $\frac{1}{\delta}$, so that it is not independent of $\delta$. In \cite{KumarOzanski2025}, the authors study the Taylor-Goldstein equation \eqref{eqn:taylor_goldstein} with $\rhob = 1$ (constant density), and a shear profile similar to $\Vb_{\delta}$. They manage to make the scaling argument rigorous, by using a more involved change of variables. Second, and most importantly, discarding the last term in \eqref{eqn:taylor_goldstein:rescaled:delta} requires a careful justification. 
\end{remark}}
\subsection{Consequences for the justification of bilayer models}
\label{subsection:sharp_strat:shear:csq}
Let us now point out the consequence of our findings on the full justification of bilayer models. \\
In many applications, bilayer models are used as reduced models for a continuously stratified fluid with a thin pycnocline. This is particularly the case for the study of geophysical flows, for which the bilayer shallow-water equations are commonly used; we comment on this later in this subsection. The derivation of these models relies on formal arguments. However, it is possible in some cases to  provide a rigorous comparison of the solutions of both the continuous and the bilayer models, stemming from close initial data. This is done in Theorem \ref{prop:cvce} in the case $\Vb = 0$, for the comparison with the bilayer Euler equations, and it relies on energy estimates; this is what we call {\it full justification}. In the case where $\Vb$ is of the form given in \eqref{eqn:def:sharp:shear}, our findings indicate that such a result does not seem to be true in general, due to the presence of the Kelvin-Helmholtz instabilities. Let us take two particular bilayer models.

\subsubsection{The bilayer Euler equations}
\label{subsection:csq:ebc}
Let us compare the solutions of the linear stratified Euler equations with a shear flow \eqref{eqn:euler:isopycnal:Vb}. We consider the density and shear profiles given in \eqref{eqn:def:sharp_strat} and \eqref{eqn:def:sharp:shear}. Let $\omega^*_{\delta}$ and $k^*_{\delta}$ be defined in \eqref{eqn:def:kc} and \eqref{eqn:def:kc:k}. Let $(V_n,\eta_n)_{n\in \N^*}$ be the eigenvector associated to $\omega^*_{\delta}$, for which we find  a numerical approximation with the numerical scheme described in Section \ref{subsection:num_scheme_description}, and let
$$V = \sum_{n\in \N^*} V_n g_n, \qquad w = \sum_{n\in \N^*} c_n ik^*_{\delta} V_n f_n, \qquad \eta = \sum_{n\in \N^*} \eta_n f_n,$$
and
$$f^* \coloneq  (V,w,\eta).$$
We thus find that, for any constant $\mathcal{U}^*$,
\begin{equation}
\label{eqn:csq:temp:0}\mathcal{U}^*_{\delta}(t,x,r) \coloneq \mathcal{U}^* e^{-it\omega^*_{\delta} + ik^*_{\delta}x} f^*(r)
\end{equation}
is a particular solution of \eqref{eqn:euler:isopycnal:Vb}, whose $L^2(S)$ norm grows like $e^{t\Im(\omega^*_{\delta})}$. We conjecture $\Im(\omega^*_{\delta}) \approx \delta^{-1}$ (see Figure~\ref{fig:omegamax} and Conjecture \ref{conj}), so that
\begin{equation}
\label{eqn:csq:temp}
\Vert \mathcal{U}^*_{\delta}(t,\cdot,\cdot)\Vert_{L^2(S)} \approx |\mathcal{U}^*_0| e^{t \delta^{-1}}.
\end{equation}
Such a pathological behavior prevents to define solutions of \eqref{eqn:euler:isopycnal:Vb} on a time interval $[0,T]$ independent of $\delta$ for generic initial data in $L^2$. More generally, the same argument holds for initial data in any Sobolev space $H^s$: from Figure \ref{fig:kstar} and Conjecture \ref{conj}, we conjecture $k^*_{\delta} \approx \delta^{-1}$. Then, let us assume that $\mathcal U^*_{\delta|t=0}$ is bounded in $H^s$ uniformly in $\delta$. Because of \eqref{eqn:csq:temp:0} and $$k^*_{\delta} \approx \delta^{-1},$$ this implies
$$ \mathcal U^* \leq C \delta^s,$$
for some constant $C$. Then:
\begin{equation}
\label{eqn:exp_growth_delta}
|\mathcal{U}^*_{\delta}(t,\cdot,\cdot)|_{L^2} \leq C \delta^s e^{t\delta^{-1}}.
\end{equation}
Thus, the assumption $\mathcal U^*_{\delta|t=0}$ uniformly bounded in $H^s$ is not enough to ensure that $\mathcal{U}^*_{\delta}$ remains bounded uniformly in $L^2$ on a time interval independent of $\delta$. This is consistent with ill-posedness results on the bilayer Euler equations (see \cite{IguchiTanakaTani1997}, \cite{LebeauKamotski2005} for a framework close to ours, as well as \cite[Chapter 1, Section 3.5]{Duchene2022a} and references within).\\
\begin{remark}
\label{rk:analytic}
Analytic functions are associated with exponentially decreasing Fourier coefficients. In our context, analytic initial data imply $|\mathcal{U}^*_0| \sim e^{-k^*_{\delta}}$, so that \eqref{eqn:csq:temp} becomes
$$\Vert \mathcal{U}^*(t,\cdot,\cdot)\Vert_{L^2(S)} \sim e^{t \delta^{-1} - \delta^{-1}},$$%
which could allow for bounded solution on a time interval uniform in $\delta$. Therefore, our findings do not prevent a full justification of the bilayer Euler equations from the stratified Euler equations in the sharp stratification limit, in analytical spaces. This is consistent with well-posedness results in analytic spaces for the vortex sheet problem (see \cite{SulemSulemBardos1981}, \cite{CaflischOrellana1986}), which is closely related to the bilayer Euler equations, and in particular suffers from the Kelvin-Helmholtz instability as well.  Such a result is, to the best of our knowledge, absent from the literature. 
\end{remark}
\subsubsection{The bilayer shallow-water equations}
\label{subsection:csq:blsw}
We now introduce the shallow-water parameter $\mu \coloneq  H^2/L^2$, where $H$ is the depth of the strip $S$ and $L$ is the length of the torus. In most applications, it is a small parameter, so that we are interested in the limit $\mu \to 0$. Formally, the equations one expects to recover from the stratified Euler equations in the double limit $\mu,\delta \to 0$ are the bilayer shallow-water equations. These equations are ubiquitous in oceanography; one of their advantages compared to the bilayer Euler equations is that they do not suffer from the Kelvin-Helmholtz instabilities. They are even well-posed at the non-linear level, see \cite[Chapter B, Section 6]{Duchene2022a} \nouveau{and \cite{GuyenneLannesSaut2010}}. We now show that our findings also provide numerical evidence that a full justification of the bilayer shallow-water equations from the stratified Euler equations cannot hold. \\

One can rescale the variables and unknowns to obtain the non-dimensionalized stratified Euler equations, similar to \eqref{eqn:euler:isopycnal:Vb} but that makes appear $\mu$, and posed in the non-dimensionalized strip $\mathbb{T}_1 \times [-1,0]$. The non-dimensionalized  Taylor-Goldstein equation then reads
\begin{equation}
\label{eqn:taylor_goldstein:rescaled}
(\Vb - c)^2 \left( (\rhob w')' - \mu k^2 \rhob w \right) - (\Vb - c) (\rhob \Vb')'w + \rhob N^2 w =0,
\end{equation}
with Dirichlet boundary conditions. We thus see that the non-dimensionalization process consists in replacing the Fourier variable $k$ by $\sqrt{\mu}k$. Hence, Figure \ref{fig:dispersion:long} still describes the phase velocities in the non-dimensional case, and we can specify the dependency in the shallow-water parameter
$$k_{\min,\delta,\mu} = \frac{1}{\sqrt{\mu}} k_{\min,\delta}, \qquad k_{\max,\delta,\mu} = \frac{1}{\sqrt{\mu}} k_{\max,\delta}, \qquad c_{\max,\delta,\mu} = c_{\max,\delta}.$$
We also find, with the same rescaling argument,
\begin{equation}
\label{eqn:def:kc:mu}
\omega^*_{\delta,\mu} \coloneq  \argmax_{c\in \mathcal{C}_{\delta,\mu}(k), |k| \in [k_{\min,\delta,\mu},k_{\max,\delta,\mu}]\cap\N} \left(k \Im(c)\right)= \frac{1}{\sqrt{\mu}} \omega^*_{\delta} \qquad \text{and} \qquad k^*_{\mu,\delta}  = \frac{1}{\sqrt{\mu}} k^*_{\delta},
\end{equation}
where $\mathcal{C}_{\delta,\mu}(k)$ is defined as in \eqref{eqn:def:mathcalC} but with $\mu$ possibly different from $1$. Now the same discussion as in Section \ref{subsection:csq:ebc} holds: a solution $\mathcal{U}^*_{\delta,\mu}(t,x,r)$ of the form
$$ \mathcal{U}^*_{\delta,\mu}(t,x,r) =\mathcal{U}^* e^{-it\omega^*_{\delta,\mu} + ik^*_{\delta,\mu}x} f^*(r),$$
which is in $H^{s}$ at $t=0$ exhibits exponential growth, even in $L^2$ norm:
\begin{equation}
\label{eqn:exp_growth_delta_mu}
\Vert \mathcal{U}^*_{\delta,\mu}(t,\cdot,\cdot)\Vert_{L^2(S)} \approx \mu^{\frac{s}{2}}\delta^{s} e^{t \mu^{-\frac12}\delta^{-1}}.
\end{equation}
In other words, because of the Kelvin-Helmholtz instabilities, it does not seem to be possible to define solutions of (the non-dimensional form of) \eqref{eqn:euler:isopycnal:Vb}, on a time interval $[0,T]$ independent of $\delta$ and $\mu$, for generic initial data in a Sobolev space. In fact, the smaller the value of $\mu$, the faster the exponential growth is. This is in sharp contrast with the expected reduced model at $\delta = \mu = 0$: indeed, the bilayer shallow-water equations are well-posed in Sobolev spaces, even at the non-linear level (under a hyperbolicity condition, see \cite[Chapter B, Section 6]{Duchene2022a} and references \nv{therein}). This can be understood through the following formal argument: our findings show that the low frequencies, i.e. $|k|\leq k_{\min,\delta,\mu}$ are stable, i.e. $\Im(c) \approx 0 $ - this last approximation seems to hold only asymptotically in $\delta$, see Figure \ref{fig:clf} and Conjecture \ref{conj}. However, it appears that $k_{\min,\delta,\mu} \approx \frac{1}{\sqrt{\mu}}$, so that when formally setting $\mu = 0$, all frequencies are stable. 
\FloatBarrier
\section{Conclusion and perspectives}
\label{section:conclusion}
In this study we investigate the sharp stratification limit, in the framework of the linear stratified Euler equations \eqref{eqn:euler:isopycnal:Vb} with a particular background density profile $\rhob_{\delta}$ that converges to a piecewise constant profile when $\delta$ goes to $0$; here, $\delta$ is the size of the pycnocline, the transition layer between the upper and lower layers. We also consider a background velocity profile $\Vb$ that is a shear flow, i.e. depending only on the vertical variable. We compare this system with the bilayer Euler equations \eqref{eqn:euler:bilayer}.\\

In the case $\Vb = 0$, the linear bilayer Euler equations do not exhibit instabilities. More precisely we derive energy estimates for these equations and for the stratified Euler equations \eqref{eqn:euler:isopycnal:no_shear}, to prove in Proposition \ref{prop:cvce} that solutions to these systems starting from close initial data, remain close on some time interval $[0,T]$; we measure their difference with respect to the parameter $\delta$. For this we find that isopycnal coordinates (see Section \ref{subsection:isopyc}) are particularly useful. \\

In the general non-linear Euler equations (in the continuously stratified or bilayer settings), there is a non-zero horizontal velocity. We introduce the shear velocity $\Vb_{\delta}$ in \eqref{eqn:def:sharp:shear} to study its influence on the behavior of solutions. It is well-known (see \cite[Chapter 1, Section 3]{Duchene2022a}) that the bilayer Euler equations linearized around a non-zero shear velocity suffer from the Kelvin-Helmholtz instabilities (see Figure \ref{fig:dispersion:ebc}), which imply arbitrarily fast growing modes. This implies ill-posedness of the bilayer Euler equations in Sobolev spaces. It should be noted that the non-linear stratified Euler equations are proved to be well-posed in Sobolev spaces (see \cite{Desjardins2019}, \cite{Fradin2024}), however on a time interval that may shrink to the singleton $\{0\}$ as $\delta$ goes to $0$.\\

In this unstable case, we provide some numerical results. To this end, we develop a numerical strategy in Section \ref{section:num_scheme}, based on a modal decomposition, adapted to the stratification profile at hand, and in particular allows to treat the case where the Brunt-Väisälä frequency depends on the vertical variable. In this case, we provide some numerical illustrations (see Figures \ref{fig:waveguide:snapshots} and \ref{fig:waveguide:reflection:snapshots}) that show the effect of a pycnocline of non-zero thickness on the propagation of internal waves.\\

The numerical strategy developed in Section \ref{section:num_scheme} allows us to compute the dispersion relation (more precisely, an approximation of the phase velocities) of the stratified Euler equations \eqref{eqn:euler:isopycnal:Vb}, shown in Figures \ref{fig:dispersion:strat}, \ref{fig:dispersion:long}, \ref{fig:dispersion:superimposed}. We find a good agreement with the dispersion relation in the bilayer case: namely, stability of the low frequencies, a threshold $k_{\min,\delta}$ for which higher frequencies are unstable. The main difference is the presence of a threshold $k_{\max,\delta}$ after which higher frequencies are stable again, see Figure \ref{fig:dispersion:long}. We also provide some quantitative estimations of these features in Section \ref{subsubsection:sharp_strat:shear:quantitative}. This allows us to formulate Conjecture \ref{conj} on the general features of the phase velocities of the stratified Euler equations, in the sharp stratification limit. Overall, we find that the bilayer equations, which we see as a reduced model that can be formally derived from the stratified Euler equations, reproduce the Kelvin-Helmholtz instabilities with good accuracy.\\

Our findings also indicate that there exist exponential growing modes, with rate $\Im(\omega_{\max,\delta})\approx \delta^{-1}$ (see Figure \ref{fig:omegamax}). We explain in Section \ref{subsection:sharp_strat:shear:csq} that, according to these findings, it is not possible to define solutions in Sobolev spaces on a time interval uniform in $\delta$, which is the crucial starting point to prove the convergence of solutions of the stratified Euler equations \eqref{eqn:euler:isopycnal:Vb} towards a bilayer model. This is consistent when considering the limit $\delta \to 0$ alone, as in this case the reduced model (the bilayer Euler equations) \nv{is} ill-posed. We find that this conclusion still holds when also allowing smallness of the shallow-water parameter $\mu$. This is a striking fact: when considering the (formal) double limit $\delta, \mu \to 0$, the reduced model is the bilayer shallow-water equations, which are well-posed in Sobolev spaces. Yet, it does not appear to be possible to justify the shallow-water equations from the stratified Euler equations, in the Sobolev framework.\\

The main perspective of the present article is to prove Conjecture \ref{conj}, i.e. rigorously describe the dispersion relation shown in Figure \ref{fig:dispersion:long}, and more precisely the various features (asymptotic stability of the low frequencies $|k| \leq k_{\min,\delta}$, and high enough frequencies $|k| \geq k_{\max,\delta}$, Kelvin-Helmholtz instabilities for $k_{\min,\delta} \leq |k| \leq k_{\max,\delta}$), as well as asymptotics in the regime $\delta \to 0$ for some key elements (for instance, $k_{\min,\delta}, k_{\max,\delta}$ and $\omega_{\max,\delta}$, defined in Figure \ref{fig:dispersion:long} and \eqref{eqn:def:kc}). A first step towards this line of research is to study the Sturm-Liouville problem \eqref{eqn:SL} with $\rhob$ given by \eqref{eqn:def:sharp_strat}. This would for instance provide a justification of the asymptotic of the mode speeds $c_n$ displayed in Figure \ref{fig:c}. It would also shed light onto the behavior of the various coupling operators in \eqref{eqn:modes:couple} and \eqref{eqn:modes:couple:shear}, in the sharp stratification limit.\\

Another perspective from the numerical point of view is to extend the numerical strategy described in Section \ref{section:num_scheme} to the non-linear setting. In order to keep an efficient modal approach as the one we performed in the present study, a fast Sturm-Liouville transform would be required, in order to adapt the standard pseudo-spectral methods to our framework.


\bibliographystyle{amsalpha}
\bibliography{../../../Seafile/refs/refs} 
\appendix

\section{The Taylor Goldstein equation and the modal decomposition in the presence of a shear flow}
\label{section:apdx:derivation}
In this appendix we show how to derive the Taylor-Goldstein equation \eqref{eqn:taylor_goldstein} and the system \eqref{eqn:modes:couple:shear}, for the sake of completeness.
\subsection{Derivation of \eqref{eqn:taylor_goldstein}}
We now show how to derive the Taylor-Goldstein equation \eqref{eqn:taylor_goldstein} (and therefore, \eqref{eqn:taylor_goldstein:reduit}, which is exactly \eqref{eqn:taylor_goldstein} with $\Vb = 0$). Let $V,w,P,\eta$ be a smooth solution of \eqref{eqn:euler:isopycnal:Vb}. Applying $\rhob \partial_x$ to the first equation in \eqref{eqn:euler:isopycnal:Vb} and using the incompressibility condition, we get
\begin{equation}
\label{eqn:tg:temp:1}
- \rhob (\partial_t + \Vb \partial_x) \partial_r w + \rhob \Vb' \partial_x w + \partial_x^2 P = 0.
\end{equation}%
Applying $\partial_r$ to \eqref{eqn:tg:temp:1} yields
\begin{equation}
\label{eqn:tg:temp:2}
- (\partial_t + \Vb \partial_x) \partial_r (\rhob \partial_r w) - \Vb' \rhob \partial_x \partial_r w + (\rhob \Vb')' \partial_x w + \rhob \Vb'\partial_r  \partial_x w+ \partial_r \partial_x^2 P = 0;
\end{equation}
notice that the second and fourth terms in \eqref{eqn:tg:temp:2} cancel out each other. Now applying $\rhob \partial_x^2$ to the second equation in \eqref{eqn:euler:isopycnal:Vb} and substracting the result from \eqref{eqn:tg:temp:2}, we get
\begin{equation}
\label{eqn:tg:temp:3}
- (\partial_t + \Vb \partial_x) \left(\partial_r (\rhob \partial_r w) + \rhob \partial_x^2 w\right) + (\rhob \Vb')' \partial_x w - \rhob N^2 \partial_x^2 \eta = 0.
\end{equation}
We now apply $-\frac{1}{\rhob}(\partial_t + \Vb \partial_x)$ to \eqref{eqn:tg:temp:3} and use the third equation in \eqref{eqn:euler:isopycnal:Vb} to get the Taylor-Goldstein equation \eqref{eqn:taylor_goldstein}.

\subsection{Derivation of \eqref{eqn:modes:couple:shear}}
We now show how to obtain \eqref{eqn:modes:couple:shear}. Let $V,w,P,\eta$ be a smooth solution of \eqref{eqn:euler:isopycnal:Vb}. We decompose $V$ along $(g_n)_{n\in \N}$, which is a basis of $L^2_{\rhob}$, see Lemma \ref{lemma:SL}. As noted in Remark \ref{rk:coef_zero:shear}, $V_0$ is constant in time and space, so that we focus on the evolution of $V_n$ for $n \geq 1$. Similarly, we decompose $P$ along $(\rhob g_n)_{n\in\N}$, which is a basis of $L^2_{\rhob^{-1}}$. We decompose $w$ and $\eta$ along $(f_n)_{n\in \N^*}$, which is a basis of $L^2_{\rhob N^2}$. We write $V_n,P_n,w_n,\eta_n$ for their coefficients, respectively. Plugging the decompositions of $V,w$ and $P$ into the first equation of \eqref{eqn:euler:isopycnal:Vb} and taking the $L^2_{\rhob}$ scalar product with $g_n$, we get, for $n \in \N^*$:
\begin{equation}
\label{eqn:modes:temp:1}
\partial_t V_n + \sum_{m \in \N^*}\partial_x V_m \int_{-H}^0 \Vb g_m g_n \rhob  + \sum_{m \in \N^*} w_m \int_{-H}^0 \Vb' f_m g_n \rhob  + \partial_x P_n=0.
\end{equation}
We integrate by parts in the last integral in \eqref{eqn:modes:temp:1} and use \eqref{eqn:SL} and \eqref{eqn:def:gn} to get
\begin{equation}
\label{eqn:modes:temp:2}
\partial_t V_n + \sum_{m \in \N^*}\partial_x V_m \int_{-H}^0 \Vb g_m g_n \rhob  - \sum_{m \in \N^*} w_m \left(\frac{1}{c_m}\int_{-H}^0 \Vb g_m g_n \rhob - \frac{1}{c_n}\int_{-H}^0 \Vb f_m f_n \rhob N^2 \right) + \partial_x P_n=0.
\end{equation}
We now plug the modal decomposition of $V$ and $w$ into the incompressibility condition in \eqref{eqn:euler:isopycnal:Vb} to write, for $n \in \N^*$:
\begin{equation}
\label{eqn:modes:temp:3}
w_n = - c_n \partial_x V_n,
\end{equation}
as $c_n f_n' =g_n$. Plugging \eqref{eqn:modes:temp:3} into \eqref{eqn:modes:temp:2} and using \eqref{eqn:def:A}, we get
\begin{equation}
\label{eqn:modes:temp:4}
\partial_t V_n + \left((2A^1 + A^2)\partial_x V\right)_n + \partial_x P_n=0.
\end{equation}
Now notice that
\begin{equation}
\label{eqn:modes:temp:5}
\partial_r P = \partial_r \left(\sum_{n \in \N} P_n \rhob g_n\right) = - \sum_{n \in \N^*} P_n \frac{\rhob N^2}{c_n} f_n. 
\end{equation}
Note that we used that $\rhob g_0$ is constant. Now plugging the modal decomposition into the second equation in \eqref{eqn:euler:isopycnal:Vb}, and taking the $L^2_{\rhob}$ scalar product with $f_n$, we get
\begin{equation}
\label{eqn:modes:temp:6}
\sum_{m\in \N^*} \partial_t w_m \int_{-H}^0 f_m f_n \rhob  + \sum_{m\in \N^*} \partial_x w_m \int_{-H}^0 \Vb f_n f_m \rhob - \frac{1}{c_n} P_n + \eta_n = 0.
\end{equation}
Applying $c_n \partial_x$ to \eqref{eqn:modes:temp:6}, using \eqref{eqn:modes:temp:3} and adding the result to \eqref{eqn:modes:temp:4}, we get the first equation in \eqref{eqn:modes:couple:shear}.\\

Now plugging the modal decomposition in the third equation in  \eqref{eqn:euler:isopycnal:Vb} and using \eqref{eqn:modes:temp:3}, we get the second equation in \eqref{eqn:modes:couple:shear}.

\section{The sharp stratification limit with $\Vb = 0$}
\label{section:apdx:preuve}
In this appendix we prove Propositions \ref{prop:nrj:bilayer} and \ref{prop:cvce}.
To this end, we start in Section \ref{subsection:euler:strat:sharp} by showing how the linear stratified Euler equations \eqref{eqn:euler:isopycnal:no_shear} are globally well-posed in Proposition \ref{prop:nrj}. Then in Section \ref{subsection:euler:bilayer} we show how the linear bilayer Euler equations \eqref{eqn:euler:bilayer} are also globally well-posed. Then, in Section \ref{subsection:strat_to_bilayer} we perform estimates on the difference between the solutions of \eqref{eqn:euler:isopycnal:no_shear} and \eqref{eqn:euler:bilayer}, stemming from the same initial data; that is to say, we prove Proposition \ref{prop:cvce}. \\

\subsection{The stratified Euler equations with a sharp stratification}
\label{subsection:euler:strat:sharp}
In this section, we briefly show that the system \eqref{eqn:euler:isopycnal:no_shear} with a sharp stratification as in \eqref{eqn:def:sharp_strat} is well-posed, at the linear level and without a shear flow. 
\begin{proposition}
\label{prop:nrj}
Let $s \geq 1$, and $(V_{\ini},w_{\ini},\eta_{\ini}) \in H^{s,0}(S) \times H^{s,1}(S)^2$ satisfying the incompressibility condition in \eqref{eqn:euler:isopycnal:no_shear} and the boundary conditions \eqref{eqn:euler:iso:bc}. There exists a unique solution $(V,w,\eta)\in C^0([0,+\infty),H^{s,0}(S) \times H^{s,1}(S)^2)$ to \eqref{eqn:euler:isopycnal:no_shear}, and it satisfies, for $t > 0$:
\begin{equation}
\label{eqn:nrj:delta}
\Vert \sqrt{\rhob_{\delta}} V(t,\cdot)\Vert_{H^{s,0}}^2 + \Vert \sqrt{\rhob_{\delta}} w(t,\cdot)\Vert_{H^{s,1}}^2 + \Vert \sqrt{-g\rhob'_{\delta}} \eta(t,\cdot) \Vert_{H^{s,0}}^2 \leq \Vert \sqrt{\rhob_{\delta}} V_{\ini}\Vert_{H^{s,0}}^2 + \Vert \sqrt{\rhob_{\delta}} w_{\ini}\Vert_{H^{s,1}}^2 + \Vert \sqrt{-g\rhob'_{\delta}} \eta_{\ini} \Vert_{H^{s,0}}^2.
\end{equation}
Moreover, it holds
\begin{equation}
\label{eqn:nrj:delta:dreta}
\Vert \partial_r \eta(t,\cdot) \Vert_{H^{s-1,0}} \leq \Vert \partial_r \eta_{\ini} \Vert_{H^{s-1,0}}+ t \Vert V_{\ini} \Vert_{H^{s,0}}.
\end{equation}
\end{proposition}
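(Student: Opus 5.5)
The plan is to work entirely with $x$-independent coefficients, so that every estimate follows from a single $L^2$ energy identity applied to $\partial_x^k$ of the unknowns. Existence and uniqueness then follow from the a priori estimate by a standard Galerkin or Fourier-in-$x$ argument. For each horizontal frequency the system is a finite-energy linear evolution; one could also invoke the modal system \eqref{eqn:modes:couple}, which is Proposition \ref{prop:nrj:boussinesq:couple:shear} with $\Vb=0$.

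\textbf{Step 1: energy identity.} Let $(V,w,\eta)$ be a smooth solution. Multiply the first equation of \eqref{eqn:euler:isopycnal:no_shear} by $\rhob_\delta V$, the second by $\rhob_\delta w$, and the third by $-g\rhob_\delta'\eta$, then integrate over $S$. The pressure contributions combine into $\int (\partial_x P\, V+\partial_r P\, w)$. This term vanishes after integration by parts, using $\partial_x V+\partial_r w=0$, periodicity in $x$, and $w_{|r=-H}=w_{|r=0}=0$. The buoyancy term $-g\rhob_\delta'\eta w$ cancels against $-g\rhob_\delta'\eta\,\partial_t\eta=-g\rhob'_\delta\eta w$. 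Hence
$$E_0(t)\coloneq \Vert\sqrt{\rhob_\delta}V\Vert_{L^2}^2+\Vert\sqrt{\rhob_\delta}w\Vert_{L^2}^2+\Vert\sqrt{-g\rhob_\delta'}\eta\Vert_{L^2}^2$$
is conserved. Since $\partial_x^k$ commutes with the system and preserves the constraints, $E_k$, defined with $\partial_x^k$ applied to each unknown, is conserved for all $k\le s$. This gives the $H^{s,0}$ parts of \eqref{eqn:nrj:delta}, uniformly in $\delta$.

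\textbf{Step 2: the $\partial_r w$ part of the $H^{s,1}$ norm.} By incompressibility $\partial_r\partial_x^{k}w=-\partial_x^{k+1}V$ for $k\le s-1$. The extra terms of $\Vert\sqrt{\rhob_\delta}w\Vert_{H^{s,1}}^2$ are therefore $\sum_{k\le s-1}\Vert\sqrt{\rhob_\delta}\partial_x^{k+1}V\Vert_{L^2}^2$. The same identity holds at $t=0$ for the initial data, since the data satisfy incompressibility. To close, I need the weighted horizontal part of the energy to be controlled by its own initial value at each level. My first attempt is to pass to the modal variables, where
$$\Vert\sqrt{\rhob_\delta}\partial_x^kV\Vert^2=\sum_n|\partial_x^kV_n|^2,\qquad \Vert\sqrt{\rhob_\delta}\partial_x^kw\Vert^2=|M^{1/2}\partial_x^{k+1}V|^2,\qquad \Vert\sqrt{-g\rhob'_\delta}\partial_x^k\eta\Vert^2=\sum_n|\partial_x^k\eta_n|^2.$$
These follow from orthonormality of $(g_n)$ in $L^2_{\rhob}$ and of $(f_n)$ in $L^2_{\rhob N^2}$, together with $\rhob N^2=-g\rhob'$. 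I would then regroup the sum over the index set of $H^{s,1}$ into a sum of the conserved $E_k$. If this regrouping does not reproduce exactly the right-hand side of \eqref{eqn:nrj:delta}, the fallback is to bound the extra terms by $E_{k+1}(0)$. That gives the estimate with the initial $H^{s,1}$ norm of $w$ read through $\partial_r w_{\ini}=-\partial_xV_{\ini}$.

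\textbf{Step 3: the bound \eqref{eqn:nrj:delta:dreta}.} The third equation gives $\partial_t\partial_r\eta=\partial_r w=-\partial_xV$. Integrating in time,
$$\partial_r\eta(t)=\partial_r\eta_{\ini}-\int_0^t\partial_xV(\tau)\,d\tau,$$
so that $\Vert\partial_r\eta(t)\Vert_{H^{s-1,0}}\le\Vert\partial_r\eta_{\ini}\Vert_{H^{s-1,0}}+t\sup_{\tau}\Vert V(\tau)\Vert_{H^{s,0}}$. It then remains to show $\Vert V(\tau)\Vert_{H^{s,0}}\le\Vert V_{\ini}\Vert_{H^{s,0}}$ without weights.

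This last step is the main obstacle. The conserved quantity of Step 1 only controls the $\rhob_\delta$-weighted norm, and comparing the two norms costs a factor $\sqrt{\rho_-/\rho_+}$. I would first try to obtain unweighted control frequency by frequency in $x$. For fixed $k$, the operator is self-adjoint for the weighted inner product, so the weighted norm of each spectral component is preserved. The approach is then to check whether the horizontal-velocity component admits an unweighted monotone quantity: test the first equation against $V$ itself, and use $\partial_xP=-\rhob_\delta\partial_tV$ together with the elliptic equation for $P$. If that fails, the constant in front of $t\Vert V_{\ini}\Vert_{H^{s,0}}$ would have to be read as depending only on $\rho_\pm$, which is still uniform in $\delta$.
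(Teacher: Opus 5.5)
Your Step 1 and the first halves of Steps 2 and 3 are the paper's proof. It conserves the weighted $L^2$ energy, commutes with $\partial_x^\alpha$, and uses $\partial_r w=-\partial_x V$ and $\partial_t\partial_r\eta=-\partial_x V$; from these two identities it directly reads off \eqref{eqn:nrj:delta} and \eqref{eqn:nrj:delta:dreta}. The gap is in how you propose to close what remains. The obstacle is not the weight $\rhob_\delta$ but the exchange of energy between $(V,w)$ and $\eta$, and both of your targets are false.

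Take $V_{\ini}=0$ (hence $w_{\ini}=0$, by incompressibility and $w_{\ini|r=-H}=0$). Take $\eta_{\ini}=\cos(x/L)\phi(r)$ with $\phi\neq0$ compactly supported in $(-H,0)$. The source $\partial_r(g\rhob_\delta'\phi/\rhob_\delta)$ in \eqref{eqn:elliptic:1} is then nonzero, so $P$ has a nonzero $\cos(x/L)$-component. Therefore $\partial_t V_{|t=0}=-\rhob_\delta^{-1}\partial_x P\neq0$, and $V(t)\neq0$ for small $t>0$. So no bound $\sup_{\tau}\Vert V(\tau)\Vert_{H^{s,0}}\le C\Vert V_{\ini}\Vert_{H^{s,0}}$ can hold, for any $C=C(\rho_\pm)$ and any weighted or unweighted functional. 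The end of your Step 3 cannot succeed.

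The same example disposes of Step 2. Reading the weight outside the derivatives, as you do, the left-hand side of \eqref{eqn:nrj:delta} is $\sum_{k\le s}E_k(t)+\sum_{j=1}^{s}\Vert\sqrt{\rhob_\delta}\,\partial_x^jV(t)\Vert_{L^2}^2$. The first sum is conserved; the second vanishes at $t=0$ and is positive for small $t>0$. So no regrouping yields \eqref{eqn:nrj:delta} with constant one. Your fallback does not give \eqref{eqn:nrj:delta} either: $E_j(0)$ also contains $w_{\ini}$ and $\eta_{\ini}$, so it only bounds the left-hand side by twice the right-hand side.

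What the argument (yours and the paper's) actually proves is the following. First, $\mathcal E_s\coloneq\sum_{k\le s}E_k$ is conserved. Second, the left-hand side of \eqref{eqn:nrj:delta} is at most $2\mathcal E_s(0)$. Third, using $\rhob_\delta\ge\rho_+$,
\[
\Vert\partial_r\eta(t)\Vert_{H^{s-1,0}}\le\Vert\partial_r\eta_{\ini}\Vert_{H^{s-1,0}}+t\,\rho_+^{-1/2}\,\mathcal E_s(0)^{1/2}.
\]
All constants here are independent of $\delta$. The last bound is all that the proof of Proposition \ref{prop:cvce} uses. Aim for this form, with the full initial energy (including $\eta_{\ini}$) in place of $\Vert V_{\ini}\Vert_{H^{s,0}}$, rather than for the literal constants.
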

Note that the statement of Proposition \ref{prop:nrj} holds for any smooth profile $\rhob$ satisfying $c_* \leq \rhob \leq c^*$ and $-\rhob' \geq c_*$ for some $c_*,c^* > 0$, and is not restricted to the profile $\rhob_{\delta}$ defined in \eqref{eqn:def:sharp_strat}. We keep however this notation, to insist on the fact that Proposition \ref{prop:nrj} provides bounds on the solution of \eqref{eqn:euler:isopycnal:no_shear} that are uniform in $\delta$. 
\begin{proof}
Let us first mention that the pressure is defined from an elliptic problem: taking the divergence of the equations on $V$ and $w$ in \eqref{eqn:euler:isopycnal:no_shear} yields
\begin{equation}
\label{eqn:elliptic:1}
 \partial_x \frac{1}{\rhob_{\delta}} \partial_x P + \partial_r \frac{1}{\rhob_{\delta}} \partial_r P = \partial_r \left(\frac{g \rhob'_{\delta}}{\rhob_{\delta}} \eta\right),
\end{equation}
completed with Neumann conditions at the flat bottom and surface, obtained by taking the trace of the equation on $w$ in \eqref{eqn:euler:isopycnal:no_shear} at the bottom and surface. Then classical elliptic theory (see for instance \cite[I, Chapter 5, Section 7]{Taylor2011}) yields the existence and uniqueness of the pressure in a suitable Sobolev space.\\
Using the standard energy estimates for \eqref{eqn:euler:isopycnal:no_shear} (see for instance \cite{Desjardins2019,Fradin2024}), we directly get
\begin{equation}
\label{eqn:nrj:delta:L2}
\frac{d}{dt}\left( \Vert \sqrt{\rhob_{\delta}} V\Vert_{L^2}^2 + \Vert \sqrt{\rhob_{\delta}} w\Vert_{L^2}^2 + \Vert \sqrt{-g\rhob'_{\delta}} \eta \Vert_{L^2}^2 \right) = 0.
\end{equation}
Applying $\partial_x^{\alpha}$ to the system \eqref{eqn:euler:isopycnal:no_shear} for $\alpha \in \N$, we get that $(\dot{V},\dot{w},\dot{\eta},\dot{P})\coloneq  (\partial_x^{\alpha} V,\partial_x^{\alpha} w,\partial_x^{\alpha} \eta,\partial_x^{\alpha} P)$ satisfies \eqref{eqn:euler:isopycnal:no_shear}. Applying the previous $L^2$ estimate to $(\dot{V},\dot{w},\dot{\eta},\dot{P})$ and summing over $|\alpha| \leq s$ yields
\begin{equation}
\label{eqn:nrj:delta:temp}
\frac{d}{dt}\left( \Vert \sqrt{\rhob_{\delta}} V\Vert_{H^{s,0}}^2 + \Vert \sqrt{\rhob_{\delta}} w\Vert_{H^{s,0}}^2 + \Vert \sqrt{-g\rhob'_{\delta}} \eta \Vert_{H^{s,0}}^2 \right) = 0.
\end{equation}
From such an estimate the uniqueness of solutions of \eqref{eqn:euler:isopycnal:no_shear} follows, and the existence classically follows.\\
Thanks to the incompressibility condition, we get $w \in C^0([0,+\infty),H^{s,1}(S))$. Then, differentiating the equation on $\eta$ in \eqref{eqn:euler:isopycnal:no_shear} and using the incompressibility condition, we get 
$$\partial_t \partial_r \eta =- \partial_x V,$$
which yields $\eta \in C^0([0,+\infty),H^{s,1}(S))$ and \eqref{eqn:nrj:delta:dreta}.
\end{proof}

\subsection{The bilayer Euler equations}
\label{subsection:euler:bilayer}
Before proving Proposition \ref{prop:nrj:bilayer}, let us briefly explain how the pressure $P_{\pm}$ in \eqref{eqn:euler:bilayer} is defined. Assume that $V_{\bl},w_{\bl},\zeta,P_{\bl}$ satisfy \eqref{eqn:euler:bilayer} with boundary conditions \eqref{eqn:euler:bilayer:bc}, as well as the continuity of $w_{\bl}$ across the interface. Then, taking the divergence of the first two equations in \eqref{eqn:euler:bilayer} yields
\begin{equation}
\label{eqn:def:P:bilayer}
\partial_x \frac{1}{\rho_{\bl}} \partial_x P_{\bl} + \partial_r \frac{1}{\rho_{\bl}} \partial_r P_{\bl} = - g\partial_x^2 \zeta \qquad \text{ in } S_{\bl}.
\end{equation} 
Let us insist on the fact that \eqref{eqn:def:P:bilayer} defines $P_+$ as the solution to an elliptic equation (with boundary conditions yet to define) in the upper layer $S_+$ (resp. $P_-$ in $S_-$). Taking the trace of the second equation in \eqref{eqn:euler:bilayer} at $r=-H$ and $r=0$ yields
\begin{equation}
\label{eqn:def:P:bilayer:bc}
\sys{\partial_r P_{+|r=0} &=0,\\
	\partial_r P_{-|r=-H} &= 0.} 
\end{equation}
At the interface $r=r_*$ between both layers, we do not impose boundary conditions, but rather transmission conditions: first, we impose the continuity of the pressure across the interface, which stems from the absence of surface tension. Taking the trace of the second equation in \eqref{eqn:euler:bilayer} at $r=r_*$ and using the continuity of $w_{\bl}$ across the interface (in \eqref{eqn:euler:bilayer:bc}), we finally write the two transmission conditions
\begin{equation}
\label{eqn:def:P:bilayer:tc}
\sys{P_{+|r=r_*} &= P_{-|r=r_*},\\
	\frac{1}{\rho_+} (\partial_r P_+)_{|r=r_*} &= \frac{1}{\rho_-} (\partial_r P_-)_{|r=r_*}.}
\end{equation}%
In order to study \eqref{eqn:def:P:bilayer}, we define the following Hilbert space
\begin{equation}
\label{eqn:def:hilbert:P:bilayer}
\mathcal H \coloneq  \{ (P_+,P_-)\in L^2_{\loc}(S_+) \times L^2_{\loc}(S_-), \nabla_{x,r} P_+ \in L^2(S_+), \nabla_{x,r} P_- \in L^2(S_-), P_{+|r=r_*} = P_{-|r=r_*} \}/\R, 
\end{equation}
where $L^2_{\loc}$ is the space of locally square integrable functions. As before, we write $P_{\bl} \in \mathcal H$ for $(P_+,P_-) \in \mathcal H$, with a slight abuse of notations. We endow $\mathcal H$ with the scalar product
$$ \langle P_{\bl} , \phi_{\bl} \rangle_{\mathcal H} \coloneq  \int_{S_+} \nabla_{x,r} P_+ \cdot \nabla_{x,r} \phi_+ + \int_{S_-} \nabla_{x,r} P_- \cdot \nabla_{x,r} \phi_-.$$%
Note that $\langle P_{\bl},P_{\bl} \rangle_{\mathcal H} = 0$ implies that $P_+$ and $P_-$ are constants. Because $P_+=P_-$ on the interface $r=r_*$, these constants are equal. Because the space $\mathcal H$ is quotiented by constants, this yields that $P_{\bl} = 0$ in $\mathcal H$. We now state the following lemma.

\begin{lemma}
Let $s \geq 0$, $\zeta \in H^{s}(\mathbb T_L)$. There exists a unique pressure $P_{\bl} \in \mathcal H$ satisfying \eqref{eqn:def:P:bilayer} with boundary conditions \eqref{eqn:def:P:bilayer:bc} and the transmission conditions \eqref{eqn:def:P:bilayer:tc}. Moreover, there exists $C > 0$ depending only on $\rho_+, \rho_-,g$ such that 
\begin{equation}
\label{eqn:P:bilayer:estimates}
\Vert \partial_x P_{\bl}\Vert_{H^{s,0}} + \Vert \partial_r P_{\bl} \Vert_{H^{s,0}} + \left\Vert \partial_r \frac{1}{\rho_{\bl}} \partial_r P_{\bl} \right\Vert_{H^{s-1,0}} \leq C \vert \partial_x \zeta \vert_{H^{s}}.
\end{equation}
\end{lemma}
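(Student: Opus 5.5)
We would prove existence and uniqueness through a variational formulation in $\mathcal H$, and then obtain \eqref{eqn:P:bilayer:estimates} by exploiting the invariance in $x$ together with an explicit Fourier computation in the horizontal variable.

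\textbf{Weak formulation.} Multiply \eqref{eqn:def:P:bilayer} by a test function $\phi_{\bl}\in\mathcal H$ and integrate by parts in each layer. The boundary terms at $r=0$ and $r=-H$ vanish by the Neumann conditions \eqref{eqn:def:P:bilayer:bc}. At $r=r_*$, the boundary terms combine into
$$\int_{\mathbb T_L}\Big(\tfrac{1}{\rho_+}\partial_rP_+-\tfrac{1}{\rho_-}\partial_rP_-\Big)_{|r=r_*}\phi_{|r=r_*},$$
since $\phi_+=\phi_-$ on the interface. By the second transmission condition in \eqref{eqn:def:P:bilayer:tc}, this term vanishes. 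The natural problem is therefore: find $P_{\bl}\in\mathcal H$ such that for all $\phi_{\bl}\in\mathcal H$,
$$\int_{S_+\cup S_-}\tfrac{1}{\rho_{\bl}}\nabla P_{\bl}\cdot\nabla\phi_{\bl}=-\int_{S}g\,\partial_x\zeta\,\partial_x\phi_{\bl}.$$
Here we have moved one $x$-derivative from $-g\partial_x^2\zeta$ onto $\phi$, using periodicity in $x$. The bilinear form is coercive on $\mathcal H$ because $\rho_\pm$ are positive constants, and $\langle P_{\bl},P_{\bl}\rangle_{\mathcal H}=0$ forces $P_{\bl}=0$ in $\mathcal H$, as noted just before the lemma. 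The right-hand side is bounded by $g\sqrt H|\partial_x\zeta|_{L^2}\|\phi\|_{\mathcal H}$. Lax--Milgram therefore gives a unique $P_{\bl}\in\mathcal H$ together with the bound $\|\nabla P_{\bl}\|_{L^2}\le C|\partial_x\zeta|_{L^2}$. Conversely, choosing test functions supported in each layer recovers \eqref{eqn:def:P:bilayer}, and choosing general test functions recovers the Neumann and transmission conditions weakly. This settles the case $s=0$ for the first two terms of \eqref{eqn:P:bilayer:estimates}.

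\textbf{Higher regularity.} Since $\rho_{\bl}$ and the domain do not depend on $x$, applying $\partial_x^\alpha$ to the problem for $\alpha\le s$ shows that $\partial_x^\alpha P_{\bl}$ solves the same problem with $\zeta$ replaced by $\partial_x^\alpha\zeta$. One can justify this with difference quotients in $x$, or more simply in Fourier. Summing the $s=0$ bound over $\alpha\le s$ gives the $H^{s,0}$ control of $\partial_xP_{\bl}$ and $\partial_rP_{\bl}$.

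\textbf{Third term and the main obstacle.} For the last term we use the equation itself:
$$\partial_r\big(\tfrac{1}{\rho_{\bl}}\partial_rP_{\bl}\big)=-g\partial_x^2\zeta-\tfrac{1}{\rho_{\bl}}\partial_x^2P_{\bl}.$$
The $H^{s-1,0}$ norm of the right-hand side is bounded by $|\partial_x\zeta|_{H^s}+\|\partial_xP_{\bl}\|_{H^{s,0}}$, which is already controlled. The main point to check is that no jump of $\tfrac{1}{\rho}\partial_rP$ across $r=r_*$ creates a singular part in this derivative; this is exactly what the second transmission condition in \eqref{eqn:def:P:bilayer:tc} guarantees. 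A more concrete alternative would be to solve explicitly mode by mode: with $\hat P_{k}(r)$ combinations of $\cosh(k(r+H))$ and $\cosh(kr)$ in the two layers, the two interface conditions determine the coefficients. One then checks that $|k|\,|\hat P_k|$ and $|\partial_r\hat P_k|$ in $L^2(dr)$ are bounded by $C|k\hat\zeta_k|$ uniformly in $k$, where $C$ depends only on $\rho_\pm,g$ (and $H$). The interface jumps are the delicate step, and this explicit computation gives a fallback if the weak argument is not sufficiently convincing there.
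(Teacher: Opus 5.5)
Your proof is correct and follows essentially the same route as the paper: a weak formulation in $\mathcal H$, existence and uniqueness by the classical (Lax--Milgram) argument, and standard elliptic estimates. The paper only cites these estimates; you make them explicit by commuting $\partial_x$ with the $x$-independent problem and reading off $\partial_r(\frac{1}{\rho_{\bl}}\partial_r P_{\bl})$ from the equation, with the second transmission condition built into the weak formulation so that no interface singularity arises. There is one small slip, and it is confined to your Fourier fallback: in each layer you must add the particular solution $-\rho_\pm g\hat\zeta_k$ to the $\cosh$ terms, since otherwise the interface conditions are homogeneous and force $\hat P_k=0$.
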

\begin{proof}
Let $\phi_{\bl} \in \mathcal H$, and let $P_{\bl} $ be a smooth solution to \eqref{eqn:def:P:bilayer} with boundary conditions \eqref{eqn:def:P:bilayer:bc} and the transmission conditions \eqref{eqn:def:P:bilayer:tc}. Multiplying \eqref{eqn:def:P:bilayer} with $\phi_{\bl}$ and integrating by parts, we get the variational formulation of \eqref{eqn:def:P:bilayer}:
\begin{equation}
\label{eqn:var}
\int_{S_+} \frac{1}{\rhob_+}\nabla_{x,r} P_+ \cdot \nabla_{x,r} \phi_+ + \int_{S_-} \frac{1}{\rhob_-}\nabla_{x,r} P_- \cdot \nabla_{x,r} \phi_- = - \int_{S_+} g\partial_x \zeta \partial_x \phi_+ - \int_{S_-} g\partial_x \zeta \partial_x \phi_-.
\end{equation} 
The existence and uniqueness of solutions to the variational problem \eqref{eqn:var} is classical, we refer to \cite{Borsuk2010} for transmission problems, as well as \cite[Chapter 2]{Lannes2013} for boundary problems close to our setting. Then, the estimate \eqref{eqn:P:bilayer:estimates} is a standard elliptic estimate, see for instance \cite[Chapter 2]{Lannes2013}.
\end{proof}
\begin{proof}[Proof of Proposition \ref{prop:nrj:bilayer}]
We only sketch the energy estimate that classically leads to the well-posedness of \eqref{eqn:euler:bilayer}. Integrating the equations \eqref{eqn:euler:bilayer} against $(\rho_{\bl} V_{\bl}, \rho_{\bl} w_{\bl})$ and \eqref{eqn:euler_bilayer:zeta} against $(\rho_- - \rho_+) \zeta$, we get
\begin{equation}
\label{eqn:nrj:bilayer}
\begin{aligned}
&\frac12 \frac{d}{dt} \left( \Vert \sqrt{\rho_{\bl}} V_{\bl}\Vert_{L^2(S)}^2 + \Vert \sqrt{\rho_{\bl}} w_{\bl}\Vert_{L^2(S)}^2 + \vert \sqrt{(\rho_--\rho_+)}\zeta\vert_{L^2(\mathbb T_L)}^2 \right) \\
&= - \int_{S} \nabla_{x,r} P_{\bl} \cdot (V_{\bl},w_{\bl})^T - \int_S \rho_{\bl}\partial_x \zeta \cdot V_{\bl} + \int_{\mathbb T_L} (\rho_- - \rho_+)w_{\bl|r=r_*}\zeta.
\end{aligned}
\end{equation}
The first integral on the right-hand side of \eqref{eqn:nrj:bilayer} is zero, by integration by parts and after using the continuity of $P_{\bl} w_{\bl}$ across the interface. We also integrate by parts and use the incompressibility condition in \eqref{eqn:euler:bilayer} in the second integral on the right-hand side of \eqref{eqn:nrj:bilayer} to get
$$\int_S \rho_{\bl}\partial_x \zeta \cdot V_{\bl} = - \int_S \rho_{\bl} \zeta \partial_x V_{\bl} = \int_S \rho_{\bl} \zeta \partial_r w_{\bl} = \int_{\mathbb{T}_L} (\rho_--\rho_+)w_{\bl|r=r_*} \zeta.$$
Thus, the second and third integrals in \eqref{eqn:nrj:bilayer} cancel each other out, and we get
\begin{equation}
\label{eqn:nrj:bilayer:2}
\frac{d}{dt} \left( \Vert \sqrt{\rho_{\bl}} V_{\bl}\Vert_{L^2(S)}^2 + \Vert \sqrt{\rho_{\bl}} w_{\bl}\Vert_{L^2(S)}^2  + \vert \sqrt{(\rho_--\rho_+)}\zeta\vert_{L^2(\mathbb T_L)}^2 \right) =0.
\end{equation}
Applying $\partial_x^{\alpha}$ to \eqref{eqn:euler:bilayer} with $|\alpha| \leq s$, we get that $(\partial_x^\alpha V_{\bl}, \partial_x^\alpha w_{\bl}, \partial_x^\alpha \zeta)$ satisfies \eqref{eqn:euler:bilayer}, so that \eqref{eqn:nrj:bilayer:2} holds for the $H^{s,0}$ norm of the unknowns. From such an energy estimate, the uniqueness of solutions to \eqref{eqn:euler:bilayer} follows. The existence also follows and we omit it.
\end{proof}
\subsection{From \eqref{eqn:euler:isopycnal:no_shear} to \eqref{eqn:euler:bilayer}}
\label{subsection:strat_to_bilayer}
Before Proving Proposition \ref{prop:cvce}, let us comment on the density profile given in \eqref{eqn:def:sharp_strat}.
\begin{remark}
\label{rk:rhobdelta_plus_gen}
In the proof of Proposition \ref{prop:cvce}, we only use some properties of the specific profile $\rhob_{\delta}$ given in \eqref{eqn:def:sharp_strat}. In fact, Proposition \ref{prop:cvce} holds for profiles $\rhob_{\delta}$ of the following form.\\
Let $r_* \in (-H,0)$, $\delta > 0$, $\rho_- > \rho_+ > 0$. Then
$$\rhob_{\delta}(r) \coloneq  \rhob\left(\frac{r-r_*}{\delta}\right),$$
where $\rhob$ is a smooth, strictly decreasing function from $\R$ to $\R$ satisfying the following conditions. First, 
\begin{equation}
\begin{aligned}
\rhob(y) &\underset{y \to - \infty}{\longrightarrow} \rho_{-},\\
\rhob(y) &\underset{y \to + \infty}{\longrightarrow} \rho_{+}.
\end{aligned}
\end{equation}
Moreover:
\begin{equation}
\label{eqn:rhobdelta:plus:gen:L2}
\vert \rhob - \rho_{\pm} \vert_{L^2(\R)} < + \infty,
\end{equation}
where $y \mapsto \rho_{\pm}$ is the piecewise constant function with values $\rho_+$ for $y > 0$ and $\rho_-$ for $y < 0$. Second, the function
\begin{equation}
\label{eqn:rhobdelta:plus:gen:hyp}
y \mapsto y^{\frac32} \rhob'(y)
\end{equation}
belongs to $L^{\infty}(\R)$. Note that the profile $\rhob_{\delta}$ defined in \eqref{eqn:def:sharp_strat} does satisfy these conditions.
\end{remark}
\begin{proof}[Proof of Proposition \ref{prop:cvce}]
We write $(\tilde{V},\tilde{w},\tilde{P},\tilde{\eta}_*) \coloneq  (V - V_{\bl},w - w_{\bl},\check{P}-P_{\bl},\eta_* -  \zeta )$, where $\check{P} \coloneq  P - g\rhob_{\delta}\eta_* $ (with $P$ as in \eqref{eqn:euler:isopycnal:no_shear} and $\eta_* \coloneq  \eta_{r=r_*}$). We then take the difference between \eqref{eqn:euler:isopycnal:no_shear} and \eqref{eqn:euler:bilayer} to get
\begin{equation}
\label{eqn:euler:diff}
\sys{\partial_t \tilde{V} + \frac{1}{\rhob_{\delta}} \partial_x \tilde{P} + g\frac{\rho_{\bl}}{\rhob_{\delta}} \partial_x \tilde{\eta}_*  &= -\left(\frac{1}{\rhob_{\delta}} - \frac{1}{\rho_{\bl}}\right)\partial_x P_{\bl} + g\left(\frac{\rho_{\bl}}{\rhob_{\delta}} -1 \right) \partial_x \tilde{\eta}_*,\\
	\partial_t \tilde{w} + \frac{1}{\rhob_{\delta}} \partial_r \tilde{P} &= -\left(\frac{1}{\rhob_{\delta}} - \frac{1}{\rho_{\bl}}\right)\partial_r P_{\bl} + g\frac{\rhob_{\delta}'}{\rhob_{\delta}} (\eta-\eta_*),\\
	\partial_t \tilde{\eta}_* - \tilde{w}_{|r=r_*} &=0.\\
	}
\end{equation}
We integrate against $\rhob_{\delta} \tilde{V}, \rhob_{\delta} \tilde{w}, g(\rho_- - \rho_+) \tilde{\eta}_*$. We get
\begin{equation}
\label{eqn:nrj:diff}
\begin{aligned}
&\frac12 \frac{d}{dt}\left( \Vert \sqrt{\rhob_{\delta}} \tilde{V}\Vert_{L^2}^2 + \Vert \sqrt{ \rhob_{\delta}} \tilde{w}\Vert_{L^2}^2+ \vert \sqrt{g(\rho_--\rho_+)} \tilde{\eta}_* \vert_{L^2}^2 \right) + \int_S g\rho_{\bl}\partial_x\tilde{\eta}_* \tilde{V} - \int_{\mathbb{T}_L} g(\rho_--\rho_+) \tilde{w}_{r=r_*}  \tilde{\eta}_* \\
&\leq  C \left\vert \rho_{\bl} - \rhob_{\delta}\right\vert_{L^2([-H,0])}\left(\left\Vert \frac{1}{\rho_{\bl}}\nabla_{x,r} P_{\bl} \right\Vert_{L^{\infty}}  +  \vert \partial_x \tilde{\eta}_* \vert_{L^{\infty}} \right) \left( \Vert \tilde{V} \Vert_{L^2} + \Vert \tilde{w} \Vert_{L^2} \right) + \Vert \rhob_{\delta}' (\eta - \eta_*) \Vert_{L^1} \Vert \tilde{w}\Vert_{L^{\infty}},
\end{aligned}
\end{equation}
where we used that $(\tilde{V},\tilde{w})$ is divergence-free so that the contribution of the pressure term $\nabla_{x,r} \tilde{P}$ in \eqref{eqn:nrj:diff} vanishes. Similarly to \eqref{eqn:nrj:bilayer}, we integrate by parts in the first integral on the left-hand side of \eqref{eqn:nrj:diff}, so that it cancels out with the second integral on the left-hand side of \eqref{eqn:nrj:diff}. 
We now bound from above the terms on the right-hand side of \eqref{eqn:nrj:diff}.
For the first term on the right-hand side of \eqref{eqn:nrj:diff}, we use \eqref{eqn:P:bilayer:estimates}, and the Sobolev embedding $H^{1,1}(S) \hookrightarrow L^{\infty}(S)$ (see for instance \cite[Proposition 2.12]{Lannes2013}), to write
\begin{equation}
\left\vert \rho_{\bl} - \rhob_{\delta}\right\vert_{L^2([-H,0])}\left(\left\Vert \frac{1}{\rho_{\bl}}\nabla_{x,r} P_{\bl} \right\Vert_{L^{\infty}}  +  \vert \partial_x \tilde{\eta}_* \vert_{L^{\infty}} \right) \leq C\delta \left( \vert \eta_* \vert_{H^{2}} + \vert \zeta \vert_{H^2} \right),
\end{equation}
where we also used that 
$$\left\vert \rho_{\bl} - \rhob_{\delta}\right\vert_{L^2([-H,0])} \lesssim \delta,$$%
from a change of variable and \eqref{eqn:rhobdelta:plus:gen:L2}. For the second term in \eqref{eqn:nrj:diff}, we first write
\begin{equation}
\label{eqn:nrj:diff:temp:11}
\Vert \rhob_{\delta}' (\eta-\eta_*)\Vert_{L^1} \leq \left\Vert \rhob_{\delta}' \int_{r_*}^r \partial_r \eta \right\Vert_{L^1} \leq C \Vert (r-r_*)^{\frac32 - \epsilon} \rhob_{\delta}' \Vert_{L^{\infty}} \Vert \partial_r \eta \Vert_{L^2} \vert (r-r_*)^{ -1+\epsilon}\vert_{L^1([-H,0])} \leq C \frac{\delta^{\frac12 - \epsilon} }{\epsilon}(1+t), 
\end{equation}
for any $ \epsilon \in (0,\frac12)$, where we used the property \eqref{eqn:rhobdelta:plus:gen:hyp} satisfied by $\rhob_{\delta}$, as well as Proposition \ref{prop:nrj}. Optimizing the value of $\epsilon$, we choose $\epsilon = \frac{1}{|\log \delta|}$, and note that $\delta^{\frac{1}{|\log \delta|}} = e$ to get
\begin{equation}
\label{eqn:nrj:diff:temp:12}
\Vert \rhob_{\delta}' (\eta-\eta_*)\Vert_{L^1} \lesssim \delta^{\frac12} |\log \delta |(1+t). 
\end{equation}
Thus, \eqref{eqn:nrj:diff} yields the desired \eqref{eqn:erreur:bl} with $s=0$. Differentiating \eqref{eqn:euler:diff} with respect to $x$ and applying this $L^2$ estimate allows us to get \eqref{eqn:erreur:bl} with $s > 0$.
\end{proof}
\section{On the numerical approximation in Section \ref{section:sharp_strat}}
\label{section:apdx:approx}
Let us now comment on the precision of the numerical approximation of the phase velocities of \eqref{eqn:euler:isopycnal:Vb}, that gives rise to Figures \ref{fig:dispersion:strat} - \ref{fig:kstar}, and in particular Figure \ref{fig:dispersion:long}. In this appendix we check numerically that the functions $w_{\mathtt N}$ and phase velocities $c_{\mathtt N}$ given by the diagonalization of the  matrices $B_k$ as in \eqref{eqn:EDO:modes}, approximately satisfy the Taylor-Goldstein equation \eqref{eqn:taylor_goldstein}.\\

More precisely, with the notations of Section \ref{section:num_scheme}, let $\lambda$ be an eigenvalue of $B_k$ as in \eqref{eqn:EDO:modes}, associated to the eigenvector $(\hat{V}_1, \dots, \hat{V}_{\mathtt N}, \hat{\eta}_1, \dots, \hat{\eta}_{\mathtt N})$. Then, we define $\hat{w}$ thanks to the divergence-free equation, namely
\begin{equation}
\label{eqn:def:w:div_free}
\hat{w} \coloneq  - i k \sum_{n=1}^{\mathtt{N}} c_n \hat{V}_n f_n,
\end{equation} 
with $(f_n){n\in \N^*}$ defined in \eqref{eqn:SL}. Then, we plug \eqref{eqn:def:w:div_free} into \eqref{eqn:taylor_goldstein}. To this end, we define $c \coloneq  -i \lambda /k$, and note that differentiating \eqref{eqn:def:w:div_free} with respect to $r$ can be done using \eqref{eqn:SL}. We define:
\begin{equation}
\mathrm{TG} \coloneq  (\Vb - c)^2 \left( ik \sum_{n=1}^{\mathtt{N}} V_n \frac{N^2}{c_n} f_n - k^2 \rhob \hat{w}  \right) -(\Vb - c) (\rhob \Vb')'\hat{w}+ \rhob N^2 \hat{w};
\end{equation}
it depends only on $r \in [-H,0]$. Observe that $\hat{w} \mapsto \mathrm{TG}$ can be seen as an operator, depending on the parameter $c$, and bounded from $W^{2,\infty}$ to $L^{\infty}$, though with a norm that depends on $\delta$. It is thus natural to renormalize the error which measures how well $(c,w)$ is an approximate solution (i.e. an element of the kernel of TG) of \eqref{eqn:taylor_goldstein} by the $W^{2,\infty}$ norm of $\hat{w}$, which can be computed from \eqref{eqn:def:w:div_free} and \eqref{eqn:SL}, \eqref{eqn:def:gn}:
\begin{equation}
\label{eqn:W2infty:w}
\vert \hat{w} \vert_{W^{2,\infty}} = \max_{r\in [-H,0]} \left( \left| k \sum_{n=1}^{\mathtt N} c_n \hat{V}_n f_n(r) \right|  + \left|  k \sum_{n=1}^{\mathtt N} \hat{V}_n g_n(r) \right| + \left|  k \rhob_{\delta}'(r) \sum_{n=1}^{\mathtt N} \hat{V}_n (g_n(r) + \frac{1}{c_n} f_n(r)) \right| \right)
\end{equation}
We now define the error which measures how well $(c,w)$ is an approximate solution of \eqref{eqn:taylor_goldstein}:
\begin{equation}
\label{eqn:def:error:TG}
\mathtt{Err}_{c} \coloneq \frac{\max_{r\in [-H,0]}|\mathrm{TG}(r) |}{\vert \hat{w} \vert_{W^{2,\infty}}}.
\end{equation}
%
We then plot the phase velocities from Figure \ref{fig:dispersion:long} with $\delta = 10^{-2}$, and represent $\mathtt{Err}_c$ for each point $c$.\\
We perform the same simulation as for Figure \ref{fig:dispersion:long}, however with $\mathtt K = 100$, and show the result in Figure \ref{fig:dispersion:Errc}. We observe that the error is greater for some elements $c$ corresponding to continuous spectrum, i.e. $\mathrm{Re}(c)$ is in the range of $\Vb$.
\begin{remark}
\label{rk:cvce:spectre}
Proving that our numerical approximation gives rise to sequences $(w_{\mathtt N})$ and $(c_{\mathtt N})$ (where $\mathtt N$ denotes the number of modes used for the computations) which converge to some solution $w,c$  of the Taylor-Goldstein equation \eqref{eqn:taylor_goldstein} when $\mathtt N \to +\infty$ is out of the scope of the present study. In particular, because \eqref{eqn:taylor_goldstein} is a Quadratic Eigenvalue Problem, this is not immediate. Moreover, let
\begin{equation}
\label{eqn:def:T}
T \begin{pmatrix} V \\ w \\ \eta \end{pmatrix} = 
\begin{pmatrix} 
\Vb \cdot \nabla_x V + w \Vb' + \frac{1}{\rhob} \nabla_x P \\
\Vb \cdot \nabla_x w + \frac{1}{\rhob} \partial_r P - \frac{g \rhob'}{\rhob} \eta \\
\Vb \cdot \nabla_x V - w
\end{pmatrix}
\end{equation}
be the unbounded operator defined on the space of $L^2$ vector fields with $(V,w)$ divergence-free, such that the stratified Euler equations \eqref{eqn:euler:isopycnal:Vb} read
$$ \partial_t \begin{pmatrix} V \\ w \\ \eta \end{pmatrix} + T \begin{pmatrix} V \\ w \\ \eta \end{pmatrix} =0.$$%
The operator $T$ is neither self-adjoint nor skew self-adjoint. Therefore, if $f,\lambda$ are such that $Tf - \lambda f$ is small, one cannot conclude that $\lambda$ is close to an element of the spectrum of $T$. Thus, the evaluation of the error $\mathtt{Err}_{c}$ defined in \eqref{eqn:def:error:TG} provides an estimation of the consistency of our scheme, but not on its convergence, and it is not enough to conclude rigorously that the phase velocities that we show in Figure \ref{fig:dispersion:Errc} associated with a small error are indeed close to the phase velocities of the stratified Euler equations \eqref{eqn:euler:isopycnal:Vb}. Such a proof falls outside of the scope of this study, and we simply consider the error $\mathtt{Err}_c$ as an indication of the reliability of the associated $c$ as being close to a point of the phase velocities of \eqref{eqn:euler:isopycnal:Vb}.
\end{remark}
We are particularly interested in the consistency of the eigenvalues corresponding to the Kelvin-Helmholtz instability. To this end, we define 
\begin{equation}
\label{eqn:def:error:TG:KH}
\mathtt{Err}_{KH} \coloneq  \max_{c, |\Im(c)| \geq 10^{-2}, \mathrm{Re}(c) \geq 10} \mathtt{Err}_{c}.
\end{equation}
The range of points $c$ upon which the supremum is taken in \eqref{eqn:def:error:TG:KH} is chosen to coincide with the Kelvin-Helmholtz instabilities, see Figure \ref{fig:dispersion:Errc}.  We show $\log(\mathtt{Err}_{KH})$ computed for different number of modes $\mathtt N$ in Figure \ref{fig:dispersion:ErrKH}. This indicates that indeed, the couples $(c,w)$ are approximate solutions of the Taylor-Goldstein equation \eqref{eqn:taylor_goldstein}.
\begin{figure}
\centering
\begin{minipage}[t]{0.49\textwidth}
\vspace{0pt}
\includegraphics[width = \textwidth]{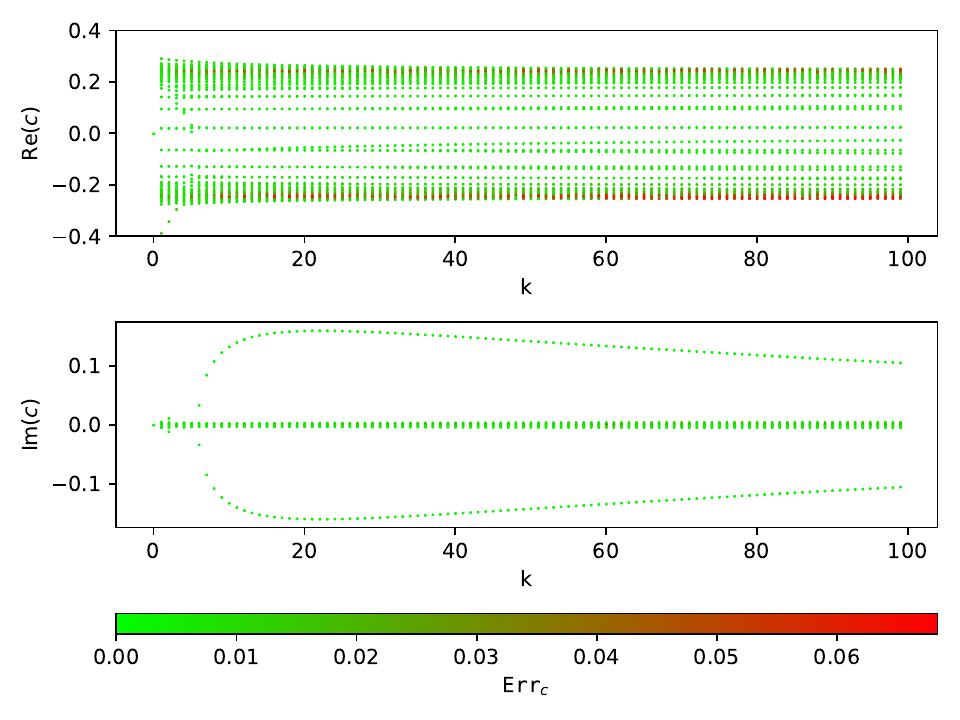}
\captionof{figure}{Dispersion relation for $\delta =  10^{-2}$ for the stratified Euler equations obtained numerically with $\mathtt N = 40$. We plot the numerical approximation of phase velocities, i.e. elements of $\mathcal C^{(\mathtt N)}_{\delta}(k)$ versus $k$. The color of a point $c$ shows $\mathtt{Err}_c$. Because of symmetry, we only show the positive frequencies $k$.}
\label{fig:dispersion:Errc}
\end{minipage}
\begin{minipage}[t]{0.49\textwidth}
\vspace{-15pt}
\includegraphics[width = \textwidth]{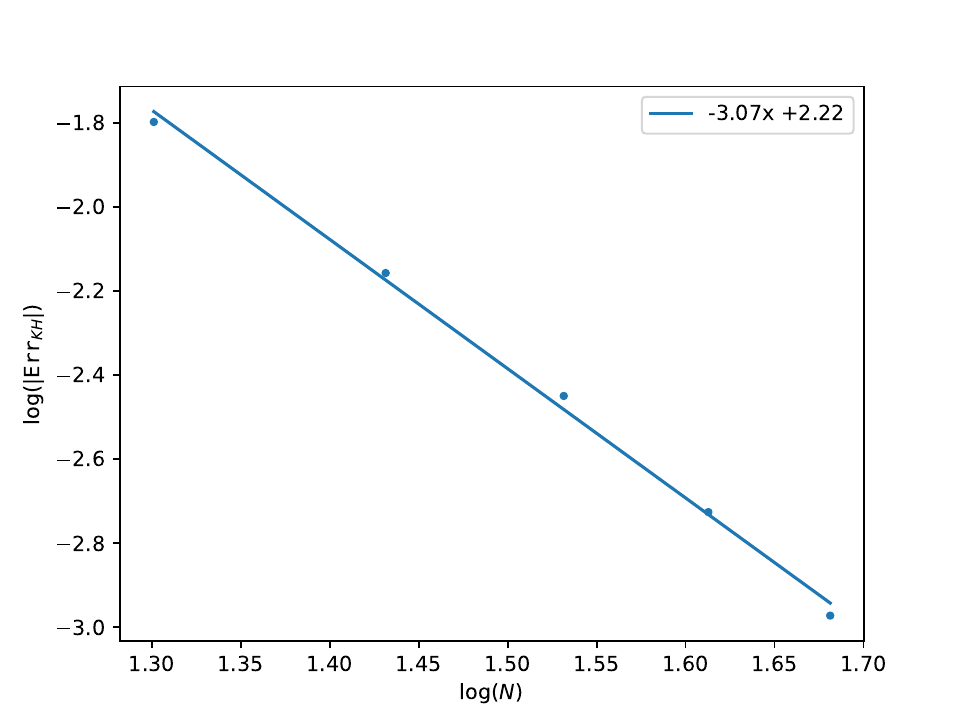}
\captionof{figure}{Plot of $\mathtt{Err}_{KH}$ versus the number of modes $\mathtt N$, with $\delta = 10^{-2}$. The number of modes ranges from $\mathtt N=20$ to $\mathtt N=50$. }
\label{fig:dispersion:ErrKH}
\end{minipage}
\end{figure}

\section*{Acknowledgments}
The author thanks his PhD supervisors Vincent Duchêne and David Lannes for their valuable advice.

\section*{Data availability}
The scripts used to generate the figures in the present study are available in \cite{Fradin2025num} (Zenodo upload 17941779).

\section*{Funding}
This work was supported by the BOURGEONS project, grant ANR-23-CE40-0014-01 of the French National Research Agency (ANR).

\section*{Conflict of interest}
The author declares that there are no conflicts of interest to disclose. 

\section*{Other declarations}
Other declarations such as ethics approval statement, patient consent statement, permission to reproduce material from other sources, clinical trial registration do not apply to the present work.\\

This work is licensed under \href{https://creativecommons.org/licenses/by/4.0/}{ CC BY 4.0}.

\end{document}